\newcommand{\verbatimfont}[1]{\renewcommand{\verbatim@font}{\ttfamily#1}}
\renewcommand*\l@author[2]{}
\renewcommand*\l@title[2]{}
\newcommand{\hol}{HOL}
\newcommand{\dhol}{DHOL}
\newcommand{\dphol}{DHOL}
\newcommand{\lambdaFun}[2]{\lambda #1\!:\!#2.~}
\newcommand{\piType}[2]{\Pi #1\!:\!#2.~}
\newcommand{\subtype}[2]{#1|_{#2}\hspace{0.4mm}}
\newcommand{\univQuant}[2]{\ensuremath{\forall \,#1\!:\!#2.~}}
\newcommand{\judg}{\Gamma\dedT}
\newcommand{\existQuant}[2]{\ensuremath{\exists #1\!:\!#2.}}
\newcommand{\Thy}[1]{\ensuremath{#1\ \mathsf{Thy}}}
\newcommand{\Ctx}[1]{\ensuremath{#1\ \mathsf{Ctx}}}
\newcommand{\namedax}[2]{#1\!:#2}
\newcommand{\namedass}[2]{#1\!:#2}
\newcommand{\typingAxName}[1]{\ensuremath{#1^*}}
\newcommand{\typingAx}[2]{\namedax{\typingAxName{#2}}{\PredPhi{#1}{#2}}}
\newcommand{\typingAssName}[1]{\ensuremath{#1^*}}
\newcommand{\typingAss}[2]{\namedass{\typingAssName{#2}}{\PredPhi{#1}{#2}}}
\newcommand{\concatCtx}[2]{#1,\;#2}
\newcommand{\concatThy}[2]{#1,\;#2}
\newcommand{\emptyThy}{\circ}
\newcommand{\emptyCtx}{.}
\newcommand{\ctxIn}[2]{\ensuremath{#1\text{ in }#2}}
\newcommand{\thyIn}[2]{\ensuremath{#1\text{ in }#2}}
\newcommand{\Type}[1]{\ensuremath{#1\ \type}}
\newcommand{\defaultTerm}[1]{\ensuremath{w_{#1}}}
\newcommand\reallywidetilde[1]{%
	\savestack{\tmpbox}{\stretchto{%
			\scaleto{%
				\scalerel*[\widthof{\ensuremath{#1}}]{\kern-.6pt\sim\kern-.6pt}%
				{\rule[-\textheight/2]{1ex}{\textheight}}
			}{\textheight}%
		}{0.5ex}}%
	\stackon[1pt]{#1}{\tmpbox}%
}
\newcommand\widevec[1]{%
\savestack{\tmpbox}{\stretchto{%
		\scaleto{%
			\scalerel*[\widthof{\ensuremath{#1}}]{\kern-.6pt\kern-.6pt}%
			{\rule[-\textheight/2]{1ex}{\textheight}}
		}{\textheight}%
	}{0.5ex}}%
\stackon[1pt]{#1}{\tmpbox}%
}
\newcommand{\PredPhi}[2]{\PredPhiName{#1}\ #2\ #2}
\newcommand{\PredPhiName}[1]{{#1}^*}
\newcommand{\PhiAppl}[1]{\overline{#1}}
\newcommand{\quasiImage}[1]{\reallywidetilde{#1}}
\newcommand{\termEqT}[3]{\PredPhiName{#1}\ #2\ #3}
\newcommand{\pbool}{\PredPhiName{\bool}}
\newcommand{\boolPred}[1]{\pbool\ #1\ #1}
\newcommand{\betaEtaRed}[1]{\ensuremath{#1^{\beta\eta}}}
\newcommand{\norm}[1]{\ensuremath{\mathsf{norm}\left[#1\right]}}
\newcommand{\sRed}[1]{\ensuremath{\mathsf{sRed}\left(#1\right)}}
\newcommand{\reserved}[1]{\ensuremath{\mathsf{#1}}}
\definecolor{backcolor}{gray}{.92}
\newcommand{\lowlight}[1]{\textcolor{gray}{#1}}
\newcommand{\symbolFont}[1]{\ensuremath{\mathtt{#1}}}
\newcommand{\type}{\reserved{tp}}
\newcommand{\bool}{\reserved{bool}}
\newcommand{\T}{\reserved{true}}
\newcommand{\F}{\reserved{false}}
\newcommand{\dedH}{\ensuremath{\vdash^H}}
\newcommand{\dedT}{\ensuremath{\vdash}_T}
\newcommand{\dedPT}{\ensuremath{\vdash}_{\PsiAppl T}}
\newcommand{\ded}{\ensuremath{\vdash\,}}
\newcommand{\termEquals}[1]{\ensuremath{\,=_{#1}\,}} 
\newcommand{\termEqB}{\termEquals{\bool}}
\newcommand{\typeEquals}{\ensuremath{\,\equiv\,}}
\newcommand{\subtyping}{\ensuremath{\,<:\,}}
\newcommand{\teq}{\typeEquals} 
\newcommand{\substOp}[2]{\ensuremath{[\sfrac{#1}{#2}]}}
\newcommand{\subst}[3]{\ensuremath{#1\substOp{#2}{#3}}}
\newcommand{\NDLine}[3]{#1\ded& #2 &&\text{#3}}
\newcommand{\NDLineH}[3]{#1\dedH& #2 &&\text{#3}}
\newcommand{\NDLinePT}[3]{#1\dedPT& #2 &&\text{#3}}
\newcommand{\NDLinePTG}[2]{\PhiAppl{\Gamma}\dedPT& #1 &&\text{#2}}
\newcommand{\NDLinePTD}[2]{\NDLinePT{\Delta}{#1}{#2}}
\newcommand{\NDLinePTT}[2]{\NDLinePT{\Theta}{#1}{#2}}
\newcommand{\NDLineT}[3]{#1\dedT& #2 &&\text{#3}}
\newcommand{\NDLineTG}[2]{\Gamma\dedT& #1 &&\text{#2}}
\newcommand{\IH}{induction hypothesis}
\newcommand{\byAss}{by assumption}
	\renewenvironment{grammar}{\[\begin{array}{l@{\ \bbc\quad}l@{\quad}l}}{\end{array}\]}}{
	\newenvironment{grammar}{\[\begin{array}{lll}}{\end{array}\]}
	\newcommand{\bnf}[1]{#1}
	\newcommand{\bbc}{\bnf{::=}}
\newcommand{\rulelabelAppendix}[2]{%
	\protected@write \@auxout {}{\string \newlabel {#1}{{#2}{\thepage}{#2}{#1}{}} }%
	\hypertarget{#1}{}%
}
\newcommand{\rulelabel}[2]{}
\newcommand{\namedRule}[3]{\rul[\text{}]{#3}{#2}}
\newcommand{\rnamedRule}[4]{\rul[\text{}]{#3}{#2}}
\newcommand{\snamedRule}[3]{\rul[\text{}]{#3}{#2}}
\newcommand{\ruleRef}[1]{(\ref{#1})}
\newcommand{\PrefLabelsSuff}[4]{&\label{#1}\addtocounter{#3}{1}\tag{#4\arabic{#3}#2}}
\newcommand{\sredlabel}[1]{\PrefLabelsSuff{#1}{}{SRlabel}{SR}}
\newcommand{\plabel}[1]{\ifbool{inAppendix}{\PrefLabelsSuff{#1}{}{transDefnLabelsP}{PT}}{}}
\newcommand{\Impl}{\tb\text{implies}\tb}
\newcommand{\Mand}{\;\text{and}\;}
\newcommand{\negSp}{\!\!\!\!}
\newcommand{\QNegSp}{\negSp\negSp}
\newcommand{\Quad}{\quad\ \ \!\!\,}
\newcommand{\QQNegSp}{\QNegSp\QNegSp}
\newcommand{\QQuad}{\Quad\Quad}
\newcommand{\QQQNegSp}{\QQNegSp\QQNegSp}
\newcommand{\QQQuad}{\QQuad\QQuad}
\newcommand{\QQQQuad}{\QQQuad\QQQuad}
\newcommand{\QQQQNegSp}{\QQQNegSp\QQQNegSp}
\newcommand{\QQQQQNegSp}{\QQQQNegSp\QQQQNegSp}
\newcommand{\qqquad}{\qquad\qquad}
\newcommand{\qqqquad}{\qqquad\qqquad}
\newcommand{\qqqqquad}{\qqqquad\qqqquad}
\newcommand{\qqqqqquad}{\qqqqquad\qqqqquad}
\newcommand{\qqqqqqquad}{\qqqqqquad\qqqqqquad}
\newcommand{\breakintertext}{\qqqqqqquad\-}
\renewcommand{\id}{\symbolFont{id}}
\newcommand{\ident}[1]{\ensuremath{\id_{#1}}}
\renewcommand{\obj}{\symbolFont{obj}}
\renewcommand{\mor}{\symbolFont{mor}}
\newcommand{\comp}{\symbolFont{comp}}
\newtheorem{thm}{Theorem}
\newtheorem{rem}[thm]{Remark}
\title{Theorem Proving in Dependently-Typed \\ Higher-Order Logic\,---\,Extended Preprint}
\author{Colin Rothgang\inst{1}\orcidID{0000-0001-9751-8989} \and
Florian Rabe\inst{2}\orcidID{0000-0003-3040-3655} \and Christoph Benzm\"uller\inst{3,1}\orcidID{0000-0002-3392-3093}}
\authorrunning{Benzm\"uller, Rabe, Rothgang}
\institute{Mathematics and Computer Science, FU Berlin, Germany \and Computer Science, University Erlangen-N\"urnberg, Germany \and AI Systems Engineering, University Bamberg}
\begin{document}
\maketitle

\begin{abstract}
Higher-order logic \hol{} offers a very simple syntax and semantics for representing and reasoning about typed data structures.
But its type system lacks advanced features where types may depend on terms.
Dependent type theory offers such a rich type system, but has rather substantial conceptual differences to HOL, as well as comparatively poor proof automation support.

We introduce a dependently-typed extension DHOL of HOL that retains the style and conceptual framework of HOL.
Moreover, we build a translation from DHOL to HOL and implement it as a preprocessor to a HOL theorem prover, thereby obtaining a theorem prover for DHOL.
\end{abstract}


\section{Introduction and Related Work}\label{sec:intro}
Theorem proving in higher-order logic (HOL) \cite{churchtypes,holsemantics} has been a long-running research strand producing multiple mature interactive provers \sloppy\cite{isabelle,hollight,hol} and automated provers \cite{tps,leo3,satallax}.
Similarly, many, mostly interactive, theorem provers are available for various versions of dependent type theory (DTT) \cite{coq,agda,lean,twelf}.
However, it is (maybe surprisingly) difficult to develop theorem provers for dependently-typed higher-order logic (DHOL).

In this paper, we use HOL to refer to a version of Church's \emph{simply}-typed $\lambda$-calculus with a base type $\bool$ for Booleans, simple function types $\to$, and equality $=_A:A\to A\to \bool$.
This already suffices to define the usual logical quantifiers and connectives.\footnote{We do not assume a choice operator or the axiom of infinity.}
Intuitively, it is straightforward to develop DHOL accordingly on top of the \emph{dependently}-typed $\lambda$-calculus, which uses a dependent function type $\piType{x}{A}B$ instead of $\to$.
However, several subtleties arise that seem deceptively minor at first but end up presenting fundamental theoretical issues.
They come up already in the elementary expression $x=_A y \impl f(x)=_{B(x)} f(y)$ for some dependent function $f:\piType{x}{A}B(x)$.


\textbf{Firstly}, the equality $f(x)\termEquals{B(x)} f(y)$ is not even well-typed because the terms $f(x):B(x)$ and $f(y):B(y)$ do not have the same type.
Intuitively, it is obvious that the type system can (and maybe should) be adjusted so that the equality $x=_A y$ between terms carries over to an equality $B(x)\teq B(y)$ between types.\footnote{Note that while term equality $=_A$ is a $\bool$-valued connective, type equality $\teq$ is not. Instead, in HOL, $\teq$ is a judgment at the same level as the typing judgment $t:A$.}
However, this means that the undecidability of equality leaks into the equality of types and thus into type-checking.

While some interactive provers successfully use undecidable type systems \cite{pvs,nuprl}, most formal systems for DTT commit to keeping type-checking decidable.
The typical approach goes back to Martin-L\"of type theory \cite{martinlof} and the calculus of constructions \cite{calcconstructions} and uses two separate equality relations, a decidable meta-level equality for use in the type-checker and a stronger undecidable one subject to theorem proving.
Moreover, it favors the propositions-as-types representation and deemphasizes or omits a type of classical Booleans.
This approach has been studied extensively \cite{coq,agda,lean} and is not the subject of this paper.

Instead, our motivation is to retain a single equality relation and classical Booleans.
This is arguably more intuitive to users, especially to those outside the DTT community such as typical HOL users or mathematicians, and it is certainly much closer to the logics of the strongest available ATP systems.
This means we have to pay the price of undecidable type-checking.
The current paper was prompted by the observation that this price may be acceptable for two reasons:
\begin{compactenum}
	\item If our ultimate interest is theorem proving, undecidability comes up anyway.
	Indeed, it is plausible that the cost of showing the well-typedness of a conjecture will be negligible compared to the cost of proving it.
	\item As the strength of ATPs for HOL increases, the practical drawbacks of undecidable type-checking decrease, which indicates revisiting the trade-off from time to time.
	Indeed, if we position DHOL close to an existing HOL ATP, it is plausible that the price will, in practice, be affordable.
\end{compactenum}\ednote{CB: It would be good to conduct experiments and find out (for large benchmark libraries) how severe the undecidability problem really is.}

\textbf{Secondly}, even if we add a rule like ``if $\vdash x=_A y$, then $\vdash B(x)\teq B(y)$'' to our type system, the above expression is still not well-typed:
Above, the equality $x=_A y$ on the left of $\impl$ is needed to show the well-typedness of the equality $f(x)=_{B(x)}f(y)$ on the right.
This intertwines theorem proving and type-checking even further.
Concretely, we need a \emph{dependent implication}, where the first argument is assumed to hold while checking the well-typedness of the second one.
Formally, this means to show $\vdash F\impl G:\bool$, we require $\vdash F:\bool$ and $F\vdash G:\bool$.
Similarly, we need a dependent conjunction.
And if we are classical, we may also opt to add a dependent disjunction $F\vee G$, where $\neg F$ is assumed in $G$.
Naturally, dependent conjunction and disjunction are not commutative anymore. This may feel disruptive, but similar behavior of connectives is well-known from short-circuit evaluation in programming languages.

The meta-logical properties of dependent connectives are straightforward.
However, interestingly, they can no longer be defined from just equality.
At least one of them (we will choose dependent implication) must be taken as an additional primitive in DHOL along with $=_A$.

\textbf{Finally}, the above generalizations require a notion of DHOL-contexts that is more complex than for HOL.
HOL-contexts can be stratified into (a) a set of variable declarations $x_i:A_i$, and (b) a set of logical assumptions $F$ possibly using the variables $x_i$.
Moreover, the former are often not explicitly listed at all and instead inferred from the remainder of the sequent.
But in DHOL, the well-formedness of an $A_i$ may now depend on previous logical assumptions.
To linearize this inter-dependency, DHOL contexts must consist of a single list alternating between variable declarations and assumptions.

\paragraph{Contribution.}
Our contribution is twofold.
Firstly, we introduce a new logic DHOL designed along the lines described above.
Moreover, we further extend DHOL with predicate subtypes $\subtype A p$ for a predicate $p:A\to\bool$ on the type $A$.
Besides dependent types, these constitute a second important source of terms occurring in types.
Because they also make typing undecidable, they are often avoided.
The most prominent exception is PVS \cite{pvs}, whose kernel essentially arises by adding predicate subtypes to HOL.
In current HOL ITPs going back to \cite{hol}, their use is usually restricted to the subtype definition principle: here a definition $b:=\subtype A p$ may occur on toplevel and is elaborated into a fresh type $b$ that is axiomatized to mimic the subtype $\subtype A p$.
Because we are committed to undecidable typing anyway, predicate subtypes fit naturally into our approach.

Secondly, we develop and implement a sound and complete translation of DHOL into HOL.
This setup allows the use of DHOL as the expressive user-facing language and HOL as the internal theorem-proving language.
We position our implementation close to an existing HOL ATP, namely the LEO-III system.
From the LEO-III perspective, DHOL serves as an additional input language that is translated into HOL by an external logic embedding tool \cite{leo_embedding,S22} in the LEO-III ecosystem.
Because LEO-III already supports such embeddings and because the TPTP syntax \cite{tptp} foresees the use of dependent types in ATPs and provides syntax for them (albeit without a normative semantics), we were able to implement the translation with no disruptions to existing workflows.

The general idea of our translation of dependent into simple type theory is not new \cite{jacobs_dtt_hol}.
In that work, Martin-L\"of-style dependent type theory is translated into Gordon's HOL ITP \cite{hol}.
This work differs critically from ours because it uses DTT in propositions-as-types style.
Our work builds DHOL with classical Booleans and equality predicate, which makes the task of proving the translation sound and complete very different.
Moreover, their work targets an interactive prover while ours targets automated ones.

\paragraph{Overview.}
In Sect.~\ref{sec:hol} we recap the HOL logic.
In Sect.~\ref{sec:dhol} we extend it to DHOL and define our translation from DHOL to HOL.
In Sect.~\ref{sec:dphol} we add subtyping and predicate subtypes.
In Sect.~\ref{sec:meta} we prove the soundness and completeness of the translation.
In Sect.~\ref{sec:impl} we describe how to use our translation and a HOL ATP to implement a theorem prover for DHOL.

\paragraph{Acknowledgments.} Chad Brown and Alexander Steen provided valuable feedback on earlier versions of this paper.

\section{Preliminaries: Higher-Order Logic}\label{sec:hol}
We introduce the syntax and rules of HOL.
Our definitions are standard except that we tweak a few details in order to later present the extension to DHOL more succinctly.
We use the following grammar for HOL:
\begin{grammar}
	T   	& \emptyThy \alt T,\, a:\type \alt T,\,c:A \alt T,\,\namedax{c}{F}          	      &\text{theories}\\
	\Gamma & \emptyCtx\alt \Gamma,x:A \alt \Gamma,\namedass{x}{F}									            &\text{contexts} \\
	A, B	& a \alt A\to B \alt \bool 									&\text{types}\\
	s,t,f,F,G & c\alt x\alt \lambdaFun{x}{A} t \alt f\ t \alt  s \termEquals{A} t \alt F\Rightarrow G &\text{terms}\\
\end{grammar}
A theory $T$ is a list of base type declarations $a:\type$, typed constant declarations $c:A$, and named axioms $\namedax{c}{F}$ asserting the formula $F$.
A context $\Gamma$ has the same form except that no type variables are allowed.
It is not strictly necessary to use named axioms and assumptions, but it makes our extensions to DHOL later on simpler.
We write $\emptyThy$ and $\emptyCtx$ for the empty theory and context, respectively.
At this point, it is possible to normalize contexts into a set of variable declarations followed by a set of assumptions because the well-formedness of a type $A$ can never depend on a variable or an assumption.
But that property will change when going to DHOL, which is why we allow $\Gamma$ to alternate between variables and assumptions.

Types $A$ are either user-declared types $a$, the built-in base type $\bool$, or function types $A\to B$.
Terms are constants $c$, variables $x$, $\lambda$-abstractions $\lambdaFun{x}{A} t$, function applications $f\ t$, or obtained from the built-in $\bool$-valued connectives $\termEquals{A}$ or $\Rightarrow$.
As usual \cite{andrews_truthproof}, this suffices to define all the usual quantifiers and connectives $\T$, $\F$, $\neg$, $\land$, $\lor$, $\forall$ and $\exists$.
This includes $\Rightarrow$, but we make it a primitive here because we will change it in DHOL.
As usual, $\subst{E}{x}{t}$ denotes the capture-avoiding substitution of the variable $x$ with the term $t$ within expression $E$.

The type and proof system uses the judgments given below.
Note that we need a meta-level judgment for the equality of types because $\typeEquals$ is not a $\bool$-valued connective.
On the contrary, the equality of terms $\vdash s\termEquals{A} t$ is a special case of the validity judgment $\vdash F$.
In HOL, $\typeEquals$ is trivial, and the judgment is redundant.
But we include it here already because it will become non-trivial in DHOL.

\begin{center}
\begin{tabular}{|c|c|c|}
	\hline 
	Name & Judgment & Intuition \\ 
	\hline
	theories & $\ded \Thy{T}$ & $T$ is well-formed theory\\
	contexts & $\dedT \Ctx{\Gamma}$ & $\Gamma$ is well-formed context\\
	types & $\Gamma\dedT A\, \type$ & $A$ is well-formed type \\ 
  	typing & $\Gamma\dedT t :A$ & $t$ is a well-formed term of type well-formed type $A$ \\
	validity & $\Gamma\dedT F$ & well-formed Boolean $F$ is provable \\ 
	equality of types & $\Gamma\dedT A\typeEquals B$ & well-formed types $A$ and $B$ are equal \\ 
	\hline 
\end{tabular}
\end{center}

The rules are given in Figure~\ref{fig:holrulesPaper}.
We assume that all names in a theory or a context are unique without making that explicit in the rules. Following common practice, we further assume that HOL types are non-empty.
See also Appendix~\ref{sec:logics} for a detailed definition description.

\begin{figure}[hpt]
	{\ifbool{inAppendix}{\scriptsize}{\small}
		Theories and contexts:
		\[
		\snamedRule{thyEmpty}{\ded\Thy{\emptyThy}}{}\tb
		\snamedRule{thyType}{\ded\Thy{\concatThy{T}{\Type{a}}}}{\ded\Thy{T}}\tb
		\snamedRule{thyConst}{\ded\Thy{\concatThy{T}{c:A}}}{\dedT \Type{A}}\tb
		\snamedRule{thyAxiom}{\ded\Thy{\concatThy{T}{\namedax{c}{F}}}}{\dedT F:\bool}
		\]
		\[
		\snamedRule{ctxEmpty}{\dedT\Ctx{\emptyCtx}}{\ded\Thy{T}}\tb
		\snamedRule{ctxVar}{\dedT\Ctx{\concatCtx{\Gamma}{x:A}}}{\Gamma\dedT \Type{A}}\tb
		\snamedRule{ctxAssume}{\dedT\Ctx{\concatCtx{\Gamma}{\namedass{x}{F}}}}{\Gamma\dedT F:\bool}
		\]\\
		Lookup in theory and context:
		\[
		\snamedRule{type}{\Gamma \dedT \Type{a}}{\thyIn{a:\type}{T} \tb \dedT\Ctx{\Gamma}}\tb
		\rnamedRule{const}{\Gamma\dedT c:A}{\thyIn{c:A'}{T} \tb\Gamma\dedT A'\typeEquals A}{const}\tb
		\snamedRule{axiom}{\Gamma\dedT F}{\thyIn{\namedax{c}{F}}{T} \tb  \dedT\Ctx{\Gamma}}
		\]
		\[
		\rnamedRule{var}{\Gamma\dedT x:A}{\ctxIn{x:A'}{\Gamma} \tb\Gamma\dedT A'\typeEquals A}{var} \tb
		\snamedRule{assume}{\Gamma\dedT F}{\ctxIn{\namedass{x}{F}}{\Gamma} \tb  \dedT\Ctx{\Gamma}}
		\]\\
		Well-formedness and equality of types:
		\[
		\snamedRule{bool}{\Gamma\dedT \Type{\bool}}{\dedT\Ctx\Gamma}\tb
		\snamedRule{arrow}{\Gamma\dedT \Type{A\to B}}{\Gamma\dedT \Type{A} \tb \Gamma\dedT \Type{B}}
		\tb
		\snamedRule{congBase}{\Gamma\dedT A\typeEquals A}{\Gamma\dedT \Type{A}} \tb
		\rnamedRule{cong$\to$}{\Gamma\dedT A\to B\typeEquals A'\to B'}{\Gamma\dedT A\typeEquals A' \tb  \Gamma\dedT B\typeEquals B'}{congTo}
		\]\\
		Typing: 
		\[
		\snamedRule{lambda}{\Gamma\dedT (\lambdaFun{x}{A} t): A\to B}{\concatCtx{\Gamma}{x:A}\dedT t:B}\tb
		\snamedRule{appl}{\Gamma\dedT f\ t:B }{\Gamma\dedT f:A\to B \tb \Gamma\dedT t:A}\tb
		\rnamedRule{$=$type}{\Gamma\dedT s\termEquals{A}t:\bool}{\Gamma\dedT s:A\tb \Gamma\dedT t:A}{eqType}
		\]\\
		Term equality: congruence, reflexivity, symmetry, $\beta$, $\eta$
		\[
		\namedRule{cong$\lambda$ (xi)}{\Gamma\dedT \lambdaFun{x}{A} t\termEquals{A\to B} \lambdaFun{x}{A'} t'}{\Gamma\dedT A\typeEquals A' \tb\concatCtx{\Gamma}{x:A}\dedT t\termEquals{B} t'}\rulelabel{congLam}{cong$\lambda$}
		\tb
		\snamedRule{congAppl}{\Gamma\dedT f\ t\termEquals{B} f'\ t'}{\Gamma\dedT t\termEquals{A} t'\tb  \Gamma\dedT f\termEquals{A\to B} f'}
		\]
		\[
		\snamedRule{refl}{\Gamma\dedT t\termEquals{A} t}{\Gamma\dedT t:A}\tb
		\snamedRule{sym}{\Gamma\dedT s\termEquals{A} t}{\Gamma\dedT t \termEquals{A}s}\tb
		\snamedRule{beta}{\Gamma\dedT (\lambdaFun{x}{A} s)\ t \termEquals{B} \subst{s}{x}{t} }{\Gamma\dedT (\lambdaFun{x}{A} s)\ t:B}\tb
		\snamedRule{eta}{\Gamma\dedT t\termEquals{A\to B} \lambdaFun{x}{A}t\ x}{\Gamma\dedT t:A\to B\ \tb\ctxIn{x\text{ not}}{\Gamma}}
		\]\\
		Rules for implication:
		\[
		\rnamedRule{$\impl$type}{\Gamma\dedT F\Rightarrow G:\bool}{\Gamma\dedT F:\bool\tb \Gamma\dedT G:\bool}{implType}\tb
		\rnamedRule{$\impl$I}{\Gamma\dedT F\Rightarrow G}{\Gamma\dedT F:\bool \tb \concatCtx{\Gamma}{\namedass{x}{F}}\dedT G}{implI}\tb
		\rnamedRule{$\impl$E}{\Gamma\dedT G}{\Gamma\dedT F\Rightarrow G\tb \Gamma\dedT F}{implE}
		\]\\
		Congruence for validity, Boolean extensionality, and non-emptiness of types:
		\[
		\rnamedRule{cong$\ded$}{\Gamma\dedT F}{\Gamma\dedT F\termEquals{\bool} F'\tb \Gamma\dedT F'}{congDed}\tb
		\snamedRule{boolExt}{\Gamma,x:\bool\dedT p\ x}{\Gamma\dedT p\ \T \tb \Gamma\dedT p\ \F}\tb
		\snamedRule{nonempty}{\Gamma\dedT F}{\Gamma\dedT F:\bool\tb \Gamma,x:A\dedT F}
		\]
	}
	\caption{\hol{} Rules}
	\ifbool{inAppendix}{\vspace*{-1cm}\label{fig:holrulesAppendix}}{\label{fig:holrulesPaper}}
\end{figure}



%

\section{Dependent Function Types}\label{sec:dhol}
\subsection{Language}

We have carefully defined HOL in such a way that only a few surgical changes are needed to define DHOL.
A consolidated definition of DHOL is given in Appendix~\ref{sec:logics}.
The \textbf{grammar} is as follows where unchanged parts are \lowlight{shaded out}:
\begin{grammar}
	T   	& \lowlight{\emptyThy \alt T,\,a:}(\piType{x}{A})^*\lowlight{\type \alt T,\,c:A \alt T,\,c:F}         				&\text{theories}\\
	\Gamma     & \lowlight{\emptyCtx\alt \Gamma,x:A \alt \Gamma,ass:F}	             										&\text{contexts} \\
	A, B	& \lowlight{a}\ t_1\ldots t_n \alt \piType{x}{A} B \alt \lowlight{\bool}
																	&\text{types}\\
	s,t,f,F,G & \lowlight{c}\alt \lowlight{x} \alt \lowlight{\lambdaFun{x}{A} t \alt \lowlight{f\ t}\alt s\termEquals{A}t \alt F\impl G} &\text{terms}\\
\end{grammar}
Concretely, base types $a$ may now take term arguments and simple function types $A\to B$ are replaced with dependent function types  $\piType{x}{A} B$.
As usual we will retain the notation $A\to B$ for the latter if $x$ does not occur free in $B$.
DHOL is a conservative extension of HOL, and we recover HOL as the fragment of DHOL in which all base types $a$ have arity $0$.

\renewcommand{\id}{\symbolFont{id}}
\renewcommand{\obj}{\symbolFont{obj}}
\renewcommand{\mor}{\symbolFont{mor}}
\newcommand{\neutl}{\symbolFont{neutL}}
\newcommand{\neutr}{\symbolFont{neutR}}
\newcommand{\assoc}{\symbolFont{assoc}}
\begin{example}[Category Theory]\label{ex:cat}
As a running example, we formalize the theory of a category in DHOL.
It declares the base type $obj$ for objects and the dependent base type $\mor\ a\ b$ for morphisms.
Further it declares the constants $id$ and $\comp$ for identity and composition, and the axioms for neutrality. We omit the associativity axiom for brevity.
	\begin{align*}
		\obj:& \type\\
		\mor:& \piType{x,y}{\obj} \type\\
		\id:& \piType{a}{\obj} \mor\ a\ a\\
		\comp:& \piType{a,b,c}{\obj} \mor\ a\ b\to \mor\ b\ c\to \mor\ a\ c\\
		\neutl:&\forall x,y:\obj.\forall m:\mor\ x\ y.\;\;  m\circ \ident{x}\termEquals{\mor\ x\ y} m\\
		\neutr:&\forall x,y:\obj.\;\forall m:\mor\ x\ y.\;\;  \ident{y}\circ m\termEquals{\mor\ x\ y} m
	\end{align*}
Here we use a few intuitive notational simplifications such as writing $\piType{x,y}{\obj}$ for binding two variables of the same type.
We also use the notations $\ident{x}$ for $\id\ x$ and $h\circ g$ for $\comp\ \_\ \_\ \_\ g\ h$ where the $\_$ denote inferable arguments of type $\obj$.
\end{example}

The \textbf{judgments} stay the same and we only make minor changes to the \textbf{rules}, which we explain in the sequel.
Firstly we replace all rules for $\to$ with the ones for $\Pi$:

{\small
\[	
\snamedRule{pi}{\lowlight{\Gamma\dedT} \piType{x}{A} B\ \lowlight{\type}}{\lowlight{\Gamma\dedT \Type{A} \quad }\concatCtx{\lowlight{\Gamma}}{x:A}\lowlight{\dedT \Type{B}}}\tb \rnamedRule{cong$\Pi$}{\lowlight{\Gamma\dedT}\piType{x}{A} B\lowlight{\typeEquals} \piType{x}{A'}B'}{\lowlight{\Gamma\dedT A\typeEquals A' \quad  \Gamma},x:A\lowlight{\dedT B\typeEquals B'}}{congPi}
\]
\[
\snamedRule{lambda'}{\lowlight{\Gamma\dedT (\lambdaFun{x}{A} t):} \piType{x}{A} B}{\lowlight{\concatCtx{\Gamma}{x:A}\dedT t:B}}\tb
\snamedRule{appl'}{\lowlight{\Gamma\dedT f\,t:}\subst{B}{x}{t}}{\lowlight{\Gamma\dedT f:}\piType{x}{A} B \tb \lowlight{\Gamma\dedT t:A}}\]
\[
\rnamedRule{cong$\lambda$'}{\lowlight{\Gamma\dedT \lambdaFun{x}{A} t}\termEquals{\piType{x}{A}B} \lowlight{\lambdaFun{x}{A'} t'}}{\lowlight{\Gamma\dedT A\typeEquals A' \ \;\concatCtx{\Gamma}{x:A} \dedT t\termEquals{B} t'}}{congLam'}
\quad	
\snamedRule{congAppl'}{\lowlight{\Gamma\dedT f\ t\termEquals{B} f'\ t'}}{\lowlight{\Gamma\dedT t\termEquals{A} t'}\quad \lowlight{\Gamma\dedT f}\termEquals{\piType{x}{A} B} \lowlight{f'}}
\]
\[	
\snamedRule{etaPi}{\lowlight{\Gamma\dedT t}\termEquals{\piType{x}{A} B} \lowlight{\lambdaFun{x}{A}t\ x}}{\lowlight{\Gamma\dedT t:}\piType{x}{A} B}
\]
}

Then we replace the rules for declaring, using, and equating base types with the ones where base types are applied to arguments:
{
\[
\snamedRule{thyType'}{\lowlight{\ded}\Thy{\concatThy{\lowlight{T}}{\lowlight{a:}\piType{x_1}{A_1}\ldots\piType{x_n}{A_n}\lowlight{\type}}}}
{\dedT \Ctx{x_1:A_1,\,\ldots,x_n:A_n}}
\]
\[
\snamedRule{type'}{\lowlight{\Gamma \dedT a}\ t_1\ \ldots\ t_n\;\lowlight{\type}}
{\begin{array}{c}\lowlight{\dedT \Ctx{\Gamma}}\ \;\lowlight{a:}\piType{x_1}{A_1}\ldots\piType{x_n}{A_n}\lowlight{\type \text{ in }T}\\
	\Gamma \dedT t_1:A_1 \ \;\ldots\ \; \Gamma \dedT t_n:\subst{A_n}{x_1}{t_1}\ldots\substOp{x_{n-1}}{t_{n-1}}\end{array}}
\]
\[
\snamedRule{congBase'}{\lowlight{\Gamma\dedT a}\ s_1\ \ldots\ s_n\lowlight{\typeEquals} \lowlight{a}\ t_1\ \ldots t_n}
{\begin{array}{c}\lowlight{\dedT \Ctx{\Gamma}}\ \;\lowlight{a:}\piType{x_1}{A_1}\ldots\piType{x_n}{A_n}\lowlight{\type\text{ in }T}\\
\Gamma \dedT s_1\termEquals{A_1}t_1 \ \ldots\ \Gamma \dedT s_n\termEquals{\subst{A_n}{x_1}{t_1}\ldots\substOp{x_{i-1}}{t_{i-1}}}t_n\end{array}}
\]
}

The last of these is the critical rule via which term equality leaks into type equality.
Thus, typing of expressions may now depend on equality assumptions and thus typing becomes undecidable.


\begin{example}[Undecidability of Typing]\label{exam:congBaseUndecidable}
Continuing Ex.~\ref{ex:cat}, consider terms $\vdash f:\mor\ u\ v$ and $\vdash g:\mor\ v'\ w$ for terms $\vdash u,v,v',w:\obj$.
Then $\vdash g\circ f:\mor\ u\ w$ holds iff $\vdash f:\mor\ u\ v'$, which holds iff $\vdash v\termEquals{\obj} v'$.
Depending on the axioms present, this may be arbitrarily difficult to prove.
\end{example}

Finally, we modify the rule for the non-emptiness of types: we allow the existence of empty dependent types and only require that for each HOL type in the image of the translation there exists one non-empty DHOL type translated to it (rather than requiring all dependent types translated to it to be non-empty). 
And we replace the typing rule for implication with the dependent one. The proof rules for implications are unchanged.
\[	
\rnamedRule{$\impl$type'}{\lowlight{\Gamma\dedT F\Rightarrow G:\bool}}{\lowlight{\Gamma\dedT F:\bool}\tb \concatCtx{\lowlight{\Gamma}}{\namedass{x}{F}}\lowlight{\dedT G:\bool}}{implType'}
\]

\begin{example}[Dependent Implication]\label{ex:depimpl}
Continuing Ex.~\ref{ex:cat}, consider the formula
\[ x:\obj,\ y:\obj\vdash  x\termEquals{\obj}y \,\impl\, \ident{x}\termEquals{\mor\ x\ x}\ident{y}\;:\;\bool\]
which expresses that equal objects have equal identity morphisms.
It is easy to prove. But it is only well-typed because the typing rule for dependent implication allows using $x\termEquals{\obj}y$ while type-checking $\ident{x}\termEquals{\mor\ x\ x}\ident{y}:\bool$, which requires deriving $\ident{y}:\mor\ x\ x$ and thus $\mor\ y\ y\typeEquals\mor\ x\ x$.
\end{example}

All the usual connectives and quantifiers can be defined in any of the usual ways now.
However, the details matter for the dependent versions of the connectives.
In particular, we choose $F\wedge G:=\neg(F\impl \neg G)$ and $F\vee G:=\neg F\impl G$ in order to obtain the dependent versions of conjunction and disjunction, in which the well-formedness of $G$ may depend on the truth or falsity of $F$, respectively.

\subsection{Translation}

We define a translation function $X\mapsto \PhiAppl{X}$ that maps any DHOL-syntax $X$ to HOL-syntax.
Its intuition is to erase type dependencies by translating all types $a\,t_1\,\ldots,\,t_n$ to $a$ and replacing every $\Pi$ with $\to$.
To recover the information of the erased dependencies, we additionally define a partial equivalence relation (PER) $\PredPhiName{A}$ on $\PhiAppl{A}$ for every DHOL-type $A$.

In general, a PER $r$ on type $U$ is a symmetric and transitive relation on $U$.
This is equivalent to $r$ being an equivalence relation on a subtype of $U$.
The intuitive meaning of our translation is that the DHOL-type $A$ corresponds in HOL to the quotient of the appropriate subtype of $\PhiAppl{A}$ by the equivalence $\PredPhiName{A}$.
In particular, the predicate $\PredPhiName{A}\ t\ t$ captures whether $t$ represents a term of type $A$.
More formally, the correspondence is:

\begin{center}
\begin{tabular}{|l|l|}
\hline
DHOL & \hol{} \\
\hline
type $A$ & type $\PhiAppl{A}$ and PER $\PredPhiName{A}:\PhiAppl{A}\to\PhiAppl{A}\to\bool$ \\
term $t:A$ & term $\PhiAppl{t}:\PhiAppl{A}$ satisfying $\PredPhi{A}{\PhiAppl{t}}$\\
\hline
\end{tabular}
\end{center}


\begin{definition}[Translation]\label{def:trans}
We translate DHOL-syntax by induction on the grammar.
Theories and contexts are translated declaration-wise:
\[\PhiAppl{\emptyThy} \;:=\;\emptyThy \tb
  \PhiAppl{\concatThy{T}{D}}\;:=\;\concatThy{\PhiAppl{T}}{\PhiAppl{D}}\tb
  \PhiAppl{\emptyCtx}\;:=\;\emptyCtx \plabel{PTemptyCtx}\tb
  \PhiAppl{\concatCtx{T}{D}}\;:=\;\concatThy{\PhiAppl{T}}{\PhiAppl{D}}
\]
where $\PhiAppl{D}$ is a list of declarations.

The translation $\PhiAppl{\;a: \piType{x_1}{A_1}\ldots\piType{x_n}{A_n}\type}$ of a base type declaration is given by
\[a: \type, \tb \PredPhiName{a}: \PhiAppl{A_1}\to\ldots\to \PhiAppl{A_n} \to a\to a\to \bool\]
\[\namedax{a_{PER}}{} \univQuant{x_1}{\PhiAppl{A_1}}\ldots\univQuant{x_n}{\PhiAppl{A_n}}\univQuant{u,v}{a} \PredPhiName{a} \ x_1\ \ldots\ x_n\ u\ v\impl u\termEquals{a}v\]
Thus, $a$ is translated to a base type of the same name without arguments and a trivial PER for every argument tuple.
Intuitively, $\PredPhiName{a} \ \PhiAppl{t_1}\ \ldots\ \PhiAppl{t_n}\ u\ u$ defines the subtype of the HOL-type $a$ corresponding to the DHOL-type $a\ t_1\ \ldots\ t_n$.


Constant and variable declarations are translated by adding the assumptions that they are in the PER of their type, and axioms and assumptions are translated straightforwardly:
\[\PhiAppl{c:A} \;:=\; c:\PhiAppl{A},\;\namedax{\typingAxName{c}}{\PredPhi{A}{c}} \tb
  \PhiAppl{x:A} \;:=\; x:\PhiAppl{A},\;\namedass{\typingAssName{x}}{\PredPhi{A}{x}}\]
\[\PhiAppl{\namedax{c}{F}} \;:=\; \namedass{c}{\PhiAppl{F}} \tb
  \PhiAppl{\namedass{x}{F}}\;:=\; \namedass{x}{\PhiAppl{F}}
\]

The cases of $\PhiAppl A$ and $\PredPhiName{A}$ for types $A$ are:
	\[\PhiAppl{a\ t_1\ \ldots \ t_n} \;:=\; a \plabel{PTTpAppl} \tb\tb 
	\termEqT{(a\ t_1\ \ldots \ t_n)}{s}{t} \;:=\; \PredPhiName{a}\ \PhiAppl{t_1}\ \ldots\ \PhiAppl{t_n}\ s\ t\]
	\[\PhiAppl{\piType{x}{A}B} \;:=\; \PhiAppl{A} \to \PhiAppl{B}\tb\tb
	\termEqT{(\piType{x}{A}B)}{f}{g} \;:=\; \univQuant{x,y}{\PhiAppl{A}}
	\termEqT{A}{x}{y}\impl \termEqT{B}{\left(f\ x\right)}{\left(g\ y\right)}\]
	\[\PhiAppl{\bool} \;:=\; \bool \tb\tb
	\termEqT{\bool}{s}{t}\;:=\; s\termEqB t\]
	\ifbool{inAppendix}{
	\[\PhiAppl{\subtype{A}{p}} \;:=\; \PhiAppl{A} \tb\tb
	\termEqT{\left(\subtype{A}{p}\right)}{s}{t} \;:=\; \termEqT{A}{s}{t}\land \PhiAppl{p}\ s\land \PhiAppl{p}\ t\]}

Finally, the cases for terms are straightforward except for, crucially, translating equality to the respective PER:
	\[\PhiAppl{c} \;:=\; c \tb 
	\PhiAppl{x} \;:=\; x \tb
	\PhiAppl{\lambdaFun{x}{A} t} \;:=\; \lambdaFun{x}{\PhiAppl{A}} \PhiAppl{t} \tb
	\PhiAppl{f\ t} \;:=\; \PhiAppl{f}\ \PhiAppl{t}\]
	\[\PhiAppl{F\Rightarrow G} \;:=\; \PhiAppl{F} \impl \PhiAppl{G} \tb 
	\PhiAppl{s\termEquals{A}t} \;:=\; \termEqT{A}{\PhiAppl{s}}{\PhiAppl{t}}\]
\end{definition}

\begin{example}[Translating Derived Connectives]\label{ex:transder}
If we define $\T$, $\F$, $\neg$ as usual in HOL and use the definition for dependent conjunction from above, it is straightforward to show that all DHOL-connectives are translated to their HOL-counterparts.
For example, we have (up to logical equivalence in HOL) that $\PhiAppl{F\wedge G}=\PhiAppl{F}\wedge\PhiAppl{G}$.

We also define the quantifiers in the usual way, e.g., using $\forall x:A.F(x)\;:=\;\lambdaFun{x}{A}F(x)\termEquals{A\to\bool}\lambdaFun{x}{A}\T$.
Then applying our translation yields
\[\PhiAppl{\forall x:A.F(x)}=\PredPhiName{(A\to\bool)}\ \PhiAppl{\lambda x:A.F(x)}\ \PhiAppl{\lambda x:A.\T}\]
\[=\forall x,y:\PhiAppl{A}.\PredPhiName{A}\ x\ y\impl \PredPhiName{\bool}\ F(x)\ \T\]
This looks clunky, but (because $\PredPhiName{A}$ is a PER as shown in Theorem~\ref{thm:completePaper}) is equivalent to
$\forall x:\PhiAppl{A}.\PredPhiName{A}\ x\ x\impl F(x)$.
Thus, DHOL-$\forall$ is translated to HOL-$\forall$ relativized using $\PredPhiName{A}\ x\ x$.
The corresponding rule $\PhiAppl{\exists x:A.F(x)}=\exists x:\PhiAppl{A}.\PredPhiName{A}\ x\ x\wedge F(x)$ can be shown accordingly.
\end{example}

\begin{example}[Categories in HOL]
We give a fragment of the translation of Ex.~\ref{ex:cat}:
\[\begin{array}{l@{\tb}l}
  \obj: \type & \PredPhiName\obj: \obj\to\obj\to\bool \\
	\mor: \type &\PredPhiName\mor: \obj\to\obj\to\mor\to\mor\to\bool \\
	\id: \obj\to\mor & \PredPhiName\id: \forall x,y:\obj.\PredPhiName\obj\ x\ y\impl\PredPhiName\mor\ x\ x\ (\id\ x)\ (\id\ y)\\
	\multicolumn{2}{l}{\comp: \obj\to\obj\to\obj\to\mor\to\mor\to\mor}\\
  \multicolumn{2}{l}{\neutl : \forall x:\obj.\PredPhiName\obj\ x\ x\impl \forall y:\obj.\PredPhiName\obj\ y\ y\impl}\\
  \multicolumn{2}{r}{\forall m:\mor.\PredPhiName\mor\ x\ y\ m\ m\impl \PredPhiName\mor\ x\ y\ (\comp\ x\ x\ y \ (\id\ x)\ m)\  m}
\end{array}\]
Here, for brevity, we have omitted $\obj_{PER}$, $\mor_{PER}$, and $\PredPhiName\comp$ and have already used the translation rule for $\forall$ from Ex.~\ref{ex:transder}.
The result is structurally close to what a native formalization of categories in HOL would look like, but somewhat clunkier.
\end{example}

\section{Predicate Subtypes}\label{sec:dphol}
\begin{figure}[htb]\small
Typing rules for predicate subtypes:
\[
	\rnamedRule{$\subtype{}{p}\type$}{\Gamma\dedT \subtype{A}{p}\; \type}{\Gamma\dedT p:\piType{x}{A}\bool}{psubType}\tb
	\rnamedRule{$\subtype{}{p}$I}{\Gamma\dedT t:\subtype{A}{p}}{\Gamma\dedT t:A \tb  \Gamma\dedT p\ t}{psubI}\tb
	\rnamedRule{$\subtype{}{p}$E}{\Gamma\dedT p\ t}{\Gamma\dedT t:\subtype{A}{p}}{psubE}
\]
Congruence and variance rule for predicate subtypes:
	\[
	\rnamedRule{$\subtype{}{p}\!\!\typeEquals$}{\Gamma\dedT \subtype{A}{p} \typeEquals \subtype{A'}{p'}}{\Gamma\dedT A\typeEquals A' \tb \Gamma\dedT p\termEquals{\piType{x}{A}\bool} p'}{psubEq}\tb
	\rnamedRule{$\subtyping\!\! \subtype{}{p}$}{\Gamma\dedT \subtype{A}{p}\subtyping \subtype{A'}{p'}}{\Gamma\dedT A\subtyping A'\tb \concatCtx{\Gamma}{x:A}\dedT p\ x\impl p'\ x}{subtPsubCong}
	\]
Rules that relate $A$ and $\subtype{A}{p}$:
	\[
	\rnamedRule{$\subtyping\!\! $Top}{\Gamma\dedT \subtype{A}{p}\subtyping A'}{\Gamma\dedT A\subtyping A'}{subtPsub}\tb
	\rnamedRule{$\subtype{}{triv}$}{\Gamma\dedT A\typeEquals \subtype{A}{\lambdaFun{x}{A}\T}}{\Gamma\dedT\Type{A}}{psubTriv}\tb
	\rnamedRule{$\subtype{}{triv}'$}{\Gamma\dedT \subtype{A}{\lambdaFun{x}{A}\T}\typeEquals A}{\Gamma\dedT\Type{A}}{psubTriv'}
	\]
Variance rules for other DHOL types:
	\[
	\rnamedRule{$\subtyping\!\!$I}{\Gamma\dedT A\subtyping A'}{\Gamma\dedT A\typeEquals A'}{subtI}\tb
	\rnamedRule{$\subtyping\!\!$Pi}{\Gamma\dedT \piType{x}{A}B\subtyping \piType{x}{A'}B'}{\Gamma\dedT A'\subtyping A\tb \concatCtx{\Gamma}{x:A'}\dedT B\subtyping B'}{subtPi}		
	\]
Rules for normalizing certain subtypes:
	\[
	\rnamedRule{$\pi \subtype{}{p}$Cod}{\Gamma\dedT\piType{x}{A}(\subtype{B}{p})\typeEquals \subtype{(\piType{x}{A}B)}{\lambda f.\univQuant{x}{A}p\ \left(f\ x\right)}}{\Gamma\dedT \Type{A}\tb \concatCtx{\Gamma}{x:A}\dedT \Type{B}\tb \concatCtx{\Gamma}{x:A}\dedT p:\piType{y}{B}\bool}{piPsubCod}\]
	\[
	\rnamedRule{$\subtype{}{p}\subtype{}{q}$}{\Gamma\dedT\subtype{\subtype{A}{p}}{q}\typeEquals \subtype{A}{\lambdaFun{x}{A}p\ x\land q\ x}}{\Gamma\dedT \Type{A}\tb \Gamma\dedT p:\piType{x}{A}\bool\tb \Gamma\dedT q:\piType{x}{(\subtype{A}{p})}\bool}{psubQsub}
	\]
	\caption{Additional Rules for Predicate Subtypes}\label{fig:dpholrules}
\end{figure}

To add predicate subtypes, we extend the \textbf{grammar} with the production $A	\bbc \subtype{A}{F}$.
No new productions for terms are needed because the inhabitants of $\subtype{A}{F}$ use the same syntax as those of $A$.

\newcommand{\invPred}{\symbolFont{inv?}}
\newcommand{\iso}{\symbolFont{isomorphisms}}
\begin{example}[Isomorphisms]\label{ex:cat2}
	We continue Example~\ref{ex:cat} and use predicate subtypes to write the type $\iso\ u$ of automorphisms on $u$ as a subtype of $\mor\ u\ u$.
	We can define $\iso\ u \;:=\;\subtype{(\mor\ u\ u)}{p}$ where the predicate $p$ is given by
	\[\lambda m:\mor\ u\ u.\,\exists i:\mor\ u\ u.\,\left(i\circ m\termEquals{\mor\ u\ u}\ident{u}\right) \land \left(m\circ i\termEquals{\mor\ u\ u}\ident{u}\right)\]
\end{example}

Adding subtyping requires a few extensions to our type system.
First we add a \textbf{judgment} $\Gamma\dedT A \subtyping B$ and replace the lookup rules for variables and constants with their subtyping-aware variants:
\[
\rnamedRule{const'}{\lowlight{\Gamma\dedT c:A}}{\lowlight{c:A'\thyIn{}{T}}\tb \lowlight{\Gamma\dedT A'}\subtyping \lowlight{A}}{const''}\tb
\rnamedRule{var'}{\lowlight{\Gamma\dedT x:A}}{\lowlight{x:A'\ctxIn{}{\Gamma}}\tb \lowlight{\Gamma\dedT A'}\subtyping \lowlight{A}}{var''}
\]
Then we add the \textbf{rules} given in Figure~\ref{fig:dpholrules}.
These induce an algorithm for deciding subtyping relative to an oracle for the undecidable validity judgment.
The latter enters the algorithm when two predicate subtypes are compared.
Note that the type-equality rule for $\subtype{\subtype{A}{p}}{q}$ uses a dependent conjunction.

The resulting system is a conservative extension of the variants of HOL and DHOL without subtyping: we recover these systems as the fragments that do not use $\subtype{A}{p}$.
In particular, in that case $A \subtyping B$ is trivial and holds iff $A\typeEquals B$ holds.

Finally, we extend our \textbf{translation} by adding the cases for predicate subtypes:

\begin{definition}[Translation]
	We extend Def.~\ref{def:trans} with
	\[\PhiAppl{\subtype{A}{p}} \;:=\; \PhiAppl{A} \tb 
	\termEqT{\left(\subtype{A}{p}\right)}{s}{t} \;:=\; \termEqT{A}{s}{t}\land \PhiAppl{p}\ s\land \PhiAppl{p}\ t\]
\end{definition}

\section{Soundness and Completeness}\label{sec:meta}

Now we establish that our translation is faithful, i.e. sound and complete. 
We will use the terms \emph{sound} and \emph{complete} from the perspective of using a HOL-ATP for theorem proving in DHOL, e.g., \emph{sound} means if $\PhiAppl{F}$ is a HOL-theorem, then $F$ is a DHOL-theorem, and \emph{complete} is the dual.\footnote{If, however, we think of our translation as an interpretation function that maps syntax to semantics, we could also justify swapping the names of the theorems.}

The completeness theorem states that our translation preserves all DHOL-judgments.
Moreover, the theorem statement clarifies the intuition behind the translations invariants:

\renewcommand{\dedPT}{\ensuremath{\vdash_{\PhiAppl{T}}}}
\begin{theorem}[Completeness]\label{thm:completePaper}
	We have \\	
	\setlength\extrarowheight{5pt}
	\begin{center}
	\begin{tabular}{|l@{\;\;}|@{\;\;}l@{\tb and \tb}l|}
		\hline 
		if in DHOL & \multicolumn{2}{c|}{then in HOL} \\ 
		\hline 
		$\phantom{\Gamma}\ded \Thy{T}$ & \multicolumn{2}{l|}{$\phantom{\PhiAppl{\Gamma}}\ded\Thy{\PhiAppl T}$}  \\ 
		$\phantom{\Gamma}\dedT \Ctx{\Gamma}$ &  \multicolumn{2}{l|}{$\phantom{\PhiAppl{\Gamma}}\dedPT\Ctx{\PhiAppl{\Gamma}}$}  \\ 
		$\Gamma\dedT \Type{A}$ & $\PhiAppl{\Gamma}\dedPT\Type{\PhiAppl{A}}$ & $\PhiAppl{\Gamma}\dedPT\PredPhiName{A}: \PhiAppl{A}\to\PhiAppl{A}\to \bool$ and $\PredPhiName{A}$ is PER \\ 
		$\Gamma\dedT A \typeEquals B$ & $\PhiAppl{\Gamma}\dedPT\PhiAppl{A} \typeEquals \PhiAppl{B}$ & $\concatCtx{\PhiAppl{\Gamma}}{x,y:\PhiAppl{A}}\dedPT\PredPhiName{A}\ x\ y \termEqB \PredPhiName{B}\ x\ y$ \\ 
		$\Gamma\dedT A \subtyping B$ & $\PhiAppl{\Gamma}\dedPT \PhiAppl{A} \typeEquals \PhiAppl{B}$ & $\concatCtx{\PhiAppl{\Gamma}}{x,y:\PhiAppl{A}}\dedPT \termEqT{A}{x}{y}\impl \termEqT{B}{x}{y}$ \\ 
		$\Gamma\dedT t:A$ & $\PhiAppl{\Gamma}\dedPT\PhiAppl{t}:\PhiAppl{A}$ & $\PhiAppl{\Gamma}\dedPT\PredPhi{A}{\PhiAppl{t}}$ \\ 
		$\Gamma\dedT F$ & \multicolumn{2}{l|}{$\PhiAppl{\Gamma}\dedPT\PhiAppl{F}$} \\
		\hline 
	\end{tabular}
	\end{center}
	
	Additionally the substitution lemma holds, i.e.,
	\begin{align*}
	\concatCtx{\Gamma}{x:A}\dedT t:B\;\Mand\;\Gamma\ded u:A &\Impl \PhiAppl{\Gamma}\dedPT\PhiAppl{\subst{t}{x}{u}}\termEquals{\PhiAppl{B}} \subst{\PhiAppl{t}}{x}{\PhiAppl{u}}
	\end{align*}
\end{theorem}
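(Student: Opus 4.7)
The plan is to prove all seven rows of the table simultaneously by structural induction on the DHOL derivation, carrying the substitution lemma as an additional inductive invariant (proved by a parallel induction on the structure of $t$). For each DHOL rule, the induction hypotheses are taken for granted on the premises and a HOL derivation of the translated conclusion is assembled from the HOL rules of Figure~\ref{fig:holrulesPaper} together with the axioms introduced by the translation of the ambient theory $T$.

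A first block of cases is routine. Theory- and context-formation rules reassemble from their componentwise translations. The propositional rules (implI, implE, beta, etaPi, refl, sym, congLam', congAppl') propagate symbol-for-symbol once the IHs have supplied the PER memberships of the translated terms. The one mildly dependent propositional case is appl': to reconcile $\PhiAppl{\subst{B}{x}{t}}$ with $\subst{\PhiAppl{B}}{x}{\PhiAppl{t}}$ we invoke the substitution lemma, and to discharge the PER membership of $\PhiAppl{f\,t}$ at $\PhiAppl{\subst{B}{x}{t}}$ we instantiate the $\Pi$-PER of $f$ at $\PhiAppl{t}$ paired with itself.

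The PER invariant is verified type-by-type. For $\bool$, $\termEquals{\bool}$ is manifestly a PER. For $\piType{x}{A}B$, the pointwise definition $\forall x,y.\,\termEqT{A}{x}{y}\impl\termEqT{B}{(f\,x)}{(g\,y)}$ is symmetric and transitive by instantiating the quantifiers and chaining the PER properties of $A$ and $B$ supplied by the IH. For $a\,t_1\,\ldots\,t_n$ the axiom $a_{PER}$ collapses $\PredPhiName{a}\,\PhiAppl{t_1}\,\ldots\,\PhiAppl{t_n}\,u\,v$ into $u\termEquals{a}v$, after which a single Leibniz-substitution step (replacing $u$ by $v$ or $v$ by $u$ inside the predicate) yields symmetry, and a similar step chained with a second instance yields transitivity.

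The main obstacle is the congBase' rule, where a term equality at the argument level is lifted to an extensional equivalence of PERs at the head. From $\Gamma\dedT s_i\termEquals{\ldots}t_i$ the IH provides only $\PredPhiName{A_i}$-relatedness of $\PhiAppl{s_i}$ and $\PhiAppl{t_i}$, whereas we must establish $\PredPhiName{a}\,\PhiAppl{s_1}\,\ldots\,\PhiAppl{s_n}\,x\,y\termEqB\PredPhiName{a}\,\PhiAppl{t_1}\,\ldots\,\PhiAppl{t_n}\,x\,y$ uniformly in $x,y$. When $A_i$ is itself a base type, the axiom $a_{PER}$ converts PER-relatedness into HOL equality and Leibniz closes the step; when $A_i$ is a $\Pi$-type, or (in Sect.~\ref{sec:dphol}) a predicate subtype, we need an auxiliary respectfulness lemma stating that $\PredPhiName{a}$ preserves PER-equivalence on each argument slot, proved by nested induction on the shape of $A_i$ and using the pointwise structure of the $\Pi$-PER. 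Extending this respectfulness argument to the predicate-subtype congruence rules psubEq, piPsubCod, psubQsub, together with the parallel extension of the substitution lemma to cover the $\subtype{A}{p}$ case, constitutes the principal technical burden of the proof.
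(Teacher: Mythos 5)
Your overall skeleton---induction on DHOL derivations, with the substitution lemma discharged by a separate induction on term structure and the PER property verified type-by-type---matches the paper's proof in Appendix~\ref{sec:complete}. But two load-bearing pieces are missing or do not work. First, you never say how to discharge the PER membership $\PredPhi{(\piType{x}{A}B)}{\PhiAppl{\lambdaFun{x}{A}t}}$ in the case of the typing rule for $\lambda$-abstraction (the same issue resurfaces in the congruence rule for $\lambda$). Unfolding the $\Pi$-PER, this requires $\termEqT{B}{\PhiAppl{t}}{\subst{\PhiAppl{t}}{x}{y}}$ for \emph{two distinct} PER-related variables $x,y$, whereas the induction hypothesis for $\concatCtx{\Gamma}{x:A}\dedT t:B$ only supplies the unary statement $\PredPhi{B}{\PhiAppl{t}}$ under the assumption $\PredPhi{A}{x}$. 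The paper closes this gap with a dedicated binary lemma, its claim (\ref{correct:substRelatTerms}): whenever $\concatCtx{\Gamma}{x:A}\dedT t:B$, the terms $\PhiAppl{t}$ and $\subst{\PhiAppl{t}}{x}{x'}$ are $\PredPhiName{B}$-related under the assumption $\termEqT{A}{x}{x'}$, proved by its own induction on the shape of $t$. This is the fundamental-lemma step of the logical-relations argument; without it the typing row of the theorem cannot be established for $\lambda$-terms.

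Second, your route through the congruence rule for applied base types does not go through. The ``respectfulness lemma'' you propose---that $\PredPhiName{a}$ preserves PER-equivalence in each argument slot---is not derivable when some $A_i$ is a $\Pi$-type: two functions can be $\PredPhiName{A_i}$-related without being equal in HOL, and $\PredPhiName{a}$ is an uninterpreted constant whose only axioms ($a_{PER}$ together with symmetry and transitivity) constrain its final two arguments, not its parameter slots, so no nested induction on the shape of $A_i$ can manufacture the missing constraint. The paper avoids the problem by strengthening the invariant carried for term equality: it reduces the claim for $\Gamma\dedT s\termEquals{A}t$ to showing the \emph{plain} HOL equality $\PhiAppl{s}\termEquals{\PhiAppl{A}}\PhiAppl{t}$ together with $\PredPhi{A}{\PhiAppl{s}}$ (see the paragraph on term equality in Appendix~\ref{sec:complete}), after which the ordinary HOL congruence rule for application yields $\PredPhiName{a}\ \PhiAppl{s_1}\ \ldots\ \PhiAppl{s_n}\termEquals{\PhiAppl{a}\to\PhiAppl{a}\to\bool}\PredPhiName{a}\ \PhiAppl{t_1}\ \ldots\ \PhiAppl{t_n}$ directly, with no respectfulness assumption on $\PredPhiName{a}$ needed. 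You should adopt that stronger equality invariant (and check it is preserved by every rule that can conclude an equality) rather than rely on the unprovable lemma.
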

\begin{proof}
The proof proceeds by induction on derivations. 
It is given in Appendix~\ref{sec:complete}.
\end{proof}

The reverse direction is much trickier.
To understand why, we look at two canaries in the coal mine that we have used to reject multiple intuitive but untrue conjectures:

\begin{example}[Non-Injectivity of the Translation]\label{ex:noninj}
Continuing Ex.~\ref{ex:cat}, assume terms $u,v:\obj$ and consider the identify functions $I_u:=\lambda f:\mor\ u\ u.f$ and $I_v:=\lambda f:\mor\ v \ v.f$.
Both are translated to the same HOL-term $\PhiAppl{I_u}=\PhiAppl{I_v}=\lambda f:\mor.f$ (because $I_u$ and $I_v$ only differ in the types, which are erased by our translation).

Consequently, the ill-typed DHOL-Boolean $b:=I_u\termEquals{\mor\ u\ u\to \mor\ u\ u}I_v$ is translated to the HOL-Boolean $\lambda f:\mor.f\termEquals{\mor\to\mor}\lambda f:\mor.f$, which is not only well-typed but even a theorem.
\end{example}

To better understand the underlying issue we introduce the notion of \emph{spurious} terms.
The well-typed translation $\PhiAppl{t}$ of a DHOL-term $t$ is called \textbf{spurious} if $t$ is ill-typed (otherwise it is called \emph{proper}). 
Intuitively, we should be able to use the PERs $\PredPhiName{A}$ to deal with spurious terms: to type-check $t:A$ in DHOL, we want to use $\PredPhiName{A}\ \PhiAppl{t}\ \PhiAppl{t}$ in HOL.
But even that is tricky:

\begin{example}[Trivial PERs for Built-In Base Types]
Consider the property $\PredPhiName{\bool}\ x\ x$.
Our translation guarantees $\PredPhiName{\bool}\ \T\ \T$ and $\PredPhiName{\bool}\ \F\ \F$.
Thus, we can use Boolean extensionality to prove in HOL that $\forall x:\bool.\,\PredPhiName{\bool}\ x\ x$, making the property trivial.
In particular, we can prove $\PredPhiName{\bool}\ \PhiAppl{b}\ \PhiAppl{b}$ for the spurious Boolean $b$ from Ex.~\ref{ex:noninj}.
Even worse, the property $\PredPhiName{(\piType{x}{A}B)}\ x\ x$ is trivial in this way whenever it is for $B$ and thus for all $n$-ary $\bool$-valued function types.

More generally, this degeneration effect occurs for every base type that is built into both DHOL and HOL and that is translated to itself.
$\bool$ is the simplest example of that kind, and the only one in the setting described here.
But reasonable language extensions like built-in base types $a$ for numbers, strings, etc. would suffer from the same issue.
This is because all of these types would come with built-in induction principles that derive a universal property from its ground instances, at which point $\PredPhiName{a}\ x \ x$ becomes trivial.

Note, however, that the degeneration effect does \emph{not} occur for \emph{user-declared} base types.
For example, consider a theory that declares a base type $N$ for the natural numbers and an induction axiom for it.
$N$ would not be translated to itself but to a fresh HOL-type in whose induction axiom the quantifier $\forall$ is relativized by $\PredPhiName{N}\ x\ x$.
Consequently, $\PredPhiName{N}\ x\ x$ is not trivial and can be used to reject spurious terms.
\end{example}

Therefore, we cannot expect the reverse directions of the statements in Thm.~\ref{thm:completePaper} to hold in general.
However, we can show the following property that is sufficient to make our translation well-behaved:

\begin{theorem}[Soundness]\label{thm:soundPaper}
	Assume a well-formed DHOL-theory $\ded\Thy{T}$.
	\[\text{If}\;\;\Gamma\dedT F:\bool \;\;\text{and}\;\; \PhiAppl{\Gamma}\dedPT\PhiAppl{F}, \;\;\text{then}\;\; \Gamma\dedT F\]
	In particular, if $\Gamma\dedT s:A$ and $\Gamma\dedT t:A$ and $\PhiAppl{\Gamma}\dedPT \termEqT{A}{\PhiAppl{s}}{\PhiAppl{t}}$, then $\Gamma\ded s \termEquals{A}t$.
\end{theorem}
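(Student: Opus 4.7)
The plan is to prove the contrapositive via a model-theoretic argument: from a DHOL-countermodel of $F$ over $T, \Gamma$ build a HOL-countermodel of $\PhiAppl{F}$ over $\PhiAppl{T}, \PhiAppl{\Gamma}$, and conclude by soundness of HOL with respect to Henkin semantics.

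First I would appeal to (or prove separately) a Henkin-style semantic completeness result for DHOL, obtained in the usual way from the term model quotiented by provable equality of the theory. Since $\Gamma \dedT F:\bool$ is well-typed but $\Gamma \not\dedT F$, this provides a DHOL-model $M$ of $T$ that validates every assumption in $\Gamma$ while falsifying $F$. In $M$, each declared base type $a : \piType{x_1}{A_1}\cdots\piType{x_n}{A_n}\type$ is interpreted as a family $\bar{u} \mapsto a^M(\bar{u})$ of non-empty sets, and $\Pi$-types as dependent function spaces.

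Next, collapse this data into a HOL-model $N$ of $\PhiAppl{T}$. For each base type $a$ put
\[ a^N \;:=\; \{\bullet_a\} \sqcup \bigsqcup_{\bar{u}} a^M(\bar{u}), \]
with $\bullet_a$ a distinguished dummy element securing non-emptiness, and define $(\PredPhiName{a})^N(\bar{u}, x, y)$ to hold iff $x = y$ lies in the $\bar{u}$-summand. This validates the $a_{PER}$ axiom because equal elements of a single summand already coincide as members of $a^N$. Constants and assumptions are reinterpreted via their $M$-values placed in the appropriate summand, and dependent functions $f : \piType{x}{A}B$ are turned into total HOL-functions $(\PhiAppl{A})^N \to (\PhiAppl{B})^N$ by sending spurious inputs to dummies. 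The main step is then to prove, by induction on DHOL-typing, that for every well-typed $t:A$ and every $M$-valuation $\rho$, the HOL-value $(\PhiAppl{t})^N_\rho$ agrees with $t^M_\rho$ inside the appropriate summand of $(\PhiAppl{A})^N$. Specialising to $A = \bool$ gives $(\PhiAppl{F})^N_\rho = F^M_\rho$, so $N \not\models \PhiAppl{F}$, while $N \models \PhiAppl{T}, \PhiAppl{\Gamma}$ holds by construction, including the typing witnesses $\typingAxName{c}$ and $\typingAssName{x}$. Soundness of HOL with respect to Henkin semantics then contradicts $\PhiAppl{\Gamma} \dedPT \PhiAppl{F}$. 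The equality addendum follows by instantiating $F \;\equiv\; s \termEquals{A} t$, since $\PhiAppl{F} = \termEqT{A}{\PhiAppl{s}}{\PhiAppl{t}}$ by Definition~\ref{def:trans}.

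The delicate step is the $\Pi$-type case of the invariant: a DHOL dependent function must be turned into a single-typed HOL function whose totalisation on spurious inputs still validates the translated PER clause $\univQuant{x,y}{\PhiAppl{A}} \termEqT{A}{x}{y} \impl \termEqT{B}{(f\ x)}{(f\ y)}$, which requires that the left conjunct filters away precisely the spurious inputs. A second subtle point — foreshadowed by the paper's discussion of degeneration — is that $\PredPhiName{\bool}$ in $N$ collapses to the universally true relation, so the invariant must be formulated only for well-typed DHOL-syntax and must not be mistakenly read as preservation of HOL-truth for spurious Booleans. Predicate subtypes add one further wrinkle: for $\subtype{A}{p}$, the $M$-interpretation automatically lives in the $p$-cut of $A^M$, but one must check that the interpretation of $p$ transferred to $N$ is consistent across all summands so that the definition $\termEqT{(\subtype{A}{p})}{s}{t} := \termEqT{A}{s}{t} \land \PhiAppl{p}\ s \land \PhiAppl{p}\ t$ cuts out the intended subset.
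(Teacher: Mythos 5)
Your proposal is not the paper's argument and, as it stands, it has a genuine gap at its first and load-bearing step. The paper's proof is entirely syntactic: it normalizes a given HOL derivation of $\PhiAppl{F}$ into an \emph{admissible} derivation in which every term is ``almost proper'' (Lemma~\ref{lem:normalizingPrfTransform}, using term-wise injectivity, Lemma~\ref{lem:termwise-inj}, and a default-term trick for unnormalizably spurious applications), and then lifts that derivation rule by rule to a DHOL derivation of $F$. No semantics for DHOL is ever defined. Your plan instead appeals to ``a Henkin-style semantic completeness result for DHOL, obtained in the usual way from the term model quotiented by provable equality.'' That result is nowhere available --- DHOL is introduced in this very paper --- and it cannot be obtained ``in the usual way.'' The standard Henkin construction first extends $T,\Gamma,\neg F$ to a syntactically complete theory so that the quotiented term model makes $\bool$ two-valued (which you need, since the rule \textsf{boolExt} forces every closed Boolean to be decided). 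But in DHOL, adding axioms during the Lindenbaum extension changes which type equalities $A\typeEquals B$ are derivable and hence which terms are well-typed at all: the carrier of the term model is a moving target as the theory grows. Handling this (together with the deliberately weakened non-emptiness rule, dependent implication, and predicate subtypes) is a substantial theorem in its own right, plausibly harder than the soundness statement you are trying to prove, so invoking it without proof leaves the argument circular in difficulty.

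The second half of your plan --- collapsing a DHOL model $M$ into a HOL model $N$ via disjoint sums with dummies, verifying the $a_{PER}$ axioms and the typing witnesses $\typingAxName{c}$, $\typingAssName{x}$, and proving the agreement invariant by induction on typing --- is coherent and correctly anticipates the two delicate points (totalization of dependent functions against the relativized PER clause, and the degeneration of $\PredPhiName{\bool}$). If a Henkin semantics and completeness theorem for DHOL were established, this half would go through and would yield a genuinely different, more semantic proof. But without that input the argument does not close, whereas the paper's proof-transformation route needs only the (already proved) syntactic lemmas and the HOL proof object itself.
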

\begin{proof}
The key idea is to transform a HOL-proof of $\PhiAppl{F}$ into one that is in the image of the translation, at which point we can read off a DHOL-proof of $F$.
The full proof is given in Appendix~\ref{sec:soundness}.
\end{proof}

Intuitively, the reverse directions of Thm.~\ref{thm:completePaper} hold if we have already established that all involved expressions are well-typed in DHOL.
Thus, we \emph{can} use a HOL-ATP to prove DHOL-conjectures if we validate independently that the conjecture is well-typed to begin with.
In the remainder of this section, we develop the necessary type-checking algorithm for DHOL.

\paragraph{Type-Checking}
Inspecting the rules of DHOL, we observe that all DHOL-judgments would be decidable if we had an oracle for the validity judgment $\Gamma\dedT F$.
Indeed, our DHOL-rules are already written in a way that essentially allows reading off a bidirectional type-checking algorithm.
It only remains to split the typing judgment $\Gamma\dedT t:A$ into two algorithms for type-inference (which computes $A$ from $t$) and type-checking (which takes $t$ and $A$ and returns yes or no) and to aggregate the rules for subtyping into an appropriate pattern-match.

The construction is routine, and we have implemented the resulting algorithm in our MMT/LF logical framework \cite{rabe:recon:17,lf}.%
\footnote{The formalization of DHOL in \mmt is available at \url{https://gl.mathhub.info/MMT/LATIN2/-/blob/devel/source/logic/hol_like/dhol.mmt}.
The example theories given throughout this paper and a few example conjectures are available at \url{https://gl.mathhub.info/MMT/LATIN2/-/blob/devel/source/casestudies/2023-cade}.}
The oracle for the validity judgment is provided by our translation and a theorem prover for HOL (see Sect.~\ref{sec:impl}).
It remains to show that whenever the algorithm calls the oracle for $\Gamma\dedT F$, we do in fact have that $\Gamma\dedT F:\bool$ so that Thm.~\ref{thm:soundPaper} is applicable.
Formally, we show the following:
\begin{theorem}
Relative to an oracle for $\Gamma\dedT F$, consider a derivation of some DHOL-judgment, in which the children of each node\ednote{CR@FR: I think, you mean node in the proof tree. This is not directly clear to the reader. Can we phrase this better to clarify it?} are ordered according to the left-to-right order of the assumptions in the statement of the applied rule.

If the oracle calls are made in depth-first order, then each such call satisfies $\Gamma\dedT F:\bool$.
\end{theorem}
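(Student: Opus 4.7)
The plan is to induct on the structure of the given derivation, augmented with a well-posedness invariant. The invariant states: whenever the depth-first traversal arrives at a node with judgment $\Gamma \dedT \mathcal{J}$, every type, term, and formula appearing in $\mathcal{J}$ has already been certified well-formed---types as well-formed types, terms as typed at well-formed types, and formulas as of type $\bool$---using only oracle calls that are already complete. In particular, a type-equality goal $\Gamma \dedT A \typeEquals B$ or a subtyping goal $\Gamma \dedT A \subtyping B$ is only entered after both $A$ and $B$ have been shown to be well-formed, and a term-equality validity goal $\Gamma \dedT s \termEquals{A} t$ is only entered after $\Gamma \dedT s : A$ and $\Gamma \dedT t : A$ have been established.

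First, I would classify the DHOL rules (including the predicate-subtype extension) by whether their premises include a validity judgment, since only these cases produce oracle calls. The purely structural rules---for theories, contexts, $\bool$, $\Pi$-formation, $\lambda$-abstraction, application, and the typing rules not involving type equality---generate no oracle calls, so the inductive hypothesis applies to each premise directly. The cases requiring actual work are the congruence rules for type equality (especially that for base types), the congruence rules for term equality and for validity, the dependent implication typing rule, and the rules for predicate subtypes that involve a validity premise. For each, I verify that in the stated left-to-right premise order the validity premise follows strictly after those premises that establish the well-typedness of its formula, so that when the traversal reaches an oracle call $\Gamma \dedT F$, the judgment $\Gamma \dedT F : \bool$ is derivable from facts already in hand---invoking the strengthened invariant and, where type-level substitution is involved, the substitution lemma of Theorem~\ref{thm:completePaper}.

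The pivotal case is the congruence rule for base types: to conclude $\Gamma \dedT a\ s_1 \ldots s_n \typeEquals a\ t_1 \ldots t_n$, the rule spawns oracle calls for each $\Gamma \dedT s_i \termEquals{A_i'} t_i$ (where $A_i'$ denotes $A_i$ after the appropriate substitutions) but does not list the typings $\Gamma \dedT s_i : A_i'$ and $\Gamma \dedT t_i : A_i'$ among its premises. Here the strengthened invariant is essential: since the two types in the conclusion have been certified well-formed before this rule is applied, the type-formation rule for base types forces precisely the required typings of the $s_i$ and $t_i$, and the typing rule for equality then gives $\Gamma \dedT s_i \termEquals{A_i'} t_i : \bool$. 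The rules for predicate subtypes are handled analogously, using that $p$ and, where relevant, $p'$ have been certified as having type $\piType{x}{A}\bool$ before any oracle call referring to them.

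The main obstacle is making the strengthened invariant precise and showing that it is stable under the inductive step. Because the invariant refers to what the algorithm has already established rather than to what the rule statement literally lists as premises, the induction must be coupled to an operational description of how the bidirectional type-checker augments each rule with auxiliary well-formedness checks performed on its own. The delicate point is to ensure that none of these auxiliary checks create circular dependencies on not-yet-made oracle calls; this is precisely where the left-to-right and depth-first orderings are exploited, as together they guarantee that the well-typedness of any formula passed to the oracle is witnessed by oracle calls strictly preceding it in the traversal.
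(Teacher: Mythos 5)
Your proposal matches the paper's own proof in essence: both strengthen the theorem to an invariant assigning a well-formedness precondition to each judgment form (contexts presuppose a well-formed theory, types a well-formed context, typing a well-formed type, validity a $\bool$-typing, and type-equality/subtyping well-formedness of both sides), proceed by induction on derivations while ignoring rules whose conclusion is a validity judgment, and single out the base-type congruence rule as the case where the left-to-right ordering is essential. The one imprecision is in that pivotal case: well-formedness of the two types yields $s_i$ typed at $\subst{A_i}{x_1}{s_1}\cdots$ but $t_i$ typed at $\subst{A_i}{x_1}{t_1}\cdots$, and reconciling these two instances is exactly where the earlier oracle calls $\Gamma\dedT s_j\termEquals{}t_j$ for $j<i$ are needed --- a point your closing paragraph covers only in general terms, whereas the paper makes it the explicit content of the key case.
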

\begin{proof}
We actually prove, by induction on derivations, the more general statement requires that each rule preserves the following preconditions:\\[0.2cm]
\begin{tabular}{|l|l|}
\hline
Judgment & Precondition \\
\hline
 $\dedT \Ctx{\Gamma}$ & $\ded \Thy{T}$\\
 $\Gamma\dedT A\, \type$ & $\dedT \Ctx{\Gamma}$ \\ 
 $\Gamma\dedT t:A$ & $\Gamma\dedT A\,\type$ (post-condition when used as type-\emph{inference})\\
 $\Gamma\dedT F$ & $\Gamma\dedT F:\bool$ \\ 
 $\Gamma\dedT A\typeEquals B$ or $\Gamma\dedT A\subtyping B$&  $\Gamma\dedT A\,\type$ and  $\Gamma\dedT B\,\type$ \\ 
\hline
\end{tabular}\\[.2cm]
Note that rules whose conclusion is a validity judgment can be ignored because they are replaced by the oracle anyway.

The most interesting case is the rule for $\Gamma\dedT a\ s_1\ \ldots\ s_n\typeEquals a\ t_1\ \ldots t_n$.
Here, the left-to-right order of assumptions is critical because $\Gamma\dedT s_1\termEquals{A_1}t_1$ may be needed to show, e.g., $\Gamma\dedT s_2\termEquals{\subst{A_2}{x_1}{t_1}}t_2:\bool$.
\end{proof}

\section{Theorem Prover Implementation}\label{sec:impl}
We have integrated our translation as a preprocessor to the HOL ATP LEO-III \cite{leo3}.
We chose this ATP because its existing preprocessor infrastructure already includes a powerful logic embedding tool \cite{leo_embedding,S22}.%
\footnote{Our implementation can be found at \url{https://github.com/leoprover/logic-embedding/blob/master/embedding-runtime/src/main/scala/leo/modules/embeddings/DHOLEmbedding.scala}.}
However, with a little more effort, other HOL ATPs work as well.

Furthermore, we developed a bridge between the \mmt logical framework \cite{rabe:recon:17} and LEO-III (both of which are written in the same programming language).%
\footnote{Our implementation can be found at \url{https://gl.mathhub.info/MMT/LATIN2/-/tree/devel/scala/latin2/tptp}.}
This allows us to use our \mmt-based type-checker for DHOL with our Leo-III-based theorem prover to obtain a full-fledge implementation of DHOL.
Moreover, this system can immediately use \mmt's logic-independent frontend features like IDE and module system.

Alternatively, we can use LEO-III as a general purpose DHOL-ATP that accepts input in TPTP.
Even though TPTP does not officially sanction DHOL as a logic, it anticipates dependent function types and already provides syntax for them (although --- to our knowledge --- no ATP system has made use of it so far).
Concretely, TPTP represents the type $\piType{x}{A}B$ as \verb|!>[X:A]:B| and a base type $a\ t_1\ldots\ t_n$ as \verb|a @ t1 ... @ tn|.
TPTP does not yet provide syntax for predicate subtypes, i.e., this approach is currently limited to the no-subtyping fragment of DHOL.
But extending the TPTP syntax with predicate subtypes would be straightforward, e.g., by using \verb!A ?| p! to represent the type $\subtype{A}{p}$.


The encoding of the conjecture given in Ex.~\ref{ex:depimpl} using the theory from Ex.~\ref{ex:cat} is given at \url{https://gl.mathhub.info/MMT/LATIN2/-/blob/devel/source/casestudies/2023-cade/Category Theory/category-theory-lemmas-dhol.p} (which also includes further example conjectures relative to the same theory). 
Running the logic embedding tool translates it into the TPTP TH0 problem given at \url{https://gl.mathhub.info/MMT/LATIN2/-/blob/devel/source/casestudies/2023-cade/Category Theory/category-theory-lemmas-hol.p}.
Unsurprisingly, LEO-III can prove this simple theorem easily.

\subsection{Practical evaluation of the theorem prover implementation}
In order to evaluate the practical usefulness of the translation we studied various example conjectures about function composition on sets and category theory. 
We considered 5 further lemmas based on the theory in Ex.~\ref{ex:cat} which are written directly in TPTP and can all be proven by E, Vampire and cvc5. 
We also studies various harder lemmas about function compositions and category theory. Those examples are written in \mmt and take advantage of abbreviation and user-defined notations which often use implicit arguments (inferred by the prover), significantly improving their readability. 

The examples can be found at  \url{https://gl.mathhub.info/MMT/LATIN2/-/blob/devel/source/casestudies/2023-cade}. The \mmt prover successfully type-checks all problems and translates them into TPTP problems to be solved by HOL ATPs. 

Since LEO-III can solve none of the 6 function composition examples, we also tested other HOL ATPs on the generated TPTP problems.  
Running all HOL ATP provers supported at \url{https://www.tptp.org/cgi-bin/SystemOnTPTP} on the function composition problems yields that many provers can solve 3 of the problems, Vampire can solve 4 of them and 5 out of the 6 conjectures can be solved by at least one HOL ATP.

We also studied 6 lemmas about limits in category theory including the uniqueness, commutativity and associativity of limits.
To better evaluate the usefulness of the translation we also formalized these lemmas in native HOL (in \mmt). 
The DHOL formalization is significantly more readable and benefits from the more expressive type system that can help spot mistakes in the formalization.

Running the HOL ATPs from \url{https://www.tptp.org/cgi-bin/SystemOnTPTP} on the generated TPTP problems (with 60 second timeout) yields the results summarized in below table (we omit results for Isabelle, Isabelle-HOT, Nitpick, TPS and Vampire-FMO which proved none of the lemmas in their DHOL nor their native HOL formalizations): 

\begin{figure}
	\begin{tabular}{|c|cc|cc|cc|}
		\hline 
		\multirow{2}{*}{HOL ATP} & \multicolumn{2}{c|}{lemma 1 proven} & \multicolumn{2}{c|}{lemma 2 proven} & \multicolumn{2}{c|}{lemma 3 proven} \\
		\cline{2-7}
		 & DHOL & native HOL & DHOL & native HOL & DHOL & native HOL \\ 
		\hline 
		agsyHOL & yes & no & no & no & yes & no \\ 
		\hline 
		cocATP & yes & no & no & no & no & no \\ 
		\hline 
		cvc5 & yes & yes & no & no & yes & no \\ 
		\hline 
		cvc5-SAT & yes & no & no & no & no & no \\ 
		\hline 
		E & yes & yes & no & no & no & yes \\ 
		\hline 
		HOLyHammer & yes & yes & no & no & yes & yes \\ 
		\hline 
		Lash & yes & yes & no & no & no & no \\ 
		\hline 
		LEO-II & yes & no & no & no & no & no \\ 
		\hline 
		Leo-III & yes & yes & no & no & no & no \\ 
		\hline 
		Leo-III-SAT & yes & yes & no & no & no & no \\
		\hline 
		Satallax & yes & yes & no & no & yes & no \\ 
		\hline 
		Vampire & yes & yes & no & no & no & yes \\  
		\hline 
		Zipperpin & yes & yes & no & no & yes & yes \\ 
		\hline 
		total & 13 & 9 & 0 & 0 & 5 & 4 \\ 
		\hline 
	\end{tabular}
	\ \\\ \\\ \\\begin{tabular}{|c|cc|cc|cc|}
		\hline 
		\multirow{2}{*}{HOL ATP} & \multicolumn{2}{c|}{lemma 4 proven} & \multicolumn{2}{c|}{lemma 5 proven} & \multicolumn{2}{c|}{lemma 6 proven} \\
		\cline{2-7}
		& DHOL & native HOL & DHOL & native HOL & DHOL & native HOL \\ 
		\hline 
		agsyHOL & no & no & no & no & no & no \\ 
		\hline 
		cocATP & no & no & no & no & no & no \\ 
		\hline 
		cvc5 & no & yes & no & no & no & no \\ 
		\hline 
		cvc5-SAT & no & no & no & no & no & no \\ 
		\hline 
		E & no & yes & no & yes & no & yes \\ 
		\hline 
		HOLyHammer & no & yes & no & no & no & yes \\ 
		\hline 
		Lash & no & no & no & no & no & no \\ 
		\hline 
		LEO-II & no & no & no & no & no & no \\ 
		\hline 
		Leo-III & no & no & no & no & no & no \\ 
		\hline 
		Leo-III-SAT & no & no & no & no & no & no \\
		\hline 
		Satallax & no & no & yes & no & no & no \\ 
		\hline 
		Vampire & no & yes & no & yes & no & yes \\  
		\hline 
		Zipperpin & no & yes & yes & yes & no & yes \\ 
		\hline 
		total & 0 & 5 & 2 & 3 & 0 & 4 \\ 
		\hline 
	\end{tabular}
\end{figure}
Overall more problems generated from the native HOL formalization can be solved by some HOL ATP (5/6 compared to 3/6 for the DHOL formalization). 
But analyzing the test results for individual provers, reveals that in 8 cases a DHOL conjecture but not it's native HOL version can be proven by a HOL ATP and in 13 cases the converse, indicating that both DHOL and native HOL formalizations have different advantages. Overall the HOL ATPs found 25 successful proofs for the native HOL problems and 20 for the DHOL problems.
This suggests that current HOL ATPs can prove native HOL problems somewhat better than their translated DHOL counterparts, but not much better. 

Furthermore, our translation has so far been engineered for generality and soundness/completeness and not for ATP efficiency.
Indeed, future work has multiple options to boost the ATP performance on translated DHOL, e.g., by
\begin{itemize}
	\item developing sufficient criteria for when simpler HOL theories can be produced
	\item inserting lemmas into the translated theories that guide proof search in ATPs, e.g., to speed up equality reasoning
	\item adding definitions to translated DHOL problems and developing better criteria when to expand them
\end{itemize}

Thus, we consider the test results to be very promising.
In particular, the translation could serve as a useful basis for type-checkers and hammer tools for DHOL ITPs.

\section{Conclusion and Future Work}\label{sec:conc}
We have combined two features of standard languages, higher-order logic HOL and dependent type theory DTT, thereby obtaining the new dependently-typed higher-order logic DHOL.
Contrary to HOL, DHOL allows for \emph{dependent} function types.
Contrary to DTT, DHOL retains the simplicity of classical Booleans and standard equality.

On the downside, we have to accept that DHOL, unlike both HOL and DTT, has an undecidable type system. 
Further work will show how big this disadvantage weighs in practical theorem proving applications.
But we anticipate that the drawback is manageable, especially if, as in our case, an implementation of DHOL is coupled tightly with a strong ATP system.
We accomplish this with a sound and complete translation from DHOL into HOL that enables using existing HOL ATPs to discharge the proof obligations that come up during type-checking.
We have implemented our novel translation as a TPTP-to-TPTP preprocessor for HOL ATP systems and outlined the implementation of a type-checker and hammer tool for DHOL based on the resulting prover.

Moreover, once this design is in place, it opens up the possibility to add certain type constructors to DHOL that are often requested by users but difficult to provide for system developers because they automatically make typing undecidable.
We have shown an extension of DHOL with predicate subtypes as an example.
Quotients, partial functions, or fixed-length lists are other examples that can be supported in future work.

We expect our translation remains sound and complete if DHOL is extended with other features underlying common HOL systems such as built-in types for numbers, the axiom of infinity, or the subtype definition principle.
How to extend DHOL with a choice operator remains a question for future work --- if solved, this would allow extending existing HOL ITPs to DHOL.

\bibliographystyle{splncs04}
\bibliography{biblio.bib}

\newpage
\begin{appendix}
	\renewcommand{\rulelabel}[2]{\rulelabelAppendix{#1}{#2}}
\renewcommand{\namedRule}[3]{\rul[\text{#1}]{#3}{#2}}
\renewcommand{\rnamedRule}[4]{\rul[\text{#1}]{#3}{#2}\rulelabel{#4}{#1}}
\renewcommand{\snamedRule}[3]{\rul[\text{#1}]{#3}{#2}\rulelabel{#1}{#1}}
\setbool{inAppendix}{true}
\newpage

\section{Summary of logics and translations}\label{sec:logics}
In this section we collect the inference rules of the logics and the definition of the overall translation.
We name the rules and enumerate the cases in the definition of the translation for reference in the proofs in the subsequent appendices.
We use \dhol{} and \dphol{} to refer to the variants without and with predicate subtypes.

\subsection{\hol{} rules}

\subsection{Modified rules for \dhol{} without predicate subtypes}
\begin{figure}[H]
	{\small
	\[
	\snamedRule{thyType'}{\lowlight{\ded}\Thy{\concatThy{\lowlight{T}}{\lowlight{a:}\piType{x_1}{A_1}\ldots\piType{x_n}{A_n}\lowlight{\type}}}}
	{\dedT \Ctx{x_1:A_1,\,\ldots,x_n:A_n}}
	\]
	\[
	\snamedRule{type'}{\lowlight{\Gamma \dedT a}\ t_1\ \ldots\ t_n\;\lowlight{\type}}
	{\lowlight{a:}\piType{x_1}{A_1}\ldots\piType{x_n}{A_n}\lowlight{\type \text{ in }T}\ \; \lowlight{\dedT \Ctx{\Gamma}}\ \;
	\Gamma \dedT t_1:A_1 \ \;\ldots\ \; \Gamma \dedT t_n:\subst{A_n}{x_1}{t_1}\ldots\substOp{x_{n-1}}{t_{n-1}}}
	\]
	\[	
	\snamedRule{pi}{\lowlight{\Gamma\dedT} \piType{x}{A} B\ \lowlight{\type}}{\lowlight{\Gamma\dedT \Type{A} \quad }\concatCtx{\Gamma}{x:A}\lowlight{\dedT \Type{B}}}\tb \rnamedRule{cong$\Pi$}{\lowlight{\Gamma\dedT}\piType{x}{A} B\lowlight{\typeEquals} \piType{x}{A'}B'}{\lowlight{\Gamma\dedT A\typeEquals A' \quad  \Gamma,x:A\dedT B\typeEquals B'}}{congPi}
	\]
	\[
	\snamedRule{congBase'}{\lowlight{\Gamma\dedT a}\ s_1\ \ldots\ s_n\lowlight{\typeEquals} \lowlight{a}\ t_1\ \ldots t_n}
	{\lowlight{a:}\piType{x_1}{A_1}\ldots\piType{x_n}{A_n}\lowlight{\type\text{ in }T}\ \;\dedT \Ctx{\Gamma}\ \;
	\Gamma \dedT s_1\termEquals{A_1}t_1 \ \ldots\ \Gamma \dedT s_n\termEquals{\subst{A_n}{x_1}{t_1}\ldots\substOp{x_{i-1}}{t_{i-1}}}t_n}
	\]
	\[
	\snamedRule{lambda'}{\lowlight{\Gamma\dedT (\lambdaFun{x}{A} t):} \piType{x}{A} B}{\lowlight{\concatCtx{\Gamma}{x:A}\dedT t:B}}\tb
	\snamedRule{appl'}{\lowlight{\Gamma\dedT f\,t:\subst{B}{x}{t}}}{\lowlight{\Gamma\dedT f:}\piType{x}{A} B \tb \Gamma\dedT t:A}\]
	\[	
	\rnamedRule{$\impl$type'}{\lowlight{\Gamma\dedT F\Rightarrow G:\bool}}{\lowlight{\Gamma\dedT F:\bool}\tb \concatCtx{\Gamma}{\namedass{ass}{F}}\lowlight{\dedT G:\bool}}{implType'}
	\]
	\[
	\snamedRule{congAppl'}{\lowlight{\Gamma\dedT f\ t\termEquals{B} f'\ t'}}{\lowlight{\Gamma\dedT t\termEquals{A} t'}\quad \lowlight{\Gamma\dedT f}\termEquals{\piType{x}{A} B} \lowlight{f'}}
	\quad	
	\rnamedRule{cong$\lambda$'}{\lowlight{\Gamma\dedT \lambdaFun{x}{A} t}\termEquals{\piType{x}{A}B} \lowlight{\lambdaFun{x}{A'} t'}}{\lowlight{\Gamma\dedT A\typeEquals A' \ \;\concatCtx{\Gamma}{x:A} \dedT t\termEquals{B} t'}}{congLam'}
	\]
	\[	
	\snamedRule{etaPi}{\lowlight{\Gamma\dedT t}\termEquals{\piType{x}{A} B} \lowlight{\lambdaFun{x}{A}t\ x}}{\lowlight{\Gamma\dedT t:}\piType{x}{A} B}
	\]
	}
\end{figure}

\subsection{Modified rules for \dphol{} with predicate subtypes}
	\[
	\rnamedRule{const'}{\lowlight{c:A}}{\lowlight{c:}A'\lowlight{\thyIn{}{T}}\tb \lowlight{\Gamma\dedT A'}\subtyping \lowlight{A}}{const''}\tb
	\rnamedRule{var'}{\lowlight{x:A}}{\lowlight{x:}A'\lowlight{\ctxIn{}{\Gamma}}\tb \lowlight{\Gamma\dedT A'}\subtyping \lowlight{A}}{var''}
	\]
	\[
	\rnamedRule{$\subtyping\!\!$Pi}{\Gamma\dedT \piType{x}{A}B\subtyping \piType{x}{A'}B'}{\Gamma\dedT A'\subtyping A\tb \concatCtx{\Gamma}{x:A'}\dedT B\subtyping B'}{subtPi}
	\]
	\[
	\rnamedRule{$\subtyping\!\! \subtype{}{p}$}{\Gamma\dedT \subtype{A}{p}\subtyping \subtype{A'}{p'}}{\Gamma\dedT A\subtyping A'\tb \concatCtx{\Gamma}{x:A}\dedT p\ x\impl p'\ x}{subtPsubCong}\tb
	\rnamedRule{$\subtyping\!\! $Top}{\Gamma\dedT \subtype{A}{p}\subtyping A'}{\Gamma\dedT A\subtyping A'}{subtPsub}\]
	\[
	\rnamedRule{$\subtyping\!\!$I}{\Gamma\dedT A\subtyping A'}{\Gamma\dedT A\typeEquals A'}{subtI}\tb
	\rnamedRule{$\subtype{}{triv}$}{\Gamma\dedT A\typeEquals \subtype{A}{\lambdaFun{x}{A}\T}}{\Gamma\dedT\Type{A}}{psubTriv}\tb
	\rnamedRule{$\subtype{}{triv}'$}{\Gamma\dedT \subtype{A}{\lambdaFun{x}{A}\T}\typeEquals A}{\Gamma\dedT\Type{A}}{psubTriv'}\]
	\[
	\rnamedRule{$\pi \subtype{}{p}$Cod}{\Gamma\dedT\piType{x}{A}\subtype{B}{p}\typeEquals \subtype{\piType{x}{A}B}{\lambdaFun{f}{\piType{x}{A}B}\univQuant{x}{A}p\ \left(f\ x\right)}}{\Gamma\dedT \Type{A}\tb \concatCtx{\Gamma}{x:A}\dedT \Type{B}\tb \concatCtx{\Gamma}{x:A}\dedT p:\piType{y}{B}\bool}{piPsubCod}\]
	\[
	\rnamedRule{$\subtype{}{p}\subtype{}{q}$}{\Gamma\dedT\subtype{\subtype{A}{p}}{q}\typeEquals \subtype{A}{\lambdaFun{x}{A}p\ x\land q\ x}}{\Gamma\dedT \Type{A}\tb \Gamma\dedT p:\piType{x}{A}\bool\tb \Gamma\dedT q:\piType{x}{A}\bool}{psubQsub}
	\]
	\[\begin{array}{l}
	\rnamedRule{$\subtype{}{p}\type$}{\Gamma\dedT \subtype{A}{p}\; \type}{\Gamma\dedT p:\piType{x}{A}\bool}{psubType}\tb
	\rnamedRule{$\subtype{}{p}\!\!\typeEquals$}{\Gamma\dedT \subtype{A}{p} \typeEquals \subtype{A'}{p'}}{\Gamma\dedT A\typeEquals A' \tb  \Gamma\dedT p\termEquals{\piType{x}{A}\bool} p'}{psubEq}
	\end{array}\]
	\[\begin{array}{l}
	\rnamedRule{$\subtype{}{p}$I}{\Gamma\dedT t:\subtype{A}{p}}{\Gamma\dedT t:A \tb  \Gamma\dedT p\ t}{psubI}\tb
	\rnamedRule{$\subtype{}{p}$E}{\Gamma\dedT p\ t}{\Gamma\dedT t:\subtype{A}{p}}{psubE}
	\end{array}\]

\subsection{The translation from \dphol{} into \hol{}}
\begin{definition}[Translation]
	We define a translation from \ifbool{inAppendix}{\dphol{}}{\dhol{}} to \hol{} syntax by induction on the Grammar.
	
	We use the notation $\overrightarrow{x:A}, \overrightarrow{\piType{x}{A}} and \overrightarrow{A}, \overrightarrow{x}$ 
	to denote $x:A_1, \ldots, x_n:A_n$, $\piType{x_1}{A_1}\ldots \piType{x_n}{A_n}$ and $A_1\to\ldots\to A_n$, $x_1\ \ldots\ x_n$ 
	respectively.
	
	The cases for theories and contexts are:
	{\small
		\begin{align*}
		\PhiAppl{\emptyThy}:=&\emptyThy \plabel{PTemptyThy}\\
		\PhiAppl{\concatThy{T}{D}}:=&\concatThy{\PhiAppl{T}}{\PhiAppl{D}}&&\text{where}\\
		\PhiAppl{\;a: \overrightarrow{\piType{x}{A}} \type} :=& a: \type,\\& \PredPhiName{a}: \overrightarrow{\PhiAppl{A}} \to a\to a\to \bool,\\
		& \namedax{a_{trans}}{}\forall \overrightarrow{x\!:\!\PhiAppl{A}}.~ \univQuant{u,v,w}{a}\termEqT{\left(a\ \overrightarrow{x}\right)}{u}{v}\impl \left(\termEqT{\left(a\ \overrightarrow{x}\right)}{v}{w}\impl\termEqT{\left(a\ \overrightarrow{x}\right)}{u}{w}\right),\\
		& \namedax{a_{sym}}{}\forall\overrightarrow{x\!:\!\PhiAppl{A}}.~ \univQuant{u,v}{a}\termEqT{\left(a\ \overrightarrow{x}\right)}{u}{v}\impl\termEqT{\left(a\ \overrightarrow{x}\right)}{v}{u},\\
		&\namedax{a_{PER}}{} \forall\overrightarrow{x\!:\!\PhiAppl{A}}.~ \univQuant{u,v}{a}\PredPhi{\left(a\ \overrightarrow{x}\right)}{v}\impl \termEqT{(a\ \overrightarrow{x})}{u}{v}\termEqB u\termEquals{a}v
		\plabel{PTTpConstr}\\
		\PhiAppl{c:A}:=&c:\PhiAppl{A},\quad \namedax{\typingAxName{c}}{\PredPhi{A}{c}} \plabel{PTtermDecl}\\
		\PhiAppl{\namedax{ax}{F}}:=&\namedass{ax}{\PhiAppl{F}} \plabel{PTax}\\[0.5cm]
		\PhiAppl{\emptyCtx}:=&\emptyCtx \plabel{PTemptyCtx}\\
		\PhiAppl{\Gamma,x:A}:=&\PhiAppl{\Gamma},\;x:\PhiAppl{A},\namedass{\typingAssName{x}}{\PredPhi{A}{x}} \plabel{PTctxVar}\\
		\PhiAppl{\Gamma,\namedass{ass}{F}}:=&\PhiAppl{\Gamma},\;\namedass{ass}{\PhiAppl{F}} \plabel{PTctxAss}
		\end{align*}
	}
	The case of $\PhiAppl A$ and $\termEqT{A}{s}{t}$ for types $A$ are:
	\begin{align*}
	\PhiAppl{(a\ t_1\ \ldots \ t_n)}&:=a \plabel{PTTpAppl}\\
	\termEqT{(a\ t_1\ \ldots \ t_n)}{s}{t}&:=\PredPhiName{a}\ \PhiAppl{t_1}\ \ldots\ \PhiAppl{t_n}\ s\ t \plabel{PTTpPredAppl}\\
	\PhiAppl{\piType{x}{A}B} &:= \PhiAppl{A} \to \PhiAppl{B} \plabel{PTPitype}\\
	\termEqT{(\piType{x}{A}B)}{f}{g} &:= \univQuant{x,y}{\PhiAppl{A}}
	\termEqT{A}{x}{y}\impl \termEqT{B}{\left(f\ x\right)}{\left(g\ y\right)} \plabel{PTPipred}\\
	\PhiAppl{\bool}&:=\bool \plabel{PTTpBool}\\
	\termEqT{\bool}{s}{t}&:=s\termEqB t \plabel{PTTpBoolPred}
	\ifbool{inAppendix}{\\	
	\PhiAppl{\subtype{A}{p}} &:= \PhiAppl{A}\plabel{PTPStype}\\
	\termEqT{\left(\subtype{A}{p}\right)}{s}{t} &:= \termEqT{A}{s}{t}\land \PhiAppl{p}\ s\land \PhiAppl{p}\ t\plabel{PTPSpred}}{}
	\end{align*}
	The cases for terms are:
	\begin{align*}
	\PhiAppl{c} &:= c \plabel{PTmConst}\\
	\PhiAppl{x} &:= x\plabel{PTmVar}\\
	\PhiAppl{\lambdaFun{x}{A} t} &:= \lambdaFun{x}{\PhiAppl{A}} \PhiAppl{t}\plabel{PTLam}\\
	\PhiAppl{f\ t} &:= \PhiAppl{f}\ \PhiAppl{t}\plabel{PTappl}\\
	\PhiAppl{F\Rightarrow G}&:=\PhiAppl{F} \impl \PhiAppl{G}\plabel{PTImpl}\\
	\PhiAppl{s\termEquals{A}t}&:=\termEqT{A}{\PhiAppl{s}}{\PhiAppl{t}}\plabel{PTEq}
	\end{align*}
\end{definition}

\subsection{Admissible rules for HOL}\label{sec:meta-thm:1}

The following lemma collects a few routine meta-theorems that we make use of later on:
\begin{lemma}\label{meta-thm:1}
	Given the inference rules for \hol{} (cfg. Figure~\ref{fig:holrulesAppendix}), the following rules are admissible:
		{\small
		\[
		\snamedRule{ctxThy}{\ded \Thy{T}}{\dedT \Ctx{\Gamma}}\tb
		\snamedRule{tpCtx}{\dedT \Ctx{\Gamma}}{\Gamma\dedT \Type{A}}\tb
		\snamedRule{typingTp}{\Gamma\dedT \Type{A}}{\Gamma\dedT t:A}\tb
		\snamedRule{validTyping}{\Gamma\dedT F:\bool}{\Gamma\dedT F}\tb
		\]
		\[
		\rnamedRule{constS}{\Gamma\dedT c:A}{\thyIn{c:A}{T}}{constS}\tb
		\rnamedRule{varS}{\Gamma\dedT x:A}{\ctxIn{x:A}{\Gamma}}{varS}
		\]
		\[
		\rnamedRule{$\typeEquals$refl}{\Gamma\dedT A\typeEquals A}{\Gamma\dedT \Type{A}}{tpEqRefl}\tb
		\rnamedRule{$\typeEquals$sym}{\Gamma\dedT A'\typeEquals A}{\Gamma\dedT A\typeEquals A'}{tpEqSym}\tb 
		\rnamedRule{$\typeEquals$trans}{\Gamma\dedT A\typeEquals A''}{\Gamma\dedT A\typeEquals A'\tb \Gamma\dedT A'\typeEquals A''}{tpEqTrans}
		\]
		\[
		\snamedRule{eqTyping}{\Gamma\dedT s:A}{\Gamma\dedT s\termEquals{A}t}\tb
		\snamedRule{implTypingL}{\Gamma\dedT F:\bool}{\judg F\impl G}\tb
		\snamedRule{implTypingR}{\Gamma\dedT G:\bool}{\judg F\impl G}\]
		\[
		\snamedRule{typesUnique}{\Gamma\dedT A\typeEquals A'}{\Gamma\dedT s:A\tb \Gamma\dedT s:A'}
		\tb
		\rnamedRule{typingWf}{\Gamma\dedT t:A}{\Gamma\dedT f\ t:B \quad \Gamma\dedT f:A\to B}{typingWellformed}
		\]
		\[
		\snamedRule{applType}{\Gamma\dedT f:A\to B}{\Gamma\dedT t:A\tb \Gamma\dedT f\ t:B}\tb
		\snamedRule{rewriteTyping}{\Gamma\dedT \subst{s}{x}{t}:A}{\concatCtx{\Gamma}{x:B}\dedT s:A\quad \Gamma\dedT t:B}
		\]
		\[
		\rnamedRule{monotonic$\ded$}{\concatCtx{\Gamma}{\namedass{ass}{F}}\dedT G}{\Gamma\dedT F:\bool\quad\Gamma\dedT G}{assDed}
		\quad
		\rnamedRule{var$\ded$}{\concatCtx{\Gamma}{x:A}\dedT J}{\Gamma\dedT \Type{A}\quad\Gamma\dedT J\tb \text{ for any statement }\dedT J}{varDed}
		\]
		\[
		\rnamedRule{$\forall$type}{\Gamma\dedT \univQuant{x}{A}F:\bool}{\concatCtx{\Gamma}{x:A}\dedT F:\bool}{forallType}\tb
		\rnamedRule{$\forall$I}{\Gamma\dedT \univQuant{x}{A}F}{\concatCtx{\Gamma}{x:A}\dedT F}{forallI}\tb
		\rnamedRule{$\forall$E}{\Gamma\dedT \subst{F}{x}{t}}{\Gamma\dedT \univQuant{x}{A}F \quad \Gamma\dedT t:A}{forallE}
		\]
		\[	
		\snamedRule{assTyping}{\Gamma\dedT F:\bool}{\Ctx{\Gamma}\quad \ctxIn{F}{\Gamma}}\tb
		\tb
		\rnamedRule{cong$:$}{\Gamma\dedT t':A'}{\Gamma\dedT t\termEquals{A} t' \quad  \Gamma\dedT A\typeEquals A' \quad  \Gamma\dedT t:A'}{congColon}
		\]
		\[	
		\snamedRule{propExt}{\Gamma\dedT F\termEquals{\bool}G}{\concatCtx{\Gamma}{\namedass{ass}{F}}\dedT G\quad \concatCtx{\Gamma}{\namedass{ass_G}{G}}\dedT F}\tb
		\snamedRule{extensionality}{\Gamma\dedT f\termEquals{A\to B}f'}{\concatCtx{\Gamma}{x:A}\dedT f\ x\termEquals{B} f'\ x}
		\]
		\[\snamedRule{trans}{\Gamma\dedT s\termEquals{A}u}{\Gamma\dedT s\termEquals{A}t\tb \Gamma\dedT t\termEquals{A}u}\tb
		\rnamedRule{$\termEquals{}$cong}{\Gamma\dedT (s\termEquals{A}t) \termEqB (s'\termEquals{A}t')}{\Gamma\dedT s\termEquals{A}s'\tb \Gamma\dedT t\termEquals{A}t'}{termEqCong}\tb
		\]
		\[
		\rnamedRule{$\forall$cong}{\Gamma\dedT\forall x:A.F\termEquals{\bool}\forall x:A'.F'}{\Gamma\dedT A\typeEquals A'\tb \concatCtx{\Gamma}{x:A}\dedT F\termEquals{\bool}F'}{forallCong}\tb
		\rnamedRule{$\impl$cong}{\Gamma\dedT F\Rightarrow G\termEquals{\bool} F'\Rightarrow G'}{\Gamma\dedT F\termEquals{\bool}F'\tb \Gamma\dedT G\termEquals{\bool}G'}{implCong}
		\]
		\[
		\rnamedRule{$\forall\impl$}{\Gamma\dedT\univQuant{x}{A}F\impl \univQuant{x}{A}G}{\concatCtx{\Gamma}{x:A}\dedT F\impl G}{forallImpl}\tb
		\rnamedRule{$\impl$Funct}{\Gamma\dedT (F\impl G)\impl \left(F'\impl G'\right)}{\Gamma\dedT G\impl G'\tb \Gamma\dedT F'\impl F}{implFunctorial}
		\]
		\[
		\rnamedRule{$\ded$cong}{\Gamma\dedT F'}{\Gamma\dedT F\termEquals{\bool} F'\quad \Gamma\dedT F}{dedCong}\tb
		\snamedRule{rewrite}{\Gamma\dedT \subst{F}{x}{t'}}{\Gamma\dedT \subst{F}{x}{t}\tb \Gamma\dedT t\termEquals{A}t'\tb \concatCtx{\Gamma}{x:A}\dedT F:\bool}
		\]
	}
\end{lemma}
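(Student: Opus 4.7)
My plan is to prove each admissible rule in turn, working from the most primitive meta-properties up to the more involved congruence and rewriting rules, since several of the later rules depend on earlier ones. The bulk of the proofs proceed by straightforward induction on derivations or by chaining primitive rules from Figure~\ref{fig:holrulesPaper}.

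First, I would establish the wellformedness propagation rules (\texttt{ctxThy}, \texttt{tpCtx}, \texttt{typingTp}, \texttt{validTyping}, \texttt{typingWf}, \texttt{applType}, \texttt{eqTyping}, \texttt{implTypingL/R}, \texttt{assTyping}) by induction on the derivation of the premise: in each case, the only rule that could have introduced the judgment already has the desired precondition in its own premises or obtains it by a shorter induction. The strengthened lookup rules \texttt{constS} and \texttt{varS} follow because \ruleRef{const} and \ruleRef{var} always allow choosing $A' \teq A$, which by \ruleRef{congBase}/\ruleRef{congTo} reduces to $A$ itself being well-formed; combined with \texttt{tpCtx} this gives the trivial form. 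The equivalence properties of $\typeEquals$ (\texttt{tpEqRefl}, \texttt{tpEqSym}, \texttt{tpEqTrans}) and the symmetric companion to \texttt{trans} for term equality follow by induction on the structure of types and terms, using the primitive \texttt{congBase}/\texttt{cong$\to$} and \texttt{sym}/\texttt{refl} rules; \texttt{typesUnique} and \texttt{congColon} then follow by combining \texttt{var}/\texttt{const}-style lookup with \texttt{$\typeEquals$trans}.

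Next I would handle the derived quantifier rules (\texttt{forallType}, \texttt{forallI}, \texttt{forallE}) by unfolding the definition $\forall x\!:\!A.\,F := (\lambdaFun{x}{A} F) \termEquals{A\to\bool} (\lambdaFun{x}{A} \T)$ and applying the $\eta$, $\beta$, \texttt{cong$\lambda$}, and \texttt{congAppl} rules together with \texttt{boolExt} for the elimination direction. The congruence rules for connectives and binders (\texttt{termEqCong}, \texttt{forallCong}, \texttt{implCong}, \texttt{$\impl$Funct}, \texttt{$\forall\impl$}) then drop out by chaining these together. The monotonicity rules \texttt{assDed} and \texttt{varDed} follow by structural induction on the derivation of the conclusion, noting that the added declaration in the extended context is never actually used. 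The propositional-extensionality rule \texttt{propExt} and pointwise-extensionality rule \texttt{extensionality} are proved by reducing to \texttt{boolExt}: for \texttt{propExt} one does case analysis on the Boolean value of $F$ via $p\ x := (x \impl G) \land (G \impl x)$, and for \texttt{extensionality} one applies \ruleRef{eta} to both sides and uses \texttt{cong$\lambda$}. Finally \texttt{dedCong} is literally \ruleRef{congDed}, and \texttt{rewrite} follows by using \texttt{cong$\lambda$} and \texttt{congAppl} together with \texttt{beta} to reduce $\subst{F}{x}{t}\termEqB \subst{F}{x}{t'}$ and then invoking \texttt{dedCong}.

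The main obstacle will be \texttt{typesUnique} and, more generally, ensuring that the appeals to wellformedness threaded through the other proofs are actually discharged. In HOL these unicity claims are easy because each typing rule has a unique conclusion type up to $\typeEquals$, but one must still carefully inspect how the lookup rules \ruleRef{const}/\ruleRef{var} permit $A' \teq A$ to avoid a circular dependency on \texttt{typesUnique} itself. A clean presentation is to first prove a narrow form restricted to the shapes produced by the primitive rules and then lift it to the full judgment using \texttt{$\typeEquals$trans} and \texttt{$\typeEquals$sym}. All the other items are then essentially bookkeeping on top of the basic rule set, so the lemma reduces to a (long but) routine case analysis, which is why it is relegated to the appendix.
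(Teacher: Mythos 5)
Your overall strategy coincides with the paper's: induction on derivations for the well-formedness propagation rules, unfolding the definition of $\forall$ for the quantifier rules, Leibniz-style congruence arguments (abstract the rewritten position into a $\lambda$, then \ruleRef{congAppl} and \ruleRef{beta}) for the rewriting rules, and Boolean extensionality for \ruleRef{propExt}. However, your construction for \ruleRef{propExt} does not go through as stated. The predicate $p\ x := (x\impl G)\land(G\impl x)$ cannot be fed to \ruleRef{boolExt}: that would require proving $p\ \T$ and $p\ \F$ outright, i.e.\ both $G$ and $G\impl\F$, which is impossible. If instead you only instantiate at $F$, all you obtain is the biconditional $(F\impl G)\land(G\impl F)$ --- which already follows from the two hypotheses by \ruleRef{implI} without any extensionality --- and not the equality $F\termEqB G$. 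Passing from the biconditional to the equality is exactly the content of propositional extensionality, so the proposal is circular at this point. The paper instead applies \ruleRef{boolExt} twice, nested, to the predicate $\lambdaFun{x}{\bool}\univQuant{y}{\bool}(x\impl y)\impl((y\impl x)\impl x\termEqB y)$, whose body contains the equality itself; the diagonal ground cases $\T\termEqB\T$ and $\F\termEqB\F$ are closed by \ruleRef{refl}, and the off-diagonal cases have the refutable antecedent $\T\impl\F$, from which $\T\termEqB\F$ is derived by a further application of \ruleRef{boolExt} to the identity predicate. Some predicate of this shape, mentioning $x\termEqB y$, is indispensable.

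A second, smaller discrepancy: you fold \ruleRef{trans} into ``induction on the structure of types and terms using sym/refl,'' but transitivity of term equality is not proved structurally. The paper derives it by the same Leibniz trick, abstracting $\lambdaFun{z}{A}\,s\termEquals{A}t\impl t\termEquals{A}z\impl s\termEquals{A}z$, proving the instance at $z:=t$, and then using \ruleRef{congAppl} with the hypothesis $t\termEquals{A}u$ to transport to $z:=u$. Everything else in your sketch --- \ruleRef{constS}/\ruleRef{varS} via the triviality of $\typeEquals$ in HOL, \ruleRef{typesUnique} by case analysis on the last typing rule applied, \ruleRef{assDed}/\ruleRef{varDed} by weakening an entire derivation, \ruleRef{extensionality} via \ruleRef{eta} and cong$\lambda$, and \ruleRef{rewrite} via \ruleRef{beta} and \ruleRef{dedCong} --- matches the paper's proof.
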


\begin{proof}[Proof of Lemma~\ref{meta-thm:1}:]\ 
	\paragraph{Regarding \ruleRef{ctxThy}, \ruleRef{tpCtx}, \ruleRef{typingTp} and \ruleRef{validTyping}:}
	We can show these rules easily by induction on the inference rules of \hol{}. 
	In each step of the induction we show that the assumptions of the rule and these four rules holding on the assumptions of the rule implies that these four rules also hold for the conclusion of the rule. 
	In case of a validity rule this means that we need to show that the conclusion is well-typed, in case of a typing rule we need to show that the type of the term in the typing statement in the conclusion is well-formed, in case of a well-formedness rule for types we need to show that the context of the conclusion is well-formed and in case of a well-formedness rule for contexts we need to show that the theory relative to which the conclusion is stated is well-formed.  
	These properties hold by construction of the inference rules (each inference rules makes whatever assumptions are necessary to guarantee that the inductive step in this proof works).
	\paragraph{Regarding \ruleRef{tpEqRefl}, \ruleRef{tpEqSym} and \ruleRef{tpEqTrans}:}
	We can show them by induction on the two rules that allow showing type equality, namely \ruleRef{congBase} and \ruleRef{congTo}. 
	The rule \ruleRef{congBase} only allows showing type equality for identical types (so symmetry is trivial and transitivity reduces to one of the two assumption of the transitivity rule).
	The rule \ruleRef{congTo} only allows showing type equality for $\to$-types $A\to B$ and $A'\to B'$ based on equal types $A\typeEquals A'$ and $B\typeEquals B'$. 
	Now, if the assumption of \ruleRef{tpEqSym} is shown using rule \ruleRef{congTo}, by induction hypothesis we yield that the type equalities $A\typeEquals A'$ and $B\typeEquals B'$ satisfy symmetry and using \ruleRef{congTo} with the swapped type equalities yields $A'\to B'\typeEquals A\to B$ as desired.
	Similarly, if $A\to B\typeEquals A'\to B'$ and $A'\to B'\typeEquals A''\to B''$ are both shown using the rule \ruleRef{congTo} we have $A\typeEquals A'$ and $A'\typeEquals A''$ and hence by induction hypothesis $A\typeEquals A''$ and $B\typeEquals B''$, so rule \ruleRef{congTo} implies $A\to B\typeEquals A''\to B''$ as desired.
	
	\paragraph{Regarding \ruleRef{eqTyping}, \ruleRef{implTypingL} and \ruleRef{implTypingR}:}
	By assumption we have $\judg s\termEquals{A}t$ resp. $\judg F\impl G$.
	By the rules \ruleRef{ctxThy}, \ruleRef{tpCtx}, \ruleRef{typingTp} and \ruleRef{validTyping} it follows that the theory $T$ and context $\Gamma$ must be well-formed. Since the well-formedness of a context with an additional assumption or of a theory with an additional axiom can only be shown by rule \ruleRef{ctxAssume} and \ruleRef{thyAxiom} respectively it follows that axioms in $T$ and assumptions in $\Gamma$ must be well-typed and that the proof of the axiom or assumption being well-typed must not use that assumption or axiom. 
	We can then prove by induction on derivations that whenever $\judg s\termEquals{A}t$ is concluded in a step then the assumption of the step imply that $\judg s:A$ and $\judg t:A$ are derivable and whenever $\judg F\impl G$ is concluded in a step $\judg F:\bool$ and $\judg G:\bool$ are derivable from the assumptions of the step and furthermore whenever we conclude a validity statement, the formula is well-typed and in all conclusions in the derivation all types and contexts are well-formed.  
	
	The the claim follows directly from using what we just proved for the step showing the assumption $\judg s\termEquals{A}t$ resp. $\judg F\impl G$.
	
	\paragraph{Regarding \ruleRef{congColon}:}
	This follows from the fact that provably equal types in \hol{} are necessarily identical, so whenever we have $A\typeEquals A'$ and $t:A$, we already have that $t:A'$ holds by assumption. 
	We can show that provably equal types are identical by induction on the two type-equality rules in \hol{} namely \ruleRef{congBase} and \ruleRef{congTo}. 
	
	\paragraph{Regarding \ruleRef{varS} and \ruleRef{constS}:}
	These follow from the fact identical types are equal (by rule \ruleRef{congBase}) and the rule \ruleRef{var} respectively \ruleRef{const}.
	
	\paragraph{Regarding \ruleRef{typesUnique}:}
	There is no rule allowing to show that a well-typed Boolean term has any other type than $\bool$ (since we only allow validity but not type equality or typing assumptions). Hence $C, C'$ are not $\bool$ and $s$ not of type $\bool$. Continue by induction on the shape of $s$. 
	
	Depending on the shape of $s$ at most one of the two assumption (wlog. (renaming) assume that it is $\Gamma\dedT s:C$) can be concluded using one of the rules \ruleRef{const} or \ruleRef{var}. The other must be shown using on of the rules \ruleRef{lambda}, \ruleRef{appl} (since names of context variables and constants are mutually distinct). 
	
	In the case that rule \ruleRef{lambda} was used to show $s:C'$ (with $C'=A\to B'$ and $s=\lambdaFun{x}{A}t$) it follows from the shape of $s$ that also $C=A\to B$ for some type $B$ with $\concatCtx{\Gamma}{x:A}\dedT t:B$ (as also $s:C$ must be proven using rule \ruleRef{lambda}). By the induction hypothesis it follows that then $\concatCtx{\Gamma}{x:A}\dedT B\typeEquals B'$. 
	Since equal types in \hol{} must be identical (see proof of rule \ruleRef{congColon}), it follows that $\Gamma\dedT B\typeEquals B'$.
	Rule \ruleRef{congTo} then implies the claim of $\Gamma\dedT C\typeEquals C'$.
	
	In the case that rule \ruleRef{appl} was used to show $s:C'$ (with $C'=B'$ and $s=f\ t$ and $\Gamma\dedT f:A'\to B'$ and $\Gamma\dedT t:A'$) it follows from the shape of $s$ that also $C=B$ for some type $B$ with $\Gamma\dedT f:A\to B$ and $\Gamma\dedT t:A$. 
	By the induction hypothesis (applied to $f$ and $t$) it follows that $\Gamma\dedT A\to B\typeEquals A'\to B'$ and $\Gamma\dedT A\typeEquals A'$. 
	Since equal types in \hol{} are identical it follows that also $\Gamma\dedT B\typeEquals B'$ holds, so by assumption the conclusion holds.
	
	\paragraph{Regarding \ruleRef{typingWellformed}:}
	The first assumption can only be proven using rule \ruleRef{appl}. 
	In the first case the conclusion is an assumption of the rule used to prove $\Gamma\dedT f\ t:B$ and must therefore hold.
	
	\paragraph{Regarding \ruleRef{assDed} and \ruleRef{varDed}:}
	The idea for showing this is observing that adding additional variables or assumptions to the contexts occuring in a derivation will always result in another (equally valid) derivation. 
	This can be shown by induction on derivations. 
	If in a derivation some rule is used to conclude $\Gamma\dedT F$ and adding additional variables and assumptions to $\Gamma$ as well as the contexts of all judgements appearing in those derivations leaves those derivations valid, then we need to show that also the assumptions of the rule are still satisfied. Since we have valid derivations for all assumptions that are judgements themselves, we only need to check assumptions of rules that are not judgements. 
	The only assumptions of this kind that appear in inference rules are assumptions that some variable or context assumption is contained in a context. Adding further assumptions or variables will not make this false if is was true initially. 
	
	This proves that \ruleRef{assDed} and \ruleRef{varDed} hold.
	
	\paragraph{Regarding \ruleRef{rewriteTyping}:}
	This can be seen by applying the substitution $\subst{\cdot}{x}{t}$ to the terms in the derivation (but not the variable declaration $x$ in the context itself) of $\concatCtx{\Gamma}{x:B}\dedT s:A$, using the fact that $\Gamma\dedT t:B$ holds instead of rule \ruleRef{var} (that may be used to conclude $\concatCtx{\Gamma}{x:B}\dedT x:B$ in the original declaration). 
	This leads to a derivation of the conclusion $\concatCtx{\Gamma}{x:A}\dedT \subst{s}{x}{t}:A$. 
	Since $x$ doesn't appear in the derivation, removing the variable declaration $x:A$ from the context leads to a valid derivation of $\Gamma\dedT \subst{s}{x}{t}:A$ as desired. 
	
	\paragraph{Regarding \ruleRef{assTyping}:}
	\begin{align}
	\NDLine{}{\Ctx{\Gamma}}{by assumption}\label{astp1}\\
	\NDLine{}{\ctxIn{F}{\Gamma}}{by assumption}\label{astp2}\\
	\NDLine{\Gamma}{F}{\ruleRef{assume},(\ref{astp2}),(\ref{astp1})}\label{astp3}\\
	\NDLine{\Gamma}{F:\bool}{\ruleRef{validTyping},(\ref{astp3})}
	\end{align}
	
	\paragraph{Regarding \ruleRef{forallType}:}\ 
	\begin{align}
	\NDLineT{\concatCtx{\Gamma}{x:A}}{F:\bool}{\byAss}\label{univTp1}\\
	\NDLineTG{\lambdaFun{x}{A}F:A\to\bool}{\ruleRef{lambda},(\ref{univTp1})}\label{univTp2}\\				
	\NDLineT{\concatCtx{\Gamma}{x:A}}{\T:\bool}{by definition}\label{univTp3}\\
	\NDLineTG{\lambdaFun{x}{A}\T:A\to\bool}{\ruleRef{lambda},(\ref{univTp3})}\label{univTp4}\\
	\NDLineTG{\lambdaFun{x}{A}F\termEqB \lambdaFun{x}{A}\T:\bool}{\ruleRef{eqType},(\ref{univTp2}),(\ref{univTp4})}\label{univTp5}\\
	\NDLineTG{\univQuant{x}{A}F:\bool}{by definition,(\ref{univTp5})}
	\end{align}
	
	\paragraph{Regarding \ruleRef{forallE}:}\ 
	\begin{align}
	\NDLineTG{\univQuant{x}{A}F}{\byAss}\label{univE1}\\
	\NDLineTG{t:A}{\byAss}\label{univE2}\\
	\NDLineTG{\lambdaFun{x}{A}F\termEquals{A\to \bool}\lambdaFun{x}{A}\T}{by definition,(\ref{univE1})}\label{univE3}\\
	\NDLineTG{t\termEquals{A}t}{\ruleRef{refl},(\ref{univE2})}\label{univE4}\\
	\NDLineTG{\big(\lambdaFun{x}{A}F\big)\ t\termEqB \big(\lambdaFun{x}{A}\T\big)\ t}{\ruleRef{congAppl},(\ref{univE3}),(\ref{univE4})}\label{univE5}\\
	\NDLineTG{\big(\lambdaFun{x}{A}\T\big)\ t\termEqB \big(\lambdaFun{x}{A}F\big)\ t}{\ruleRef{sym},(\ref{univE5})}\label{univE5a}\\
	\NDLineTG{\big(\lambdaFun{x}{A}F\big)\ t:\bool}{\ruleRef{eqTyping},(\ref{univE5})}\label{univE6}\\
	\NDLineTG{\big(\lambdaFun{x}{A}\T\big)\ t:\bool}{\ruleRef{eqTyping},(\ref{univE5a})}\label{univE7}\\
	\NDLineTG{\big(\lambdaFun{x}{A}F\big)\ t\termEqB \subst{F}{x}{t}}{\ruleRef{beta},(univE6)}\label{univE8}\\
	\NDLineTG{\big(\lambdaFun{x}{A}\T\big)\ t\termEqB \T}{\ruleRef{beta},(univE6)}\label{univE9}\\
	\NDLineTG{\subst{F}{x}{t}\termEqB \big(\lambdaFun{x}{A}\T\big)\ t}{\ruleRef{rewrite},(\ref{univE5}),(\ref{univE8})}\label{univE10}\\
	\NDLineTG{\subst{F}{x}{t}\termEqB \T}{\ruleRef{rewrite},(\ref{univE10}),(\ref{univE9})}\\
	\NDLineTG{\T}{\ruleRef{refl},definition}\label{trueValidUnivE}\\
	\NDLineTG{\subst{F}{x}{t}}{\ruleRef{congDed},(\ref{univE9}),(\ref{trueValidUnivE})}
	\end{align}
	
	\paragraph{Regarding \ruleRef{propExt}:}
	\newcommand{\p}{\lambdaFun{x}{\bool} \univQuant{y}{\bool} (x\impl y) \impl ((y\impl x)\impl x\termEquals{\bool}y)}
	\newcommand{\pF}{\univQuant{y}{\bool} (F\impl y) \impl ((y\impl F)\impl F\termEquals{\bool}y)}	
	\newcommand{\pFl}{\univQuant{y}{\bool} (\F\impl y) \impl ((y\impl \F)\impl \F\termEquals{\bool}y)}	
	\newcommand{\pTr}{\univQuant{y}{\bool} (\T\impl y) \impl ((y\impl \T)\impl \T\termEquals{\bool}y)}
	\newcommand{\pTT}{(\T\impl \T) \impl ((\T\impl \T)\impl \T\termEquals{\bool}\T)}
	\newcommand{\pTF}{(\T\impl \F) \impl ((\F\impl \T)\impl \T\termEquals{\bool}\F)}
	\newcommand{\pFT}{(\F\impl \T) \impl ((\T\impl \F)\impl \F\termEquals{\bool}\T)}
	\newcommand{\pFF}{(\F\impl \F) \impl ((\F\impl \F)\impl \F\termEquals{\bool}\F)}
	\small{\begin{align}
		\concatCtx{\Gamma}{&\namedass{ass_F}{F}}\dedT G && \text{by assumption}\label{pe1}\\
		\concatCtx{\Gamma}{&\namedass{ass_F}{F}}\dedT G:\bool && \text{\ruleRef{validTyping},(\ref{pe1})}\label{pe1a}\\
		\NDLineT{}{\Ctx{\concatCtx{\Gamma}{\namedass{ass}{F}}}}{\ruleRef{tpCtx},(\ref{pe1a})}\label{pe1b}\\
		\NDLineTG{F:\bool}{\ruleRef{assTyping},(\ref{pe1b})}\label{FTp}\\				
		\concatCtx{\Gamma}{&\namedass{ass_G}{G}}\dedT F&&\text{by assumption}\label{pe2}\\
		\concatCtx{\Gamma}{&\namedass{ass_G}{G}}\dedT F:\bool && \text{\ruleRef{validTyping},(\ref{pe1})}\label{pe2a}\\
		\NDLineT{}{\Ctx{\concatCtx{\Gamma}{\namedass{ass_G}{G}}}}{\ruleRef{tpCtx},(\ref{pe2a})}\label{pe2b}\\
		\NDLineTG{G:\bool}{\ruleRef{assTyping},(\ref{pe2b})}\label{GTp}
		\\
		\NDLineTG{\T:\bool}{by definition}\label{eqnTTp}\\
		\NDLineTG{\F:\bool}{by definition}\label{eqnFTp}\\
		\NDLineTG{F\impl G}{\ruleRef{implI}, (\ref{FTp}),(\ref{pe1})}\label{pe10a}\\
		\NDLineTG{G\impl F}{\ruleRef{implI}, (\ref{GTp}),(\ref{pe2})}\label{pe20a}
		\\
		\NDLineTG{\T\termEqB \T}{\ruleRef{refl},(\ref{eqnTTp})}\label{eqn6T}\\
		\NDLineTG{\T\impl \T:\bool}{\ruleRef{implType},(\ref{eqnTTp}),(\ref{eqnTTp})}\label{pexImxTpT}\\
		\concatCtx{\Gamma&}{\namedass{ass1}{\T\impl \T}}\dedT \T\termEqB \T&&\text{\ruleRef{assDed},(\ref{pexImxTpT}), (\ref{eqn6T})}\label{eqn7T}\\
		\NDLineTG{(\T\impl \T) \impl \T\termEqB \T}{\ruleRef{implI},(\ref{pexImxTpT}), (\ref{eqn7T})}\label{eqn8T}\\
		\concatCtx{\Gamma&}{\namedass{ass1}{\T\impl \T}}\dedT (\T\impl \T) \impl \T\termEqB \T&&\text{\ruleRef{assDed},(\ref{pexImxTpT}),(\ref{eqn8T})}\label{eqn9T}\\
		\NDLineTG{(\T\impl \T) \impl ((\T\impl \T) \impl \T\termEqB \T)}{\ruleRef{implI},(\ref{pexImxTpT}), (\ref{eqn9T})}\label{eqnpTT}
		\intertext{Analogously we can show:}				
		\NDLineTG{(\F\impl \F) \impl ((\F\impl \F) \impl \F\termEqB \F)}{analogous to  (\ref{eqnpTT})}\label{eqnpFF}\\	
		\NDLineTG{\T\impl\F:\bool}{\ruleRef{implType},(\ref{eqnTTp}),(\ref{eqnFTp})}\label{eqnTimFTp}\\
		\NDLineT{}{\Ctx{\concatCtx{\Gamma}{\namedass{ass2}{\T\impl \F}}}}{\ruleRef{ctxAssume},(\ref{eqnTimFTp})}\label{GTimplFCtx}\\
		\concatCtx{\Gamma}{\namedass{ass2}{&\T}}\impl\F\dedT\T\impl\F:\bool\nonumber\\
		&&&\QQQQNegSp\QQQNegSp\text{\ruleRef{implType},\ruleRef{assDed},(\ref{eqnTimFTp}),(\ref{eqnTTp}),\ruleRef{assDed},(\ref{eqnTimFTp}),(\ref{eqnFTp})}\label{eqnTimFTpC}\\
		\NDLineTG{\F\impl\T:\bool}{\ruleRef{implType},(\ref{eqnFTp}),(\ref{eqnTTp})}\label{eqnFimTTp}\\
		\concatCtx{\Gamma}{&\namedass{ass2}{\F}}\impl\T\dedT\F\impl\T:\bool\nonumber\\
		&&&\QQQQNegSp\QQQNegSp\text{\ruleRef{implType},\ruleRef{assDed},(\ref{eqnTimFTp}),(\ref{eqnFTp}),\ruleRef{assDed},(\ref{eqnTimFTp}),(\ref{eqnTTp})}\label{eqnFimTTpC}\\
		\NDLineTG{\T}{\ruleRef{refl},(\ref{eqnTTp})}\label{eqnT}\\
		\concatCtx{\Gamma}{& \namedass{ass2}{\T\impl\F}}\dedT \T\impl\F&&\text{\ruleRef{assume},(\ref{GTimplFCtx})}\label{eqnTFass}\\
		\concatCtx{\Gamma}{& \namedass{ass2}{\T\impl\F}}\dedT \F&&\text{\ruleRef{implE},(\ref{eqnTFass}),(\ref{eqnT})}\label{eqnBasic}\\
		\concatCtx{\Gamma}{& \namedass{ass2}{\T\impl\F}}\dedT \T&&\text{\ruleRef{assDed},(\ref{eqnTimFTp}),(\ref{eqnT})}\label{eqnTriv}
		\\
		\concatCtx{\concatCtx{\Gamma}{& \namedass{ass2}{\T\impl\F}}}{y:\bool}\dedT y:\bool&&\text{\ruleRef{varS},\ruleRef{ctxVar},(\ref{GTimplFCtx})}\label{reyTp}\\
		\concatCtx{\Gamma}{& \namedass{ass2}{\T\impl\F}}\dedT \lambdaFun{y}{\bool}y:\bool\to\bool &&\text{\ruleRef{lambda},(\ref{reyTp})}\label{eqnIdTp}\\
		\concatCtx{\Gamma}{& \namedass{ass2}{\T\impl\F}}\dedT \T:\bool &&\text{by definition}\label{eqnTTpC}\\
		\concatCtx{\Gamma}{& \namedass{ass2}{\T\impl\F}}\dedT \F:\bool &&\text{by definition}\label{eqnFTpC}\\
		\concatCtx{\Gamma}{& \namedass{ass2}{\T\impl\F}}\dedT \F\termEqB\T:\bool &&\text{\ruleRef{eqType},(\ref{eqnFTpC}),(\ref{eqnTTpC})}\label{eqnFeqTTp}\\
		\concatCtx{\Gamma}{& \namedass{ass2}{\T\impl\F}}\dedT (\lambdaFun{y}{\bool}y)\ \F\termEqB\T:\bool &&\text{\ruleRef{appl},(\ref{eqnIdTp}),(\ref{eqnFeqTTp})}\label{eqnIdFeqTTp}\\
		\concatCtx{\Gamma}{& \namedass{ass2}{\T\impl\F}}\dedT \T\termEqB\F:\bool &&\text{\ruleRef{eqType},(\ref{eqnTTpC}),(\ref{eqnFTpC})}\label{eqnTeqFTp}\\
		\concatCtx{\Gamma}{& \namedass{ass2}{\T\impl\F}}\dedT (\lambdaFun{y}{\bool}y)\ \T\termEqB\F:\bool &&\text{\ruleRef{appl},(\ref{eqnIdTp}),(\ref{eqnTeqFTp})}\label{eqnIdTeqFTp}\\
		\concatCtx{\Gamma}{& \namedass{ass2}{\T\impl\F}}\dedT (\lambdaFun{y}{\bool}y)\ \T:\bool &&\text{\ruleRef{appl},(\ref{eqnIdTp}),(\ref{eqnTTpC})}\label{eqnIdTTp}\\
		\concatCtx{\Gamma}{& \namedass{ass2}{\T\impl\F}}\dedT (\lambdaFun{y}{\bool}y)\ \F:\bool &&\text{\ruleRef{appl},(\ref{eqnIdTp}),(\ref{eqnFTpC})}\label{eqnIdFTp}\\
		\concatCtx{\Gamma}{& \namedass{ass2}{\T\impl\F}}\dedT (\lambdaFun{x}{\bool}x)\ \F\termEqB\F&&\text{\ruleRef{beta},(\ref{eqnIdFTp})}\label{eqnFB}\\
		\concatCtx{\Gamma}{& \namedass{ass2}{\T\impl\F}}\dedT (\lambdaFun{y}{\bool}y)\ \F&&\text{\ruleRef{congDed},(\ref{eqnFB}),(\ref{eqnBasic})}\label{eqnBC}\\ 
		\concatCtx{\Gamma}{& \namedass{ass2}{\T\impl\F}}\dedT (\lambdaFun{x}{\bool}x)\ \T\termEqB\T&&\text{\ruleRef{beta},(\ref{eqnIdTTp})}\label{eqnTB}\\
		\concatCtx{\Gamma}{& \namedass{ass2}{\T\impl\F}}\dedT (\lambdaFun{y}{\bool}y)\ \T&&\text{\ruleRef{congDed},(\ref{eqnTB}),(\ref{eqnTriv})}\label{eqnTrivC}\\ 
		\concatCtx{\Gamma}{& \namedass{ass2}{\T\impl\F}}\dedT \univQuant{x}{\bool}(\lambdaFun{y}{\bool}y)\ x &&\text{\ruleRef{boolExt},(\ref{eqnBC}),(\ref{eqnTrivC})}\label{eqnAny}\\
		\concatCtx{\Gamma}{& \namedass{ass2}{\T\impl\F}}\dedT (\lambdaFun{y}{\bool}y)\ (\T\termEqB\F)&&\text{\ruleRef{forallE},(\ref{eqnAny}),(\ref{eqnTeqFTp})}\label{eqn20}\\
		\concatCtx{\Gamma}{& \namedass{ass2}{\T\impl\F}}\dedT (\lambdaFun{y}{\bool}y)\ (\F\termEqB\T)&&\text{\ruleRef{forallE},(\ref{eqnAny}),(\ref{eqnFeqTTp})}\label{eqn21}\\
		\concatCtx{\Gamma}{& \namedass{ass2}{\T\impl\F}}\dedT (\lambdaFun{y}{\bool}y)\ (\T\termEqB\F)\nonumber\\&\quad\termEqB (\T\termEqB\F)&&\text{\ruleRef{beta},(\ref{eqnIdTeqFTp})}\label{eqn20B}\\
		\concatCtx{\Gamma}{& \namedass{ass2}{\T\impl\F}}\dedT (\lambdaFun{y}{\bool}y)\ (\F\termEqB\T)\nonumber\\&\quad\termEqB (\F\termEqB\T)&&\text{\ruleRef{beta},(\ref{eqnIdFeqTTp})}\label{eqn21B}\\
		\concatCtx{\Gamma}{& \namedass{ass2}{\T\impl\F}}\dedT (\T\termEqB\F)\nonumber\\&\quad\termEqB (\lambdaFun{y}{\bool}y)\ (\T\termEqB\F)&&\text{\ruleRef{sym},(\ref{eqn20B})}\label{eqn20BS}\\
		\concatCtx{\Gamma}{& \namedass{ass2}{\T\impl\F}}\dedT (\F\termEqB\T)\nonumber\\&\quad\termEqB (\lambdaFun{y}{\bool}y)\ (\F\termEqB\T)&&\text{\ruleRef{sym},(\ref{eqn21B})}\label{eqn21BS}\\
		\concatCtx{\Gamma}{& \namedass{ass2}{\T\impl\F}}\dedT \T\termEqB\F&&\text{\ruleRef{congDed},(\ref{eqn20BS}),(\ref{eqn20})}\label{eqn20red}\\
		\concatCtx{\Gamma}{& \namedass{ass2}{\T\impl\F}}\dedT \F\termEqB\T&&\text{\ruleRef{congDed},(\ref{eqn21BS}),(\ref{eqn21})}\label{eqn21red}\\		
		\concatCtx{\concatCtx{\Gamma}{& \namedass{ass2}{\T\impl\F}}}{\F\impl\T}\dedT \T\termEqB\F&&\text{\ruleRef{assDed},(\ref{eqnFimTTpC}),(\ref{eqn20red})}\label{eqn20ext}\\	
		\concatCtx{\Gamma}{& \namedass{ass2}{\T\impl\F}}\dedT (\F\impl\T)\impl(\T\termEqB\F)&&\text{\ruleRef{implI},(\ref{eqnTimFTpC}),(\ref{eqn20ext})}\label{eqn20imp}\\
		\NDLineTG{(\T\impl\F)\impl(\F\termEqB\T)}{\ruleRef{implI},(\ref{eqnFimTTp}),(\ref{eqn21red})}\label{eqn21imp}\\
		\NDLineTG{\pTF}{\ruleRef{implI},(\ref{eqnTimFTp}),(\ref{eqn20imp})}\label{eqnpTF}\\
		\concatCtx{\Gamma}{&\namedass{ass3}{\F}}\impl\T\dedT (\T\impl\F)\impl(\T\termEqB\F)&&\text{\ruleRef{assDed},(\ref{eqnFimTTpC}),(\ref{eqn21imp})}\label{eqn20impA}\\
		\NDLineTG{\pFT}{\ruleRef{implI},(\ref{eqnFimTTpC}),(\ref{eqn20impA})}\label{eqnpFT}\\
		\Gamma\dedT&\pTr\nonumber\\&&&\QQQNegSp\QQNegSp
		\text{\ruleRef{boolExt},(\ref{eqnpTT}),\ruleRef{congDed},\ruleRef{beta},(\ref{eqnpTF})}\label{eqnpTred}\\
		\Gamma\dedT&\pFl\nonumber\\&&&\QQQNegSp\QQNegSp\text{\ruleRef{boolExt},(\ref{eqnpFT}),\ruleRef{congDed},\ruleRef{beta},(\ref{eqnpFF})}\label{eqnpFred}
		\end{align}
		Showing that these terms are all well-typed:
		\begin{align}
		\concatCtx{\Gamma}{&x:\bool}, y:\bool\dedT x:\bool &&\text{\ruleRef{varS},\ruleRef{ctxAssume},\ruleRef{ctxAssume}}\label{peXTp}\\
		\concatCtx{\Gamma}{&x:\bool}, y:\bool\dedT y:\bool &&\text{\ruleRef{varS},\ruleRef{ctxAssume},\ruleRef{ctxAssume}}\label{peYTp}\\
		\concatCtx{\Gamma}{&x:\bool}, y:\bool\dedT (x\impl y):\bool &&\text{\ruleRef{implType},(\ref{peXTp}),(\ref{peYTp})}\label{peXimYTp}\\
		\concatCtx{\Gamma}{&x:\bool}, y:\bool\dedT (y\impl x):\bool &&\text{\ruleRef{implType},(\ref{peYTp}),(\ref{peXTp})}\label{peYimXTp}\\
		\concatCtx{\Gamma}{&x:\bool}, y:\bool\dedT (x\termEqB y):\bool &&\text{\ruleRef{eqType},(\ref{peXTp}),(\ref{peYTp})}\label{peXeqYTp}\\
		\concatCtx{\Gamma}{&x:\bool}, y:\bool\dedT (y\impl x)\impl(x\termEqB y):\bool &&\text{\ruleRef{implType},(\ref{peYimXTp}),(\ref{peXeqYTp})}\label{peYimXimXeqYTp}\\
		\concatCtx{\Gamma}{&x:\bool}, y:\bool\dedT (x\impl y) \impl ((y\impl x)\impl x\termEquals{\bool}y):\bool
		\negSp\negSp\negSp&&\qquad\qquad\text{\ruleRef{implType},(\ref{peXimYTp}),(\ref{peYimXimXeqYTp})}\label{peqbTp}\\
		\concatCtx{\Gamma}{&x:\bool}\dedT \univQuant{y}{\bool} (x\impl y) \impl ((y\impl x)\impl x\termEquals{\bool}y):\bool \QQQNegSp&&\qqquad\quad\text{\ruleRef{forallType},(\ref{peqbTp})}\label{peqTp}\\
		\NDLineTG{\p:\bool\to\bool\QQQNegSp}{\quad\qqquad\ruleRef{lambda},(\ref{peqTp})}\label{pepTp}\\
		\NDLineTG{(\p)\ \T:\bool\QQQNegSp}{\qqquad\quad\ruleRef{appl},(\ref{pepTp}),(\ref{eqnTTp})}\label{eqnPTTp}\\
		\NDLineTG{(\p)\ \F:\bool\QQQNegSp}{\qqquad\quad\ruleRef{appl},(\ref{pepTp}),(\ref{eqnFTp})}\label{eqnPFTp}
		\end{align}
		Beta reduce and conclude claim:
		\begin{align}
		\NDLineTG{(\p)\ \T&&\nonumber\\&\termEqB\pTr}{\ruleRef{beta},(\ref{eqnPTTp})}\label{eqnpTB}\\
		\NDLineTG{(\p)\ \F&&\nonumber\\&\termEqB\pFl}{\ruleRef{beta},(\ref{eqnPFTp})}\label{eqnpFB}\\
		\NDLineTG{(\p)\ \T}{\ruleRef{congDed},(\ref{eqnpTB}),(\ref{eqnpTred})}\label{eqnpT}\\
		\NDLineTG{(\p)\ \F}{\ruleRef{congDed},(\ref{eqnpFB}),(\ref{eqnpFred})}\label{eqnpF}\\
		\NDLineTG{\univQuant{x}{\bool}(\p)\ x\QNegSp}{\qquad\ruleRef{boolExt},(\ref{eqnpT}),(\ref{eqnpF})}\label{eqnGoal}\\
		\NDLineTG{\p\ F}{\ruleRef{forallE},(\ref{eqnGoal}),(\ref{FTp})}\label{eqnGoal2pre}
		\\
		\NDLineTG{\p\ F:\bool}{\ruleRef{validTyping},(\ref{eqnGoal2pre})}\label{eqnGoal2preTp}
		\\
		\NDLineTG{\pF\termEqB\nonumber\\&
			\p\ F}{\ruleRef{beta},(\ref{eqnGoal2preTp})}\label{eqnGoal2pre2}
		\\\NDLineTG{\pF}{\ruleRef{congDed},(\ref{eqnGoal2pre2}),(\ref{eqnGoal2pre})}\label{eqnGoal2}\\
		\NDLineTG{(F\impl G)\impl ((G\impl F)\impl F\termEqB G}{\ruleRef{forallE},(\ref{eqnGoal2}),(\ref{GTp})}\label{eqn2last}\\
		\NDLineTG{(G\impl F)\impl F\termEqB G}{\ruleRef{implE}, (\ref{eqn2last}), (\ref{pe10a})}\label{eqnlast}\\
		\NDLineTG{F\termEqB G}{\ruleRef{implE}, (\ref{eqnlast}), (\ref{pe20a})}
		\end{align}}
	
	\paragraph{Regarding \ruleRef{forallI}:}\ 
	Note that while this proof uses Lemma~\ref{lem:eqTrue}, this is not a problem since we only use those of the rules in this lemma that are shown in the above cases which don't assume \ruleRef{forallI} or Lemma~\ref{lem:eqTrue}.
	\begin{align}
	\NDLineT{\concatCtx{\Gamma}{x:A}}{F}{\byAss}\label{univI1}\\
	\NDLineT{\concatCtx{\Gamma}{x:A}}{F\termEqB\T}{lemma\ \ref{lem:eqTrue},(\ref{univI1})}\label{univI2}\\
	\NDLineTG{A\typeEquals A}{\ruleRef{congBase}}\label{univI3}\\
	\NDLineTG{\lambdaFun{x}{A}F\termEquals{A\to \bool}\lambdaFun{x}{A}\T}{\ruleRef{congLam},(\ref{univI2}),(\ref{univI3})}\label{univI4}\\
	\NDLineTG{\univQuant{x}{A}F}{by definition,(\ref{univI4})}
	\end{align}
	
	\paragraph{Regarding \ruleRef{applType}:}
	\begin{align}
	\NDLineTG{t:A}{by assumption}\label{atass1}\\
	\NDLineTG{f\ t:B}{by assumption}\label{atass2}\\
	\intertext{Since typing of a function application can only be proven by rule \ruleRef{appl}, the assumptions of the rule must be satisfied, so we yield:}
	\NDLineTG{f:A'\to B}{see above}\label{atfTp}\\
	\NDLineTG{t:A'}{see above}\label{attTp}\\
	\NDLineTG{A\typeEquals A'}{\ruleRef{typesUnique},(\ref{attTp})}\label{atAeqAp}\\
	\NDLineTG{A'\typeEquals A}{\ruleRef{tpEqSym},(\ref{atAeqAp})}\label{atApeqA}\\
	\NDLineTG{A'\to B\typeEquals A\to B}{\ruleRef{congTo},(\ref{atApeqA})}\label{atApToBeqAToB}\\
	\NDLineTG{f\termEquals{A'\to B}f}{\ruleRef{refl},(\ref{atfTp})}\label{atfeqf}\\
	\NDLineTG{f:A\to B}{\ruleRef{congColon},(\ref{atfeqf}),(\ref{atApToBeqAToB}),(\ref{atfTp})}
	\end{align}
	
	\paragraph{Regarding \ruleRef{trans}:}
	\renewcommand{\p}{\lambdaFun{x}{A} \lambdaFun{y}{A} \lambdaFun{z}{A} x\termEquals{A}y \impl y\termEquals{A}z \impl x\termEquals{A}z}	
	\newcommand{\ps}{\lambdaFun{y}{A} \lambdaFun{z}{A} s\termEquals{A}y \impl y\termEquals{A}z \impl s\termEquals{A}z}	
	\newcommand{\pst}{\lambdaFun{z}{A} s\termEquals{A}t \impl t\termEquals{A}z \impl s\termEquals{A}z}
	\newcommand{\pstt}{s\termEquals{A}t \impl t\termEquals{A}t \impl s\termEquals{A}t}
	\newcommand{\pstu}{s\termEquals{A}t \impl t\termEquals{A}u \impl s\termEquals{A}u}
	
	\begin{align}
	\NDLineTG{s\termEquals{A}t}{by assumption}\label{trans1}\\
	\NDLineTG{s\termEquals{A}t:\bool}{\ruleRef{validTyping},(\ref{trans1})}\label{transsEqtTp}\\
	\NDLineTG{t\termEquals{A}u}{by assumption}\label{trans2}\\
	\concatCtx{\Gamma}{&\namedass{eq}{s\termEquals{A}t}}\dedT s\termEquals{A}t&&\text{\ruleRef{assume},\ruleRef{ctxAssume},(\ref{transsEqtTp})}\label{trans3}\\
	\concatCtx{\concatCtx{\Gamma}{& \namedass{eq}{s\termEquals{A}t}}}{t\termEquals{A}t}\dedT s\termEquals{A}t&&\text{\ruleRef{assDed},(\ref{transsEqtTp}), (\ref{trans3})}\label{trans4}\\
	\NDLineTG{t\termEquals{A}s}{\ruleRef{sym},(\ref{trans1})}\label{trans1S}\\
	\NDLineTG{t:A}{\ruleRef{eqTyping},(\ref{trans1S})}\label{transtTp}\\
	\NDLineTG{\Type{A}}{\ruleRef{typingTp},(\ref{transtTp})}\label{transATp}\\
	\NDLineTG{u\termEquals{A}t}{\ruleRef{sym},(\ref{trans2})}\label{trans2S}\\
	\NDLineTG{u:A}{\ruleRef{eqTyping},(\ref{trans2S})}\label{transuTp}\\
	\NDLineTG{t\termEquals{A}t:\bool}{\ruleRef{eqType},(\ref{transtTp}),(\ref{transtTp})}\label{transtEqtTp}\\
	\concatCtx{\Gamma}{& \namedass{eq}{s\termEquals{A}t}}\dedT  t\termEquals{A}t:\bool&&\text{\ruleRef{assDed},(\ref{transsEqtTp}), (\ref{transtEqtTp})}\label{transtEqtTpC}\\		
	\NDLineTG{s:A}{\ruleRef{eqTyping},(\ref{trans1})}\label{transsTp}\\
	\concatCtx{\Gamma}{& \namedass{eq}{s\termEquals{A}t}}\dedT t\termEquals{A}t\impl s\termEquals{A}t&&\text{\ruleRef{implI},(\ref{transtEqtTpC}) (\ref{trans4})}\label{trans5}\\
	\NDLineTG{\pstt}{\ruleRef{implI},(\ref{transsEqtTp}), (\ref{trans5})}\label{trans6}
	\\
	\Gamma,& z:A\dedT s:A &&\text{\ruleRef{varDed},(\ref{transATp}),(\ref{transsTp})} \label{transsTpC}\\
	\NDLineT{}{\Ctx{\concatCtx{\Gamma}{z:A}}}{\ruleRef{ctxVar},\ruleRef{tpCtx},(\ref{transsTp}),(\ref{transATp})}\label{transGCCtx}\\
	\Gamma,& z:A\dedT t:A &&\text{\ruleRef{varDed},(\ref{transATp}),(\ref{transtTp})} \label{transtTpC}\\		
	\Gamma,& z:A\dedT z:A &&\text{\ruleRef{varS},(\ref{transGCCtx})} \label{transzTpC}\\
	\Gamma,& z:A\dedT t\termEquals{A}z:\bool &&\text{\ruleRef{implI},(\ref{transtTpC}),(\ref{transzTpC})}\label{transtEqzTpC}\\
	\Gamma,& z:A\dedT s\termEquals{A}z:\bool &&\text{\ruleRef{implI},(\ref{transsTpC}),(\ref{transzTpC})}\label{transsEqzTpC}\\
	\Gamma,& z:A\dedT (t\termEquals{A}z) \impl (s\termEquals{A}z):\bool&&\text{\ruleRef{implI},(\ref{transtEqzTpC}),(\ref{transsEqzTpC})}\label{transtEqzImsEqzTpC}\\
	\Gamma,& z:A\dedT s\termEquals{A}t:\bool &&\text{\ruleRef{implI},(\ref{transsTpC}),(\ref{transtTpC})}\label{transsEqtTpC}\\
	\Gamma,& z:A\dedT (s\termEquals{A}t) \impl ((t\termEquals{A}z) \impl (s\termEquals{A}z)):\bool&&\text{\ruleRef{implType},(\ref{transsEqtTpC}),(\ref{transtEqzImsEqzTpC})} \label{transpbTp}\\
	\NDLineTG{\pst:A\to \bool}{\ruleRef{lambda},(\ref{transpbTp})}\label{transpstTp}\\
	\NDLineTG{(\pst)\ t:\bool}{\ruleRef{appl},(\ref{transpstTp}),(\ref{transtTp})}\label{transpstTTp}\\
	\NDLineTG{(\pst)\ t\nonumber&&\\
		&\termEqB \pstt}{\ruleRef{beta},(\ref{transpstTTp})}\label{trans7}\\
	\NDLineTG{(\pst)\ t}{\ruleRef{congDed}, (\ref{trans6}), (\ref{trans7})}\label{trans8}\\
	\NDLineTG{(\pst)\ t\nonumber&&\\
		&\termEqB(\pst)\ u}{\ruleRef{congAppl}, (\ref{trans1})}\label{trans9}\\
	\NDLineTG{(\pst)\ u\nonumber&&\\
		&\termEqB(\pst)\ t}{\ruleRef{sym}, (\ref{trans9})}\label{trans10}\\
	\NDLineTG{(\pst)\ u}{\ruleRef{congDed}, (\ref{trans10}), (\ref{trans8})}\label{trans11}\\
	\NDLineTG{(\pst)\ t:\bool}{\ruleRef{appl},(\ref{transpstTp}),(\ref{transuTp})}\label{transpstUTp}\\
	\NDLineTG{(\pst)\ u\nonumber&&\\
		&\termEqB\pstu}{\ruleRef{beta},(\ref{transpstUTp})}\label{trans12}\\
	\NDLineTG{\pstu}{\ruleRef{congDed},(\ref{trans12}), (\ref{trans11})}\label{trans13}\\
	\NDLineTG{t\termEquals{A}u \impl s\termEquals{A}u}{\ruleRef{implE},(\ref{trans13}),(\ref{trans1})}\label{trans14}\\
	\NDLineTG{s\termEquals{A}u}{\ruleRef{implE},(\ref{trans14}),(\ref{trans2})}
	\end{align}
	
	\paragraph{Regarding \ruleRef{extensionality}:}
	\begin{align}
	\NDLineT{\Gamma, x:A}{f\ x\termEquals{B}f'\ x}{by assumption}\label{ext1}\\
	\NDLineT{\Gamma, x:A}{f\ x:B}{\ruleRef{eqTyping},(\ref{ext1})}\label{extfxTp}\\
	\NDLineT{\Gamma, x:A}{f'\ x\termEquals{B}f\ x}{\ruleRef{sym},(\ref{ext1})}\label{ext1S}\\
	\NDLineT{\Gamma, x:A}{f'\ x:B}{\ruleRef{eqTyping},(\ref{ext1S})}\label{extfpxTp}\\
	\NDLineT{\Gamma, x:A}{x:A}{\ruleRef{varS},\ruleRef{tpCtx},(\ref{extfxTp})}\label{extXTp}\\
	\NDLineT{\Gamma, x:A}{f:A\to B}{\ruleRef{applType},(\ref{extXTp}),(\ref{extfxTp})}\label{extfTp}\\
	\NDLineT{\Gamma, x:A}{f':A\to B}{\ruleRef{applType},(\ref{extXTp}),(\ref{extfpxTp})}\label{extfpTp}\\
	\NDLineTG{f\termEquals{A\to B}\lambdaFun{x}{A}f\ x}{\ruleRef{eta},(\ref{extfTp})}\label{extEq1a}\\
	\NDLineTG{\lambdaFun{x}{A}f\ x\termEquals{A\to B}\lambdaFun{x}{A}f'\ x}{\ruleRef{congLam}, (\ref{ext1})}\label{extEq2a}\\
	\NDLineTG{f\termEquals{A\to B}\lambdaFun{x}{A}f'\ x}{\ruleRef{trans}, (\ref{extEq1a}), (\ref{extEq2a})}\label{extEq1}\\
	\NDLineTG{f'\termEquals{A\to B}\lambdaFun{x}{A}f'\ x}{\ruleRef{eta},(\ref{extfpTp})}\label{extfpE}\\	
	\NDLineTG{\lambdaFun{x}{A}f'\ x\termEquals{A\to B}f'}{\ruleRef{sym},(\ref{extfpE})}\label{extEq2}\\
	\NDLineTG{f\termEquals{A\to B}f'}{\ruleRef{trans}, (\ref{extEq1}), (\ref{extEq2})}
	\end{align}
	
	\paragraph{Regarding \ruleRef{forallCong}:}
	{\small\begin{align}
		\NDLineTG{A\typeEquals A'}{by assumption}\label{fcass1}
		\intertext{By induction on derivations, it follows that $A, A'$ well-typed}
		\NDLineT{\Gamma, x:A}{F\termEqB F'}{by assumption}\label{fcass2}\\
		\QQNegSp\Gamma\dedT\univQuant{x}{A}F:\bool&&&\text{\QQQQNegSp\ruleRef{eqType},\ruleRef{lambda},definition,\ruleRef{lambda},\ruleRef{eqTyping},(\ref{fcass2})}\label{univTp}\\
		\NDLineT{\concatCtx{\Gamma}{\namedass{uqf}{\univQuant{x}{A}F}}}{\univQuant{x}{A}F}{\ruleRef{assume},\ruleRef{ctxAssume},(\ref{univTp})}\label{fc1}\\
		\NDLineT{\concatCtx{\concatCtx{\Gamma}{\namedass{uqf}{\univQuant{x}{A}F}}}{y:A'}}{\univQuant{x}{A}F}{\ruleRef{varDed},\ruleRef{ctxVar},(\ref{fc1})}\label{fc2}\\		
		\NDLineT{\concatCtx{\concatCtx{\concatCtx{\Gamma}{\namedass{uqf}{\univQuant{x}{A}F}}}{y:A'}}{x':A}}{\univQuant{x}{A}F}{\ruleRef{varDed},\ruleRef{ctxVar},(\ref{fc2})}\label{fcftfp1}\\		
		\NDLineT{\concatCtx{\concatCtx{\concatCtx{\Gamma}{\namedass{uqf}{\univQuant{x}{A}F}}}{y:A'}}{x':A}}{x':A}{\ruleRef{varS},\ruleRef{ctxVar}}\label{fcftfp2}\\
		\NDLineT{\concatCtx{\concatCtx{\concatCtx{\Gamma}{\namedass{uqf}{\univQuant{x}{A}F}}}{y:A'}}{x:A}}{F}{\ruleRef{forallE},(\ref{fcftfp1}),(\ref{fcftfp2})}\label{fcftfp3}\\
		\NDLineT{\concatCtx{\Gamma}{\namedass{uqf}{\univQuant{x}{A}F}}, x:A}{F\termEqB F'}{\ruleRef{assDed},(\ref{fcass2})}\label{fcftfp4}\\
		\NDLineT{\concatCtx{\concatCtx{\concatCtx{\Gamma}{\namedass{uqf}{\univQuant{x}{A}F}}}{y:A'}}{x:A}}{F\termEqB F'}{\ruleRef{varDed},\ruleRef{tpCtx},(\ref{fcftfp2}),(\ref{fcftfp4})}\label{fcftfp5}\\
		\NDLineT{\concatCtx{\concatCtx{\concatCtx{\Gamma}{\namedass{uqf}{\univQuant{x}{A}F}}}{y:A'}}{x:A}}{F'}{\ruleRef{congDed},(\ref{fcftfp5}),(\ref{fcftfp3})}\label{fcftfp6}\\
		\NDLineT{\concatCtx{\concatCtx{\Gamma}{\namedass{uqf}{\univQuant{x}{A}F}}}{y:A'}}{\univQuant{x}{A}F'}{\ruleRef{forallI},(\ref{fcftfp6})}\label{fcftfp7}\\
		\intertext{Now we will prove that $\concatCtx{\concatCtx{\Gamma}{\namedass{uqf}{\univQuant{x}{A}F}}}{y:A'}\dedT y:A$}
		\NDLineT{\concatCtx{\concatCtx{\Gamma}{\namedass{uqf}{\univQuant{x}{A}F}}}{y:A'}}{y:A'}{\ruleRef{varS},\ruleRef{ctxVar}}\label{fc3}\\
		\NDLineT{\concatCtx{\concatCtx{\Gamma}{\namedass{uqf}{\univQuant{x}{A}F}}}{y:A'}}{y\termEquals{A'}y}{\ruleRef{refl},(\ref{fc3})}\label{fc4}\\
		\NDLineT{\concatCtx{\concatCtx{\Gamma}{\namedass{uqf}{\univQuant{x}{A}F}}}{y:A'}}{A\typeEquals A'}{\QQQNegSp\QQNegSp\quad\ruleRef{varDed},\ruleRef{assDed},\ruleRef{ctxAssume},(\ref{univTp}),(\ref{fcass1})}\label{fc5}\\
		\NDLineT{\concatCtx{\concatCtx{\Gamma}{\namedass{uqf}{\univQuant{x}{A}F}}}{y:A'}}{A'\typeEquals A}{\ruleRef{tpEqSym}}\label{fc6}\\
		\NDLineT{\concatCtx{\concatCtx{\Gamma}{\namedass{uqf}{\univQuant{x}{A}F}}}{y:A'}}{y:A}{\ruleRef{congColon},(\ref{fc4}),(\ref{fc6}),(\ref{fc3})}\label{fc7}\\
		\intertext{Substitute $y$ for $x$ in (\ref{fcftfp7}) and move $y$ from context into a $\forall$ binder.}
		\NDLineT{\concatCtx{\concatCtx{\Gamma}{\namedass{uqf}{\univQuant{x}{A}F}}}{y:A'}}{\subst{F'}{x}{y}}{\ruleRef{forallE},(\ref{fcftfp7}),(\ref{fc7})}\label{fc8}\\
		\NDLineT{\concatCtx{\Gamma}{\namedass{uqf}{\univQuant{x}{A}F}}, x:A'}{F'}{renaming $y$ to $x$}\label{fc9}\\
		\NDLineT{\concatCtx{\Gamma}{\namedass{uqf}{\univQuant{x}{A}F}}, x:A'}{x:A'}{\ruleRef{varS},\ruleRef{tpCtx},(\ref{fc7})}\label{fc10}\\
		\NDLineT{\concatCtx{\Gamma}{\namedass{uqf}{\univQuant{x}{A}F}}}{\univQuant{x}{A'}F'}{\ruleRef{forallI},(\ref{fc9}),(\ref{fc10})}\label{fcimpl1}\\
		\intertext{Since term and type equality are both symmetric (and we can use the same trick as above to show that variables of type $A'$ are also of type $A$ and vice versa), we can prove the following formula analogously:}
		\NDLineT{\concatCtx{\Gamma}{ \univQuant{x}{A'}F'}}{\univQuant{x}{A}F}{analogously}\label{fcimpl2}\\
		\NDLineTG{\univQuant{x}{A}F\termEqB\univQuant{x}{A}F'}{\ruleRef{propExt},(\ref{fcimpl1}),(\ref{fcimpl2})}
		\end{align}}

	\paragraph{Regarding \ruleRef{implCong}:}
	\renewcommand{\p}{\lambdaFun{x}{\bool}\lambdaFun{y}{\bool}x\impl y}
	\renewcommand{\pF}{\lambdaFun{y}{\bool}(F\impl y)}
	\newcommand{\pFG}{(F\impl G)}
	\newcommand{\pFp}{\lambdaFun{y}{\bool}(F'\impl y)}
	\newcommand{\pFpGp}{(F'\impl G')}
	
	\begin{align}
	\NDLineTG{F\termEqB F'}{by assumption}\label{rcass1}\\
	\NDLineTG{G\termEqB G'}{by assumption}\label{rcass2}\\
	\intertext{Introduce lambda function in two variables $\lambdaFun{x}{\bool}\lambdaFun{y}{\bool}x\impl y$ show rule for it using \ruleRef{congAppl} and use \ruleRef{beta} and \ruleRef{trans} to conclude the same about $\impl$:}
	\NDLineTG{(\p)&&\nonumber\\&\termEqB (\p)}{\ruleRef{refl},\ruleRef{lambda},\ruleRef{lambda},\ruleRef{implType},\ruleRef{varS},\ruleRef{varS}}\label{rcPeqP}\\
	\NDLineTG{(\p)\ F&&\nonumber\\&\termEqB (\p)\ F'}{\ruleRef{congAppl},(\ref{rcPeqP}),(\ref{rcass1})}\label{rcPFeqPFp}\\
	\NDLineTG{(\p)\ F\ G&&\nonumber\\&\termEqB (\p)\ F'\ G'}{\ruleRef{congAppl},(\ref{rcPFeqPFp}),(\ref{rcass2})}\label{rcPFGeqPFpGp}
	\end{align}
	Now we prove that $\p$ and its applications are well-typed to allow using the rule \ruleRef{beta} for it:
	\begin{align}
	\concatCtx{\Gamma}{&x:\bool}, y:\bool\dedT x:\bool&&\text{\ruleRef{varS},\ruleRef{ctxVar},\ruleRef{ctxVar}}\label{rcxTp}\\
	\concatCtx{\Gamma}{&x:\bool}, y:\bool\dedT y:\bool&&\text{\ruleRef{varS},\ruleRef{ctxVar},\ruleRef{ctxVar}}\label{rcyTp}\\
	\concatCtx{\Gamma}{&x:\bool}, y:\bool\dedT x\impl y:\bool&&\text{\ruleRef{implType},(\ref{rcxTp}),(\ref{rcyTp})}\label{rcXimYTp}\\
	\concatCtx{\Gamma}{&x:\bool}\dedT \lambdaFun{y}{\bool}(x\impl y):\bool\to\bool&&\text{\ruleRef{lambda},(\ref{rcXimYTp})}\label{rclam1Tp}\\
	\NDLineTG{\p:\bool\to\bool\to\bool}{\ruleRef{lambda},(\ref{rclam1Tp})}\label{rcpTp}\\
	\NDLineTG{F:\bool}{\ruleRef{eqTyping},(\ref{rcass1})}\label{rcFTp}\\
	\NDLineTG{G:\bool}{\ruleRef{eqTyping},(\ref{rcass2})}\label{rcGTp}\\
	\NDLineTG{F'\termEqB F}{\ruleRef{sym},(\ref{rcass1})}\label{rcass1S}\\
	\NDLineTG{G'\termEqB G}{\ruleRef{sym},(\ref{rcass2})}\label{rcass2S}\\
	\NDLineTG{F':\bool}{\ruleRef{eqTyping},(\ref{rcass1S})}\label{rcFpTp}\\
	\NDLineTG{G':\bool}{\ruleRef{eqTyping},(\ref{rcass2S})}\label{rcGpTp}\\
	\NDLineTG{\left(\p\right)\ F:\bool\to\bool}{\ruleRef{appl},(\ref{rcpTp}),(\ref{rcFTp})}\label{rcpFTp}\\
	\NDLineTG{\left(\p\right)\ F':\bool\to\bool}{\ruleRef{appl},(\ref{rcpTp}),(\ref{rcFpTp})}\label{rcpFpTp}\\
	\NDLineTG{\left(\p\right)\ F\ G:\bool}{\ruleRef{appl},(\ref{rcpFTp}),(\ref{rcGTp})}\label{rcpFGTp}\\
	\NDLineTG{\left(\p\right)\ F'\ G':\bool}{\ruleRef{appl},(\ref{rcpFpTp}),(\ref{rcGpTp})}\label{rcpFpGpTp}
	\intertext{Now we can use rule \ruleRef{beta} to show that the applications of $\p$ to $F,G$ and $f',G'$ respectively are equal to their \ruleRef{beta} reduced versions:}
	\NDLineTG{(\p)\ F\termEquals{\bool\to\bool} (\pF)}{\ruleRef{beta},(\ref{rcpFTp})}\label{rcpFB}\\
	\NDLineTG{(\p)\ F'\termEquals{\bool\to\bool} (\pFp)}{\ruleRef{beta},(\ref{rcpFpTp})}\label{rcpFpB}\\
	\intertext{And again:}
	\NDLineTG{\bool\to \bool \typeEquals \bool\to \bool}{\ruleRef{congBase}}\label{rcBteB}\\
	\NDLineTG{\pF:\bool\to \bool}{\ruleRef{congColon},(\ref{rcpFB}),(\ref{rcBteB}),(\ref{rcpFTp})}\label{rcpFredTp}\\
	\NDLineTG{\pFp:\bool\to \bool}{\ruleRef{congColon},(\ref{rcpFpB}),(\ref{rcBteB}),(\ref{rcpFpTp})}\label{rcpFpredTp}\\
	\NDLineTG{\pF\ G:\bool}{\ruleRef{appl},(\ref{rcpFredTp}),(\ref{rcGTp})}\label{rcpFBGTp}\\
	\NDLineTG{\pFp\ G':\bool}{\ruleRef{appl},(\ref{rcpFpredTp}),(\ref{rcGpTp})}\label{rcpFpBGpTp}\\
	\NDLineTG{(\pF)\ G\termEqB (\pFG)}{\ruleRef{beta},(\ref{rcpFGTp})}\label{rcpFGB}\\
	\NDLineTG{(\pFp)\ G'\termEqB (\pFpGp)}{\ruleRef{beta},(\ref{rcpFpGpTp})}\label{rcpFpGpB}\\
	\NDLineTG{G\termEqB G}{\ruleRef{refl},(\ref{rcGTp})}\label{rcGeqG}\\
	\NDLineTG{G'\termEqB G'}{\ruleRef{refl},(\ref{rcGpTp})}\label{rcGpeqGp}\\
	\NDLineTG{(\p)\ F\ G\termEqB \pF\ G}{\ruleRef{congAppl},(\ref{rcGeqG}),(\ref{rcpFB})}\label{rcpFBAp}\\
	\NDLineTG{(\p)\ F'\ G'\termEqB \pFp\ G'}{\ruleRef{congAppl},(\ref{rcGpeqGp}),(\ref{rcpFpB})}\label{rcpFpBAp}\\		
	\NDLineTG{(\p)\ F\ G\termEqB \pFG}{\ruleRef{trans},(\ref{rcpFBAp}),(\ref{rcpFGB})}\label{rcpFGBT}\\
	\NDLineTG{(\p)\ F'\ G'\termEqB \pFpGp}{\ruleRef{trans},(\ref{rcpFpBAp}),(\ref{rcpFpGpB})}\label{rcpFpGpBT}\\
	\NDLineTG{\pFG\termEqB (\p)\ F\ G}{\ruleRef{sym},(\ref{rcpFGBT})}\label{rcpFGS}\\
	\NDLineTG{\pFG\termEqB (\p)\ F'\ G'}{\ruleRef{trans},(\ref{rcpFGS}),(\ref{rcPFGeqPFpGp})}\label{rcpFGT}\\
	\NDLineTG{(F\impl F')\termEqB (G\impl G')}{\ruleRef{trans},(\ref{rcpFGT}),(\ref{rcpFpGpBT})}
	\end{align}	
	
	\paragraph{Regarding \ruleRef{forallImpl}:}
	\begin{align}
		\NDLineT{\concatCtx{\Gamma}{x:A}}{F\impl G}{\byAss}\label{forallImpl1}\\
		\NDLineTG{\univQuant{z}{A}\subst{F}{x}{z}\impl \subst{G}{x}{z}}{$\alpha$-renaming,\ruleRef{forallI},(\ref{forallImpl1})}\label{forallImpl2}\\
		\NDLineT{\concatCtx{\concatCtx{\Gamma}{\namedass{suq}{\univQuant{z}{A}\subst{F}{x}{z}}}}{y:A}}{\subst{F}{x}{y}}{\ruleRef{forallE},\ruleRef{assume},\ruleRef{varS}}\label{forallImpl3}\\	
		\NDLineT{\concatCtx{\concatCtx{\Gamma}{\namedass{suq}{\univQuant{z}{A}\subst{F}{x}{z}}}}{y:A}}{\subst{F}{x}{y}\impl \subst{G}{x}{y}}{\ruleRef{forallE},(\ref{forallImpl2}),\ruleRef{varS}}\label{forallImpl4}\\
		\NDLineT{\concatCtx{\concatCtx{\Gamma}{\namedass{suq}{\univQuant{z}{A}\subst{F}{x}{z}}}}{y:A}}{\subst{G}{x}{y}}{\ruleRef{implE},(\ref{forallImpl4}),(\ref{forallImpl3})}\label{forallImpl5}\\
		\NDLineT{\concatCtx{\Gamma}{\namedass{suq}{\univQuant{z}{A}\subst{F}{x}{z}}}}{\univQuant{y}{A}\subst{G}{x}{y}}{\ruleRef{forallI},(\ref{forallImpl5})}\label{forallImpl6}\\
		\NDLineTG{\univQuant{z}{A}\subst{F}{x}{z}\impl \univQuant{y}{A}\subst{G}{x}{y}}{\ruleRef{implI},(\ref{forallImpl6})}\label{forallImpl7}\\
		\NDLineTG{\univQuant{x}{A}F\impl \univQuant{x}{A}G}{$\alpha$-renaming,(\ref{forallImpl7})}\nonumber
	\end{align}
	\paragraph{Regarding \ruleRef{implFunctorial}:}
	\begin{align}
		\NDLineTG{G\impl G'}{\byAss}\label{implFunct1}\\
		\NDLineTG{F'\impl F}{\byAss}\label{implFunct2}\\
		\NDLineT{\concatCtx{\concatCtx{\namedass{as}{\Gamma}{F\impl G}}}{F'}}{F}{\ruleRef{implE},(\ref{implFunct2}),\ruleRef{assume}}\label{implFunct3}\\
		\NDLineT{\concatCtx{\concatCtx{\namedass{as}{\Gamma}{F\impl G}}}{F'}}{G}{\ruleRef{implE},\ruleRef{assume},(\ref{implFunct3})}\label{implFunct4}\\
		\NDLineT{\concatCtx{\concatCtx{\namedass{as}{\Gamma}{F\impl G}}}{F'}}{G'}{\ruleRef{implE},\ruleRef{implFunct1},(\ref{implFunct4})}\label{implFunct5}\\
		\NDLineT{\concatCtx{\namedass{as}{\Gamma}{F\impl G}}}{F'\impl G'}{\ruleRef{implI},(\ref{implFunct5})}\label{implFunct6}\\
		\NDLineTG{\left(F\impl G\right)\impl \left(F'\impl G'\right)}{\ruleRef{implI},(\ref{implFunct6})}\nonumber
	\end{align}
	\paragraph{Regarding \ruleRef{dedCong}:}
	\begin{align}
	\NDLineTG{F\termEqB F'}{by assumption}\label{dcass1}\\
	\NDLineTG{F}{by assumption}\label{dcass2}\\
	\NDLineTG{F'\termEqB F}{\ruleRef{sym},(\ref{dcass1})}\label{dc1S}\\
	\NDLineTG{F'}{\ruleRef{congDed},(\ref{dc1S}),(\ref{dcass2})}
	\end{align}
	
	\paragraph{Regarding \ruleRef{rewrite}:}
	\renewcommand{\p}{\lambdaFun{y}{A} \subst{F}{x}{y}}
	\begin{align}
	\NDLineTG{\subst{F}{x}{t}}{by assumption}\label{rwass3}\\
	\NDLineTG{t\termEquals{A}t'}{by assumtion}\label{rwass1}\\
	\NDLineT{\Gamma, y:A}{\subst{F}{x}{y}:\bool}{by assumption}\label{rwass2}\\
	\NDLineTG{\p:A\to \bool}{\ruleRef{lambda},(\ref{rwass2})}\label{rwPTp}\\
	\NDLineTG{t:A}{\ruleRef{eqTyping},(\ref{rwass1})}\label{rwTTp}\\
	\NDLineTG{(\p)\ t:\bool}{\ruleRef{appl},(\ref{rwPTp}),(\ref{rwTTp})}\label{rwPtTp}\\
	\NDLineTG{(\p)\ t\termEqB \subst{F}{x}{t}}{\ruleRef{beta},(\ref{rwPtTp})}\label{rwPtEqB}\\
	\NDLineTG{(\p)\ t}{\ruleRef{congDed},(\ref{rwPtEqB}),(\ref{rwass3})}\label{rwPt}\\
	\NDLineTG{\p\termEquals{A\to\bool}\p}{\ruleRef{refl},(\ref{rwPTp})}\label{rwpEqp}\\
	\NDLineTG{(\p)\ t\termEqB (\p)\ t'}{\ruleRef{congAppl},(\ref{rwass1}),(\ref{rwpEqp})}\label{rwPtEqPtp}\\
	\NDLineTG{(\p)\ t'}{\ruleRef{dedCong},(\ref{rwPtEqPtp}),(\ref{rwPt})}\label{rwPtp}\\
	\NDLineTG{t':A}{\ruleRef{eqTyping},\ruleRef{sym},(\ref{rwass1})}\label{rwTpTp}\\
	\NDLineTG{(\p)\ t':\bool}{\ruleRef{appl},(\ref{rwPTp}),(\ref{rwTpTp})}\label{rwPtpTp}\\
	\NDLineTG{(\p)\ t'\termEqB \subst{F}{x}{t'}}{\ruleRef{beta},(\ref{rwPtpTp})}\label{rwPTpEqPTp}\\
	\NDLineTG{\subst{F}{x}{t'}}{\ruleRef{dedCong},(\ref{rwPTpEqPTp}),(\ref{rwPtp})}
	\end{align}
\end{proof}

\begin{lemma}\label{lem:eqTrue}
	If $F$ is a well-typed \hol{} formula we can show $\Gamma\dedT F\termEquals{\bool}\T$ iff $\Gamma\dedT F$.
\end{lemma}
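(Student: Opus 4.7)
The plan is to prove both directions using only the admissible rules already established in Lemma~\ref{meta-thm:1}, together with the fact that $\T$ is derivable in every well-formed context (since $\T$ is defined, e.g., as $(\lambda x{:}\bool.x)\termEquals{\bool\to\bool}(\lambda x{:}\bool.x)$, whose validity follows from \ruleRef{refl} applied to the identity function, which is well-typed by \ruleRef{lambda} and \ruleRef{varS}).

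For the backward direction ($\Gamma\dedT F\termEquals{\bool}\T$ implies $\Gamma\dedT F$), I would first derive $\Gamma\dedT \T$ as just described (well-formedness of the context follows from \ruleRef{validTyping} applied to the well-typedness assumption on $F$, via \ruleRef{tpCtx}). Then I would apply \ruleRef{congDed} to the pair $\Gamma\dedT F\termEqB \T$ and $\Gamma\dedT \T$, which yields $\Gamma\dedT F$ directly. This direction is essentially a one-line application of congruence for validity.

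For the forward direction ($\Gamma\dedT F$ implies $\Gamma\dedT F\termEqB \T$), I would use propositional extensionality (\ruleRef{propExt}), which reduces the goal to proving $\concatCtx{\Gamma}{\namedass{ass_F}{F}}\dedT \T$ and $\concatCtx{\Gamma}{\namedass{ass_{\T}}{\T}}\dedT F$. The first is immediate: in any well-formed extended context, $\T$ is derivable by \ruleRef{refl} as above, noting that the context $\concatCtx{\Gamma}{\namedass{ass_F}{F}}$ is well-formed by \ruleRef{ctxAssume} applied to the well-typedness assumption $\Gamma\dedT F:\bool$. The second follows by monotonicity of validity (\ruleRef{assDed}): from $\Gamma\dedT F$ and well-typedness of $\T$, we conclude $\concatCtx{\Gamma}{\namedass{ass_{\T}}{\T}}\dedT F$.

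Neither direction presents a genuine obstacle; the only mild subtlety is making sure the side conditions on well-formedness of the extended contexts and on well-typedness of $\T$ are discharged using \ruleRef{ctxAssume}, \ruleRef{validTyping}, and \ruleRef{tpCtx} as appropriate. The assumption that $F$ is a well-typed HOL formula is exactly what supplies these side conditions.
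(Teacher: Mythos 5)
Your proof is correct and follows essentially the same route as the paper's: the backward direction derives $\Gamma\dedT\T$ via \ruleRef{refl} on the identity function and concludes with \ruleRef{congDed}, and the forward direction uses \ruleRef{propExt} with the two entailments $\concatCtx{\Gamma}{\namedass{ass_F}{F}}\dedT\T$ and $\concatCtx{\Gamma}{\namedass{ass_\T}{\T}}\dedT F$, the latter discharged by \ruleRef{assDed}. The side conditions you identify (well-typedness of $F$ and $\T$, well-formedness of the extended contexts) are exactly the ones the paper tracks.
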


\begin{proof}[Proof of Lemma~\ref{lem:eqTrue}]
	We start with the $\impl$ direction:
	\begin{align}
	\NDLineTG{\lambdaFun{x}{\bool}x\termEquals{\bool\to\bool}\lambdaFun{x}{\bool}x}{\ruleRef{refl},\ruleRef{lambda},\ruleRef{varS}}\label{let2}\\
	\NDLineTG{\T}{definition of $\T$,(\ref{let2})}\label{let3}\\
	\NDLineTG{F\termEqB \T}{\byAss}\label{let1}\\
	\NDLineTG{F}{\ruleRef{congDed},(\ref{let1}),(\ref{let3})}
	\end{align}
	
	Now the $\Leftarrow$ direction:
	\begin{align}
	\NDLineTG{F}{\byAss}\label{let5}\\
	\NDLineTG{F:\bool}{\byAss}\label{let8}\\
	\NDLineTG{\T:\bool}{by definition}\label{let9}\\
	\NDLineT{\concatCtx{\Gamma}{\T}}{\namedass{ass}{F}}{\ruleRef{assDed},(\ref{let5})}\label{let6}\\		
	\NDLineT{\concatCtx{\Gamma}{\namedass{ass}{F}}}{\T}{\ruleRef{assDed},(\ref{let3})}\label{let7}\\
	\NDLineTG{F\termEqB \T}{\ruleRef{propExt},(\ref{let8}),(\ref{let9}),(\ref{let7}),(\ref{let6})}
	\end{align}
\end{proof}

Finally, using the definitions of the connectives and quantifiers we can prove the rules:
\[
\rnamedRule{$\land$I}{\Gamma\dedT F\land G}{\Gamma\dedT F\tb \Gamma\dedT G}{andI}\tb
\rnamedRule{$\lor$Il}{\Gamma\dedT F\lor G}{\Gamma\dedT F}{orIl}\tb 
\rnamedRule{$\lor$Ir}{\Gamma\dedT F\lor G}{\Gamma\dedT G}{orIr}\]
\[
\rnamedRule{$\land$El}{\Gamma\dedT F}{\Gamma\dedT F\land G}{andEl}\tb
\rnamedRule{$\land$Er}{\Gamma\dedT G}{\Gamma\dedT F\land G}{andEr}\tb
\rnamedRule{$\exists$I}{\Gamma\dedT\existQuant{x}{A} F}{\Gamma\dedT t:A\tb \Gamma\dedT \subst{F}{x}{t}}{existI}
\]

\begin{rem}\label{rem:analogueDerivedRulesDHOLP}
	Observe that many of the rules derived for \hol{} in Lemma~\ref{meta-thm:1} still hold in \dphol{}. 
	In particular, the rules \ruleRef{ctxThy}, \ruleRef{tpCtx}, \ruleRef{typingTp} and \ruleRef{validTyping} can be proven by the same method. 
	Also the rules \ruleRef{tpEqRefl} and \ruleRef{tpEqSym} can be proven easily in \dphol{} by induction on the type equality rules (for the rule \ruleRef{tpEqSym} the proof easily generalizes to \dhol{} but proving it in \dphol{} is harder). 
	Finally the rules \ruleRef{assDed}, \ruleRef{varDed}, \ruleRef{rewrite} and the introduction and elimination rules for the quantifier and defined logical connectives can be derived in \dphol{} with the same proofs.
\end{rem}

\section{Completeness proof}\label{sec:complete}
We will prove the following slightly stronger version of the theorem:
\begin{theorem}[Completeness]\label{thm:complete}
	We have 
	\begin{align}
	\phantom{\Gamma}&\ded \Thy{T}  &&\text{ implies } \ded\Thy{\PhiAppl T} \label{correct:theoremhood}\\
	\phantom{\Gamma}&\dedT \Ctx{\Gamma} &&\text{ implies } \dedPT\Ctx{\PhiAppl{\Gamma}} \label{correct:contexthood}\\
	\Gamma&\dedT \Type{A} &&\text{ implies } \PhiAppl{\Gamma}\dedPT\Type{\PhiAppl{A}}\text{ and }\ \PhiAppl{\Gamma}\dedPT\PredPhiName{A}: \PhiAppl{A}\to\PhiAppl{A}\to \bool \label{correct:typehood}\\
	\Gamma&\dedT A \typeEquals B &&\text{ implies } \PhiAppl{\Gamma}\dedPT\PhiAppl{A} \typeEquals \PhiAppl{B}\text{ and }\concatCtx{\PhiAppl{\Gamma}}{x:\PhiAppl{A}}\dedPT\PredPhi{A}{x} \termEqB \PredPhi{B}{x}\label{correct:typeEq}\\
	\Gamma&\dedT A \subtyping B&&\text{ implies } \PhiAppl{\Gamma}\dedPT \PhiAppl{A} \typeEquals \PhiAppl{B}\text{ and }\concatCtx{\PhiAppl{\Gamma}}{x,y:\PhiAppl{B}}\dedPT \termEqT{A}{x}{y}\impl \termEqT{B}{x}{y}\label{correct:subtyping}\\
	\Gamma&\dedT t:A &&\text{ implies } \PhiAppl{\Gamma}\dedPT\PhiAppl{t}:\PhiAppl{A} \ \text{and}\  \PhiAppl{\Gamma}\dedPT\PredPhi{A}{\PhiAppl{t}}\label{correct:typing}\\
	\Gamma&\dedT F &&\text{ implies } \PhiAppl{\Gamma}\dedPT\PhiAppl{F}\label{correct:validity}
	\intertext{In case of term equality, we strengthen the claim to:}
	\Gamma&\dedT t \termEquals{A} t' &&\text{ implies } \PhiAppl{\Gamma}\dedPT\termEqT{A}{\PhiAppl{t}}{\PhiAppl{t'}}\ \text{and}\ \PhiAppl{\Gamma}\dedPT \PhiAppl{t}:\PhiAppl{A}\ \text{and}\ \PhiAppl{\Gamma}\dedPT \PhiAppl{t'}:\PhiAppl{A}\label{correct:termEq}
	\end{align}
	
	Furthermore, the typing relations $\PredPhiName{A}$ are symmetric and transitive on all well-formed types $A$:
	\begin{align}
	\Gamma\dedT \Type{A} &\Impl \PhiAppl{\Gamma}\dedPT 
	\univQuant{x,y}{\PhiAppl{A}}
	\termEqT{A}{x}{y}\impl \termEqT{A}{y}{x}
	\label{correct:relatSym}\\
	\Gamma\dedT \Type{A} &\Impl \PhiAppl{\Gamma}\dedPT 
	\univQuant{x,y,z}{\PhiAppl{A}}\termEqT{A}{x}{y}\impl \left(\termEqT{A}{y}{z}\impl \termEqT{A}{x}{z}\right)\label{correct:relatTrans}
	\end{align}
	
	Additionally the substitution lemma holds, i.e.,
	\begin{align}
	\concatCtx{\Gamma}{x:A}\dedT t:B\Mand\Gamma\ded u:A &\Impl \PhiAppl{\Gamma}\dedPT\PhiAppl{\subst{t}{x}{u}}\termEquals{\PhiAppl{B}} \subst{\PhiAppl{t}}{x}{\PhiAppl{u}}\label{correct:substTerm}\\
	\concatCtx{\Gamma}{x:A}\dedT B\ \type\Mand\Gamma\dedT u:B &\Impl \PhiAppl{\Gamma}\dedPT\PhiAppl{\subst{B}{x}{u}}\typeEquals \subst{\PhiAppl{B}}{x}{\PhiAppl{u}}\label{correct:substType}
	\end{align}
	In the following lines, we assume that if $t=\lambdaFun{y}{C}s$ for $s$ of type $D$, then $B=\piType{y}{C}D$ (this is enough in practise and we cannot easily show more).
	\begin{align}
	\concatCtx{\Gamma}{x:A}\dedT t:B &\Impl \concatCtx{\concatCtx{\PhiAppl{\Gamma}}{x,x':\PhiAppl{A}}}{\namedass{xRx'}{\termEqT{A}{x}{x'}}} \dedPT \termEqT{B}{\PhiAppl{t}}{\subst{\PhiAppl{t}}{x}{x'}}\label{correct:substRelatTerms}
	\end{align}
\end{theorem}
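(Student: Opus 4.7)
The plan is to proceed by a simultaneous induction on all DHOL derivations, establishing clauses (\ref{correct:theoremhood})--(\ref{correct:substRelatTerms}) in lockstep. Since the translation acts declaration-wise on theories and contexts, the well-formedness cases reduce immediately to the inductive hypotheses: the only non-trivial step is a base type declaration $a:\piType{x_1}{A_1}\ldots\piType{x_n}{A_n}\type$, for which I will check that the freshly inserted HOL constants $a$ and $\PredPhiName{a}$ together with the $a_{PER}$ axiom extend $\PhiAppl{T}$ legally, using the translated typing clause on each $A_i$ to type the signature of $\PredPhiName{a}$. The assumption $\typingAssName{x}$ added for a context variable $x:A$ is then well-formed as soon as $\PhiAppl{\Gamma}\dedPT \Type{\PhiAppl{A}}$ and $\PhiAppl{\Gamma}\dedPT \PredPhiName{A}:\PhiAppl{A}\to\PhiAppl{A}\to\bool$ are available from (\ref{correct:typehood}).

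For the type-level clauses I handle $\PhiAppl{A}$ and $\PredPhiName{A}$ in parallel. For a Pi type the translated PER $\termEqT{\left(\piType{x}{A}B\right)}{f}{g}$ quantifies over $\PredPhiName{A}$-related arguments, so both its well-typedness and its symmetry/transitivity (clauses (\ref{correct:relatSym}) and (\ref{correct:relatTrans})) reduce to the corresponding properties of $\PredPhiName{A}$ and $\PredPhiName{B}$ via straightforward $\forall$ and $\impl$ manipulation. The rule congBase' is the crux: from DHOL-derivations $\Gamma\dedT s_i\termEquals{A_i[\cdots]}t_i$ the inductive hypothesis supplies HOL-derivations of $\termEqT{A_i[\cdots]}{\PhiAppl{s_i}}{\PhiAppl{t_i}}$, and I plan to combine them using \ruleRef{termEqCong} and \ruleRef{rewrite} from Lemma~\ref{meta-thm:1} to conclude $\concatCtx{\PhiAppl{\Gamma}}{u,v:a}\dedPT \PredPhiName{a}\,\PhiAppl{s_1}\,\cdots\,\PhiAppl{s_n}\,u\,v \termEqB \PredPhiName{a}\,\PhiAppl{t_1}\,\cdots\,\PhiAppl{t_n}\,u\,v$, which is exactly what clause (\ref{correct:typeEq}) demands for a base-type equality.

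For typing, term equality, and validity I will pair each DHOL rule with a HOL rule from Lemma~\ref{meta-thm:1}: \ruleRef{lambda'} and \ruleRef{appl'} with \ruleRef{lambda} and \ruleRef{appl}, the DHOL implication with HOL implication (exploiting that $\PhiAppl{\concatCtx{\Gamma}{\namedass{ass}{F}}} = \concatCtx{\PhiAppl{\Gamma}}{\namedass{ass}{\PhiAppl{F}}}$ so that \ruleRef{implI} translates transparently), and \ruleRef{refl} with a diagonal instance of $\PredPhiName{A}$ supplied by the typing clause on the hypothesis. The substitution clauses (\ref{correct:substTerm}), (\ref{correct:substType}), (\ref{correct:substRelatTerms}) I would prove by a separate structural induction on the term or type; the only case with real content is application, where one uses HOL $\beta$-reduction and the pointwise structure of the translated Pi-PER, and for (\ref{correct:substRelatTerms}) one additionally invokes the freshly-added context assumption $\termEqT{A}{x}{x'}$ together with the PER-transitivity clause. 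The hardest part will be the careful orchestration so that at each inductive step the clause being invoked has been established on a strictly smaller subderivation, combined with the congBase' step above; I expect this bookkeeping to constitute the technical core of the proof.
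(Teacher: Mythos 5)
Your proposal follows essentially the same route as the paper's proof: a main induction on DHOL derivations for the judgment clauses, with the substitution lemma and the symmetry/transitivity of the PERs established by a separate structural induction on terms/types (these must come first, since they are needed in the cases for \ruleRef{sym}, \ruleRef{beta}, \ruleRef{congLam'} and \ruleRef{lambda'}), and with \ruleRef{congBase'} correctly identified as the point where term equality feeds into type equality. The only detail worth flagging is that for \ruleRef{implType'} you also need that HOL typing is independent of context assumptions in order to discharge the translated assumption $\PhiAppl{F}$ before applying HOL's \ruleRef{implType}, which the paper states explicitly.
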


Here Case~\ref{correct:typeEq} looks weaker than in the original statement, but is easily seen to be equivalent.
The proof uses rule \ruleRef{subtI}, the completeness claim for the subtyping judgement and rule \ruleRef{propExt} in \hol{}. 

\renewcommand{\subparagraph}[1]{\paragraph{#1}}
\begin{proof}[Proof of Theorem~\ref{thm:complete}]
	Firstly, we will prove the substitution lemma by induction on the grammar, i.e. by induction on the shape of the terms and types.
	
	Afterwards, we will prove completeness of the translation w.r.t. all \dphol{} judgements by induction on the derivations. This means that we consider the inference rules of \dphol{} and prove that if completeness holds for the assumptions of a \dphol{} inference rule, then it also holds for the conclusion of the rule. 
	For the inductive steps for some typing rules, namely \ruleRef{eqType} and \ruleRef{congColon}, we also require the fact that for any (well-formed) type $A$ in \dphol{} we have $\PredPhiName{A}:\PhiAppl{A}\to\PhiAppl{A}\to\bool$. 
	This follows directly from how the $\PredPhiName{A}$ are generated/defined in the translation.
	
	\paragraph{Substitution lemma and symmetry and transitivity of the typing relations}
	Since the translation of types commutes with the type productions of the grammar (\ref{correct:substType}) is obvious.
	
	We show (\ref{correct:substTerm}) by induction on the grammar of \dphol{}. 
	If $x$ is not a free variable in $t$, then $\PhiAppl{\subst{t}{x}{u}}=\PhiAppl{t}=\subst{\PhiAppl{t}}{x}{\PhiAppl{u}}$ and the claim (\ref{correct:substTerm}) follows by rule \ruleRef{refl}. 
	So assume that $x$ is a free variable of $t$.
	
	If $t$ is a variable, then by assumption (that $x$ is a free variable in $t$) it follows that $t=x$ and thus $\PhiAppl{\subst{t}{x}{u}}=\PhiAppl{u}=\subst{\PhiAppl{t}}{x}{\PhiAppl{u}}$ and the claim follows by rule \ruleRef{refl}. 
	
	If $t$ is a $\lambda$-term $\lambdaFun{y}{A} s$, then by induction hypothesis we have $\concatCtx{\PhiAppl{\Gamma}}{y:\PhiAppl{A}}\dedPT \PhiAppl{\subst{s}{x}{u}}\termEquals{\PhiAppl{A}}\subst{\PhiAppl{s}}{x}{\PhiAppl{u}}$, where $A$ is the type of $s$. By rule \ruleRef{congLam}, the claim of $\PhiAppl{\Gamma}\dedPT \PhiAppl{\subst{\lambdaFun{y}{A} s}{x}{u}}\termEquals{\PhiAppl{B}}\subst{\PhiAppl{\lambdaFun{y}{A} s}}{x}{\PhiAppl{u}}$ follows.
	
	If $t$ is a function application $f\ s$, then by induction hypothesis we have $\PhiAppl{\Gamma}\dedPT \PhiAppl{\subst{s}{x}{u}}\termEquals{\PhiAppl{A}}\subst{\PhiAppl{s}}{x}{\PhiAppl{u}}$ and $\PhiAppl{\Gamma}\dedPT \PhiAppl{\subst{f}{x}{u}}\termEquals{\PhiAppl{A}\to\PhiAppl{B}}\subst{\PhiAppl{f}}{x}{\PhiAppl{u}}$, where $T$ is the type of $s$. By rule \ruleRef{congAppl}, the claim of $\PhiAppl{\Gamma}\dedPT \PhiAppl{\subst{\left(f\ s\right)}{x}{u}}\termEquals{\PhiAppl{B}}\subst{\PhiAppl{f\ s}}{x}{\PhiAppl{u}}$ follows.
	
	If $t$ is an equality $s\termEquals{A}s'$, then by induction hypothesis we have $\PhiAppl{\Gamma}\dedPT \PhiAppl{\subst{s}{x}{u}}\termEquals{\PhiAppl{A}}\subst{\PhiAppl{s}}{x}{\PhiAppl{u}}$ and $\PhiAppl{\Gamma}\dedPT \PhiAppl{\subst{s'}{x}{u}}\termEquals{\PhiAppl{A}}\subst{\PhiAppl{s'}}{x}{\PhiAppl{u}}$, where $A$ is the type of $s$ and $s'$. 
	By rule \ruleRef{termEqCong}, the claim of $\PhiAppl{\Gamma}\dedPT \PhiAppl{\subst{\left(s\termEquals{A}s'\right)}{x}{u}}\termEqB\subst{\left(\PhiAppl{s\termEquals{A}s'}\right)}{x}{\PhiAppl{u}}$ follows.
	
	Before we can show (\ref{correct:substRelatTerms}), we first need to prove the symmetry and transitivity of the typing relations:
	We can prove both by induction on the type $A$. 
	Denote $\Delta:=\concatCtx{\concatCtx{\PhiAppl{\Gamma}}{x,y:\PhiAppl{A}}}{\namedass{xRy}{\termEqT{A}{x}{y}}}$ and $\Theta:=\concatCtx{\concatCtx{\concatCtx{\PhiAppl{\Gamma}}{x,y,z:\PhiAppl{A}}}{\namedass{xRy}{\termEqT{A}{x}{y}}}}{\namedass{yRz}{\termEqT{A}{y}{z}}}$ respectively. 
	If we can show $\Delta\dedPT \termEqT{A}{y}{x}$ and $\Theta\dedPT \termEqT{A}{x}{z}$ respectively, then the claims (\ref{correct:relatSym}) and (\ref{correct:relatTrans}) follows by the rules \ruleRef{implI}, \ruleRef{forallI}, \ruleRef{varS} and \ruleRef{assume}.
	Those are therefore the claims we are going to show.
	
	Observe that for types declared in the theory $T$, the symmetry and transitivity of $\PredPhiName{A}$ follows from the axiom generated by the translation (in case (\ref{PTTpConstr})) of the type declaration declaring $A$.
	This follows from the symmetry and transitivity of equality and (\ref{correct:substTerm}). 
		
	If $A$ is $\bool$, the typing relation is $\termEqB$ which is symmetric and transitive by the rules \ruleRef{sym} and \ruleRef{trans} respectively. In these cases the claims follows by the rule \ruleRef{assume} and rule \ruleRef{sym} resp. by rule \ruleRef{assume} and rule \ruleRef{trans}.
	
	If $A$ is a $\Pi$-type $\piType{x}{C}D$ we have $\termEqT{C}{f}{g}=\univQuant{x,y}{\PhiAppl{C}}\termEqT{C}{x}{y}\impl \termEqT{D}{f\ x}{g\ y}$.
	Then we have $$\Delta=\concatCtx{\concatCtx{\PhiAppl{\Gamma}}{x,y:\PhiAppl{A}}}{\namedass{xRy}{\univQuant{w}{\PhiAppl{C}}\univQuant{w'}{\PhiAppl{C}}\termEqT{C}{w}{w'}\impl \termEqT{D}{\left(x\ w\right)}{\left(y\ w'\right)}}}$$ and \begin{align*}
	\Theta=&\concatCtx{\concatCtx{\concatCtx{\PhiAppl{\Gamma}}{x,y,z:\PhiAppl{A}}}{\namedass{xRy}{\univQuant{w}{\PhiAppl{C}}\univQuant{w'}{\PhiAppl{C}}\termEqT{C}{w}{w'}\impl \termEqT{D}{\left(x\ z\right)}{\left(y\ z'\right)}}}}{\\&\qquad\namedass{yRz}{\univQuant{w}{\PhiAppl{C}}\univQuant{w'}{\PhiAppl{C}}\termEqT{C}{w}{w'}\impl \termEqT{D}{\left(y\ w\right)}{\left(z\ w'\right)}}}.
	\end{align*}
	The claim is \[
	\Delta\dedPT \univQuant{w,w'}{\PhiAppl{C}}\termEqT{C}{w}{w'}\impl \termEqT{D}{\left(y\ w\right)}{\left(x\ w'\right)}
	\] and \[
	\Delta\dedPT \univQuant{w,w'}{\PhiAppl{C}}\termEqT{C}{w}{w'}\impl \termEqT{D}{\left(x\ w\right)}{\left(z\ w'\right)}
	\] respectively. 
	
	We can prove the claim for (\ref{correct:relatSym}) by 
	{\footnotesize
	\begin{align}
		\NDLinePT{\concatCtx{\concatCtx{\Delta}{w,w':\PhiAppl{C}}}{\namedass{wRw'}{\termEqT{C}{w}{w'}}}}{\nonumber\\&\QQQQNegSp\termEqT{C}{w}{w'}\impl\termEqT{D}{(x\ w)}{(y\ w')}}{\ruleRef{forallE},\ruleRef{forallE},\ruleRef{assume}}\label{relatSym1}\\
		\NDLinePT{\concatCtx{\concatCtx{\Delta}{w,w':\PhiAppl{C}}}{\namedass{wRw'}{\termEqT{C}{w}{w'}}}}{\nonumber\\&\QQQQNegSp\termEqT{D}{(x\ w)}{(y\ w')}}{\ruleRef{implE},(\ref{relatSym1}),\ruleRef{assume}}\label{relatSym2}\\
		\NDLinePT{\concatCtx{\concatCtx{\Delta}{w,w':\PhiAppl{C}}}{\namedass{wRw'}{\termEqT{C}{w}{w'}}}}{\nonumber\\&\QQQQNegSp\termEqT{D}{(x\ w)}{(y\ w')}\impl\termEqT{D}{(y\ w)}{(x\ w')}}{\IH}\label{relatSym3pre}\\
		\NDLinePT{\concatCtx{\concatCtx{\Delta}{w,w':\PhiAppl{C}}}{\namedass{wRw'}{\termEqT{C}{w}{w'}}}}{\nonumber\\&\QQQQNegSp\termEqT{D}{(x\ w)}{(y\ w')}}{\ruleRef{implE},(\ref{relatSym3pre}),(\ref{relatSym2})}\label{relatSym3}
		\end{align}
		\begin{align}
		\NDLinePT{\concatCtx{\Delta}{w,w':\PhiAppl{C}}}{\termEqT{C}{w}{w'}\impl \termEqT{D}{(y\ w)}{(x\ w')}}{\ruleRef{implI},(\ref{relatSym3})}\label{relatSym4}\\
		\NDLinePTD{\univQuant{w,w'}{\PhiAppl{C}}\termEqT{C}{w}{w'}\impl \termEqT{D}{(y\ w)}{(x\ w')}}{\ruleRef{forallI},\ruleRef{forallI},(\ref{relatSym4})}\nonumber
	\end{align}
	}
	
	We can prove the claim for (\ref{correct:relatTrans}) similarly. For this denote $\Lambda:=\concatCtx{\concatCtx{\Theta}{w,w':\PhiAppl{C}}}{\namedass{wRw'}{\termEqT{C}{w}{w'}}}$. 
	\begin{align}
	\NDLinePT{\Lambda}{\termEqT{C}{w}{w'}\impl\termEqT{D}{(x\ w)}{(y\ w')}}{\ruleRef{forallE},\ruleRef{forallE},\ruleRef{assume}}\label{relatTrans1}\\
	\NDLinePT{\Lambda}{\termEqT{C}{w}{w'}\impl\termEqT{D}{(y\ w)}{(z\ w')}}{\ruleRef{forallE},\ruleRef{forallE},\ruleRef{assume}}\label{relatTrans2}\\
	\NDLinePT{\Lambda}{\termEqT{D}{(x\ w)}{(y\ w')}}{\ruleRef{implE},(\ref{relatTrans1}),\ruleRef{assume}}\label{relatTrans3}\\
	\NDLinePT{\Lambda}{\termEqT{D}{(y\ w)}{(z\ w')}}{\ruleRef{implE},(\ref{relatTrans2}),\ruleRef{assume}}\label{relatTrans4}\\
	\NDLinePT{\Lambda}{\termEqT{D}{(x\ w)}{(y\ w')}\impl\big(\termEqT{D}{(y\ w)}{(z\ w')}\nonumber\\&\impl \termEqT{D}{(x\ w)}{(z\ w')}\big)}{\IH}\label{relatTransIH}\\
	\NDLinePT{\Lambda}{\termEqT{D}{(y\ w)}{(z\ w')}\impl \termEqT{D}{(x\ w)}{(z\ w')}}{\ruleRef{implE},(\ref{relatTransIH}),(\ref{relatTrans3})}\label{relatTrans6}\\
	\NDLinePT{\Lambda}{\termEqT{D}{(x\ w)}{(z\ w')}}{\ruleRef{implE},(\ref{relatTrans6}),(\ref{relatTrans4})}\label{relatTrans7}
	\end{align}
	\begin{align}
		\concatCtx{\Theta}{w,w':\PhiAppl{C}}\dedPT\QQQNegSp&\QQQuad \termEqT{C}{w}{w'}\impl \termEqT{D}{(x\ w)}{(z\ w')}&&\text{\ruleRef{implE},(\ref{relatTrans7}),\ruleRef{assume}}\label{relatTrans8}\\
		\NDLinePTT{\univQuant{w,w'}{\PhiAppl{C}} \termEqT{C}{w}{w'}\impl \termEqT{D}{(x\ w)}{(z\ w')}}{\ruleRef{forallI},\ruleRef{forallI},(\ref{relatTrans8})}\label{relatTrans9}\nonumber
	\end{align}
	
	It remains to consider the case of $A=\subtype{B}{p}$. 
	In this case, the claim is
	$\Delta\dedPT \termEqT{B}{y}{x}\land \PhiAppl{p}\ y\land \PhiAppl{p}\ x$ respectively $\Theta\dedPT \termEqT{B}{x}{z}\land \PhiAppl{p}\ x\land \PhiAppl{p}\ z$.
	Applying the induction hypothesis for type $B$ yields 
	$\Delta\dedPT \termEqT{B}{y}{x}$ respectively $\Theta\dedPT \termEqT{B}{x}{z}$. 
	So it remains to show that $\Delta\dedPT \PhiAppl{p}\ y\land \PhiAppl{p}\ x$ and $\Theta\dedPT \PhiAppl{p}\ x\land \PhiAppl{p}\ z$ respectively hold.
	We can show them using rule \ruleRef{andI} given $\Delta\dedPT \PhiAppl{p}\ y$ and $\Delta\dedPT \PhiAppl{p}\ x$ respectively $\Theta\dedPT \PhiAppl{p}\ x$ and $\Theta\dedPT \PhiAppl{p}\ x$. 
	Those statements follow from rule \ruleRef{assume} and the elimination rules of $\land$. 
	
	This concludes the proof of (\ref{correct:relatSym}) and (\ref{correct:relatTrans}).
	
	We show (\ref{correct:substRelatTerms}) by induction on the grammar:
	Without loss of generality we may assume that $B=:\subtype{B'}{p}$ for $B'$ either a base or a $\Pi$-type. This is due to the fact that base types and $\Pi$-types $B'$ can be written as $\subtype{B'}{\lambdaFun{x}{B'}\T}$ and types of the form $\subtype{\subtype{B''}{p}}{q}$ can be rewritten as $\subtype{B''}{\lambdaFun{x}{B''}p\ x\land q\ x}$.
	
	If $t$ is a constant or variable then $\subst{\PhiAppl{t}}{x}{x'}=\PhiAppl{t}$ and by case (\ref{PTctxVar}) resp. by case (\ref{PTax}) in the definition of the translation, we have $\termEqT{A}{\PhiAppl{t}}{\PhiAppl{t}}$. So the claim holds.
	
	If $t$ is a $\lambda$-term $\lambdaFun{y}{C}s$ and $B'=\piType{z}{C}D$, then by induction hypothesis we have \[\concatCtx{\concatCtx{\PhiAppl{\Gamma}}{x,x':\PhiAppl{A}}}{\namedass{xRx'}{\termEqT{A}{x}{x'}}} \dedPT \termEqT{D}{\PhiAppl{s}}{\subst{\PhiAppl{s}}{x}{x'}}.\]
	By the rules \ruleRef{forallI}, \ruleRef{implI}, we yield \[\PhiAppl{\Gamma}\dedPT \univQuant{x,y}{\PhiAppl{A}}\termEqT{A}{x}{y}\impl\termEqT{D}{\PhiAppl{s}}{\subst{\PhiAppl{s}}{x}{x'}}. \]
	By definition (\ref{PTPipred}) this is exactly \[
	\concatCtx{\concatCtx{\PhiAppl{\Gamma}}{x,x':\PhiAppl{A}}}{\namedass{xRx'}{\termEqT{A}{x}{x'}}} \dedPT \termEqT{B'}{\PhiAppl{t}}{\subst{\PhiAppl{t}}{x}{x'}}.\]
	Since $t$ is a $\lambda$-term, by assumption we have that $B\typeEquals B'=\subtype{B}{\lambdaFun{z}{B}\T}$, so the claim follows trivially.
	
	If $t$ is a function application $f\ s$ with $f$ of type $\piType{z}{C}D$ and $s$ of type $C$, then by assumption $B=D\typeEquals B'=\subtype{B}{\lambdaFun{z}{B}\T}$, so it suffices to prove that \[ \concatCtx{\concatCtx{\PhiAppl{\Gamma}}{x,x':\PhiAppl{A}}}{\namedass{xRx'}{\termEqT{A}{x}{x'}}} \dedPT \termEqT{D}{\PhiAppl{f\ s}}{\subst{\PhiAppl{f\ s}}{x}{x'}}.\]
	By induction hypothesis and (\ref{correct:substTerm}) we then have:
	\[
	\concatCtx{\concatCtx{\PhiAppl{\Gamma}}{x,x':\PhiAppl{A}}}{\namedass{xRx'}{\termEqT{A}{x}{x'}}} \dedPT \termEqT{\left(\piType{z}{C}D\right)}{\PhiAppl{f}}{\subst{\PhiAppl{f}}{x}{x'}}
	\] and
	\begin{equation}
	\concatCtx{\concatCtx{\PhiAppl{\Gamma}}{x,x':\PhiAppl{A}}}{\namedass{xRx'}{\termEqT{A}{x}{x'}}} \dedPT \termEqT{C}{\PhiAppl{s}}{\subst{\PhiAppl{s}}{x}{x'}}\label{substLemAppl2}
	.\end{equation}
	By definition (\ref{PTPipred}), we can unpack the former to:
	\begin{equation}
	\concatCtx{\concatCtx{\PhiAppl{\Gamma}}{x,x':\PhiAppl{A}}}{\namedass{xRx'}{\termEqT{A}{x}{x'}}} \dedPT
	\univQuant{z,z'}{\PhiAppl{C}}\termEqT{C}{z}{z'}\impl \termEqT{\left(\piType{z}{C}D\right)}{\PhiAppl{f}\ z}{\subst{\PhiAppl{f}}{x}{x'}\ \subst{z'}{x}{x'}}\label{substLemAppl1}
	\end{equation}
	Using the rules \ruleRef{forallE} and \ruleRef{implE} (using (\ref{substLemAppl1})) to plug in $\PhiAppl{s}$ resp. $\subst{\PhiAppl{s}}{x}{x'}$ for $z, z'$ in (\ref{substLemAppl1}), we yield:
	\[
	\concatCtx{\concatCtx{\PhiAppl{\Gamma}}{x,x':\PhiAppl{A}}}{\namedass{xRx'}{\termEqT{A}{x}{x'}}} \dedPT \termEqT{\left(\piType{z}{C}D\right)}{\PhiAppl{f}\ \PhiAppl{s}}{\subst{\PhiAppl{f}}{x}{x'}\ \subst{\PhiAppl{s}}{x}{x'}}
	\] which is exactly the desired result.
	
	By definition (\ref{PTTpBoolPred}), the typing relation for type $\bool$ is ordinary equality, so the cases of $t$ being an implication or Boolean equality are in fact special cases of (\ref{correct:substTerm}), which is already proven above. 
	It remains to consider the case of $t$ being an equality $s\termEquals{C}s'$ for $C\not\typeEquals\bool$. 
	In this case, the induction hypothesis implies that 
	\begin{equation}
	\concatCtx{\concatCtx{\PhiAppl{\Gamma}}{x,x':\PhiAppl{A}}}{\namedass{xRx'}{\termEqT{A}{x}{x'}}} \dedPT \termEqT{C}{\PhiAppl{s}}{\subst{\PhiAppl{s}}{x}{x'}}\label{substLemEq1}
	\end{equation} and \begin{equation}
	\concatCtx{\concatCtx{\PhiAppl{\Gamma}}{x,x':\PhiAppl{A}}}{\namedass{xRx'}{\termEqT{A}{x}{x'}}} \dedPT \termEqT{C}{\PhiAppl{s'}}{\subst{\PhiAppl{s'}}{x}{x'}}\label{substLemEq2}
	\end{equation}
	We need to prove
	\[
	\concatCtx{\concatCtx{\PhiAppl{\Gamma}}{x,x':\PhiAppl{A}}}{\namedass{xRx'}{\termEqT{A}{x}{x'}}} \dedPT
	\termEqT{C}{\PhiAppl{s}}{\PhiAppl{s'}} \termEqB \termEqT{C}{\subst{\PhiAppl{s}}{x}{x'}}{\subst{\PhiAppl{s'}}{x}{x'}}.\]
	
	If we can show  
	\[\concatCtx{\concatCtx{\concatCtx{\PhiAppl{\Gamma}}{x,x':\PhiAppl{A}}}{\namedass{xRx'}{\termEqT{A}{x}{x'}}}}{\namedass{sRs'}{\termEqT{C}{\PhiAppl{s}}{\PhiAppl{s'}}}} \dedPT \termEqT{C}{\subst{\PhiAppl{s}}{x}{x'}}{\subst{\PhiAppl{s'}}{x}{x'}}\] and similarly also 
	\[\concatCtx{\concatCtx{\concatCtx{\PhiAppl{\Gamma}}{x,x':\PhiAppl{A}}}{\namedass{xRx'}{\termEqT{A}{x}{x'}}}}{\namedass{subrel}{\termEqT{C}{\subst{\PhiAppl{s}}{x}{x'}}{\subst{\PhiAppl{s'}}{x}{x'}}}} \dedPT \termEqT{C}{\PhiAppl{s}}{\PhiAppl{s'}},\]
	then the claim follows by rule \ruleRef{propExt}.
	
	Both follows from the transitivity (\ref{correct:relatTrans}) of the typing relation $\PredPhiName{C}$.

	\paragraph{Well-formedness of theories}
	Well-formedness of \dphol{} theories can be shown using the rules 
	\ruleRef{thyEmpty}, \ruleRef{thyType'}, \ruleRef{thyConst} and \ruleRef{thyAxiom}:

	\subparagraph{\ruleRef{thyEmpty}:}
	\begin{align}
		\NDLine{}{\Thy{\emptyThy}}{\ruleRef{thyEmpty}}\label{thyEmpty1}\\
		\NDLineH{}{\Thy{\PhiAppl{\emptyThy}}}{\ruleRef{thyEmpty}}\nonumber
	\end{align}
	
	\subparagraph{\ruleRef{thyType'}:}
	\begin{align}
	\NDLineT{}{\Ctx{x_1:A_1,\ldots,x_n:A_n}}{by assumption}\label{thyType1}\\
	\NDLinePT{}{\Ctx{x_1:\PhiAppl{A_1},\PredPhi{A_1}{x_1},\ldots,x_n:\PhiAppl{A_n},\typingAss{A_n}{x_n}}}{\IH,(\ref{thyType1})}\label{thyType1a}\\
	\NDLineH{}{\Thy{\PhiAppl{T}}}{\ruleRef{ctxThy},(\ref{thyType1a})}\label{thyType2}\\
	\NDLineH{}{\Thy{\concatThy{\PhiAppl{T}}{a:\type}}}{\ruleRef{thyType},(\ref{thyType2})}\label{thyType3}\\
	\NDLineH{}{
		\Thy{\PhiAppl{\concatThy{T}{a:\piType{x_1}{A_1}\ldots\piType{x_n}{A_n}\type}}}}{\ref{PTTpConstr},(\ref{thyType3})}\nonumber
	\end{align}
	
	\subparagraph{\ruleRef{thyConst}:}
	\begin{align}
	\NDLineT{}{A\;\type}{by assumption}\label{thyConst2}\\
	\NDLinePT{}{\PhiAppl{A}\:\type}{\IH,(\ref{thyConst2})}\label{thyConst4}\\
	\NDLineH{}{\Thy{\concatThy{\PhiAppl{T}}{c:\PhiAppl{A}}}}{\ruleRef{thyConst},(\ref{thyConst4})}\label{thyConst5}\\
	\NDLineH{}{\Thy{\PhiAppl{\concatThy{T}{c:A}}}}{\ref{PTtermDecl},(\ref{thyConst5})}
	\end{align}
	
	\subparagraph{\ruleRef{thyAxiom}:}
	\begin{align}
	\NDLineT{}{F:\bool}{by assumption}\label{thyAxiom1}\\
	\NDLinePT{}{\PhiAppl{F}:\bool}{\IH,(\ref{thyAxiom1})}\label{thyAxiom2}\\
	\NDLineH{}{\Thy{\concatThy{\PhiAppl{T}}{\namedass{ass}{\PhiAppl{F}}}}}{\ruleRef{thyAxiom},(\ref{thyAxiom2})}\label{thyAxiom3}\\
	\NDLineH{}{\Thy{\PhiAppl{\concatThy{T}{\namedax{ax}{F}}}}}{\ref{PTax},(\ref{thyAxiom3})}
	\end{align}
	
	\paragraph{Well-formedness of contexts}
	Well-formedness of contexts can be concluded using the rules \ruleRef{ctxEmpty}, \ruleRef{ctxVar} and \ruleRef{ctxAssume}:
	\subparagraph{\ruleRef{ctxEmpty}:}
	\begin{align}
	\NDLine{}{\Thy{T}}{by assumption}\label{ctxEmpty1}\\
	\NDLineH{}{\Thy{\PhiAppl{T}}}{\IH,(\ref{ctxEmpty1})}\label{ctxEmpty2}\\
	\NDLinePT{}{\Ctx{\emptyCtx}}{\ruleRef{ctxEmpty},(\ref{ctxEmpty2})}\label{ctxEmpty3}\\
	\NDLinePT{}{\Ctx{\PhiAppl{\emptyCtx}}}{\ref{PTemptyCtx},(\ref{ctxEmpty3})}
	\end{align}
	
	\subparagraph{\ruleRef{ctxVar}:}
	\begin{align}
	\NDLineTG{\Type{A}}{\byAss}\label{ctxVar1}\\
	\NDLinePTG{\Type{\PhiAppl{A}}}{\IH,(\ref{ctxVar1})}\label{ctxVar2}
	\\\NDLinePTG{\PredPhiName{A}:\PhiAppl{A}\to\PhiAppl{A}\to \bool}{\IH,(\ref{ctxVar1})}\label{ctxVarPATp}
	\\\NDLinePT{}{\Ctx{\concatCtx{\PhiAppl{\Gamma}}{x:\PhiAppl{A}}}}{\ruleRef{ctxVar},(\ref{ctxVar2})}\label{ctxVar3}
	\\\NDLinePT{\concatCtx{\PhiAppl{\Gamma}}{x:\PhiAppl{A}}}{\PredPhiName{A}:\PhiAppl{A}\to \PhiAppl{A}\to \bool}{\ruleRef{varDed},(\ref{ctxVar2}),(\ref{ctxVarPATp})}\label{ctxVar3a}
	\\\NDLinePT{\concatCtx{\PhiAppl{\Gamma}}{x:\PhiAppl{A}}}{\PredPhi{A}{x}:\bool}{\ruleRef{appl},(\ref{ctxVar3a}),\ruleRef{varS}}\label{ctxVar4}
	\\\NDLinePT{}{\Ctx{\concatCtx{\concatCtx{\PhiAppl{\Gamma}}{x:\PhiAppl{A}}}{\PredPhi{A}{x}}}}{\ruleRef{ctxAssume},(\ref{ctxVar4})}\label{ctxVar5}
	\\\NDLinePT{}{\Ctx{\PhiAppl{\concatCtx{\Gamma}{ x:A}}}}{\ref{PTctxVar},(\ref{ctxVar5})}
	\end{align}
	
	\subparagraph{\ruleRef{ctxAssume}:}
	\begin{align}
	\NDLineTG{F:\bool}{by assumption}\label{ctxAssume1}
	\\\NDLinePTG{\PhiAppl{F}:\bool}{\IH,(\ref{ctxAssume1})}\label{ctxAssume2}
	\\\NDLinePT{}{\Ctx{\concatCtx{\PhiAppl{\Gamma}}{\namedass{ass}{\PhiAppl{F}}}}}{\ruleRef{ctxAssume},(\ref{ctxAssume2})}\label{ctxAssume3}
	\\\NDLinePT{}{\Ctx{\PhiAppl{\concatCtx{\Gamma}{\namedass{ass}{F}}}}}{\ref{PTctxAss},(\ref{ctxAssume3})}
	\end{align}
	
	\paragraph{Well-formedness of types}
	Well-formedness of types can be shown in DHOL using the rules 
	\ruleRef{type'}, \ruleRef{pi} and \ruleRef{psubType}:
	\subparagraph{\ruleRef{type'}:}
	\begin{align}
	&\thyIn{a:\piType{x_1}{A_1}\ldots\piType{x_n}{A_n}\type}{T}&&\text{\byAss}\label{type0}\\
	\NDLineTG{\Ctx{\Gamma}}{\byAss}\label{typeGCtx}\\
	\NDLinePT{}{\Ctx{\PhiAppl{\Gamma}}}{\IH,(\ref{typeGCtx})}\label{typePGCtx}\\
	&\thyIn{a:\type}{\PhiAppl{T}}&&\text{\ref{PTTpConstr},(\ref{type0})}\label{typeT1}\\
	&\thyIn{\PredPhiName{a}:a\to a\to\bool}{\PhiAppl{T}}&&\text{\ref{PTTpConstr},(\ref{type0})}\label{typeT2}\\
	\NDLinePTG{a\;\type}{\ruleRef{type},(\ref{typeT1}),(\ref{typePGCtx})}\nonumber\\
	\NDLinePTG{\PredPhiName{a}:a\to a\to\bool}{\ruleRef{constS},(\ref{typeT2})}\label{typeT3}\\
	\NDLinePTG{\PredPhiName{a}:\PhiAppl{a}\to\PhiAppl{a}\to\bool}{\ref{PTappl},(\ref{typeT3})}\nonumber
	\end{align}
	
	\subparagraph{\ruleRef{pi}:}
	\begin{align}
	\NDLineTG{A\;\type}{\byAss}\label{pi1}\\
	\NDLineT{\concatCtx{\Gamma}{x:A}}{B\;\type}{\byAss}\label{pi2}\\
	\NDLinePTG{\PhiAppl{A}\;\type}{\IH,(\ref{pi1})}\label{pi3}\\
	\concatCtx{\concatCtx{\PhiAppl{\Gamma}}{x:&\PhiAppl{A}}}{\typingAss{A}{x}}\dedPT\PhiAppl{B}\;\type\QQQNegSp&&\QQuad\text{\IH,(\ref{pi2})}\label{pi4pre}\\
	\NDLinePTG{\PhiAppl{B}\;\type}{\QNegSp\hol{} types context independent,(\ref{pi4pre})}\label{pi4}\\
	\NDLinePTG{\PhiAppl{A}\to \PhiAppl{B}\:\type}{\ruleRef{arrow},(\ref{pi3}),(\ref{pi4})}\label{pi5}\\
	\NDLinePTG{\PredPhiName{A}:\PhiAppl{A}\to\PhiAppl{A}\to\bool}{\IH,(\ref{pi1})}\label{pi6}\\
	\NDLinePTG{\PredPhiName{B}:\PhiAppl{B}\to\PhiAppl{B}\to\bool}{\IH,(\ref{pi2})}\label{pi7}\\
	\NDLinePTG{\PhiAppl{\piType{x}{A}B}\;\type}{\ref{PTPitype},(\ref{pi5})}\nonumber\\
	\NDLinePTG{\PredPhi{\left(\piType{x}{A}B\right)}:(\PhiAppl{\piType{x}{A}B})\to(\PhiAppl{\piType{x}{A}B})\to\bool\QQQQNegSp}{\QQQQuad\ref{PTPipred},(\ref{pi6}),(\ref{pi7})}\nonumber
	\end{align}
	
	\subparagraph{\ruleRef{psubType}:}
	\begin{align}
	\NDLineTG{A\;\type}{\byAss}\label{psubType1}\\
	\NDLineTG{p:A\to\bool}{\byAss}\label{psubType2}\\
	\NDLinePTG{\PhiAppl{A}\:\type}{\IH,(\ref{psubType1})}\label{psubType3}\\
	\NDLinePTG{\PhiAppl{\subtype{A}{p}}\;\type}{\ref{PTPStype},(\ref{psubType3})}\nonumber\\
	\NDLinePTG{\PredPhiName{A}:\PhiAppl{A}\to\PhiAppl{A}\to\bool}{\IH,(\ref{psubType1})}\label{psubType5}\\
	\NDLinePTG{\PredPhiName{\left(\subtype{A}{p}\right)}:\PhiAppl{A}\to\PhiAppl{A}\to\bool}{\ref{PTPSpred},(\ref{psubType5})}\nonumber
	\end{align}
	
	\paragraph{Type-equality}
	Type-equality can be shown using the rules \ruleRef{congBase'}, \ruleRef{congPi}, \ruleRef{psubTriv}, \ruleRef{psubTriv'}, \ruleRef{piPsubCod}, \ruleRef{psubQsub} and \ruleRef{psubEq}:
	
	\subparagraph{\ruleRef{psubEq}:}
	\begin{align}
	\NDLineTG{A\typeEquals A'}{\byAss}\label{psubEq1}\\
	\NDLineTG{p\termEquals{\piType{x}{A}\bool}p'}{\byAss}\label{psubEq2}\\
	\NDLinePTG{\PhiAppl{A}\typeEquals\PhiAppl{A'}}{\IH,(\ref{psubEq1})}\label{psubEq3}\\
	\NDLinePTG{\PredPhiName{A}\termEquals{\PhiAppl{A}\to\PhiAppl{A}\to\bool}\PredPhiName{A'}}{\IH,(\ref{psubEq1})}\label{psubEq4}\\
	\NDLinePTG{\PhiAppl{p}\termEquals{\PhiAppl{A}\to\bool}\PhiAppl{p'}}{\IH,(\ref{psubEq2})}\label{psubEq5}\\
	\NDLinePTG{\PredPhiName{(\subtype{A}{p})}\termEquals{\PhiAppl{A}\to\PhiAppl{A}\to\bool}\PredPhiName{(\subtype{A}{p})}}{\ruleRef{refl},\ref{PTPSpred},\ruleRef{eqTyping},(\ref{psubEq4})}\label{psubEq6}\\
	\NDLinePTG{\PredPhiName{(\subtype{A}{p})}\termEquals{\PhiAppl{A}\to\PhiAppl{A}\to\bool}\PredPhiName{(\subtype{A}{p'})}}{\ruleRef{rewrite},(\ref{psubEq6}),(\ref{psubEq5})}\label{psubEq7}\\
	\NDLinePTG{\PredPhiName{(\subtype{A}{p})}\termEquals{\PhiAppl{A}\to\PhiAppl{A}\to\bool}\PredPhiName{(\subtype{A'}{p'})}}{\ref{PTPSpred},\ruleRef{rewrite},(\ref{psubEq7}),(\ref{psubEq4})}\nonumber\\
	\NDLinePTG{\PhiAppl{\subtype{A}{p}}\typeEquals\PhiAppl{\subtype{A'}{p'}}}{\ref{PTPStype},(\ref{psubEq3})}\nonumber
	\end{align}
	
	\subparagraph{\ruleRef{congBase'}:}
	\begin{align}
	&\thyIn{a:\piType{x_1}{A_1}\ldots\piType{x_n}{A_n}\type}{T}&&\text{by assumption}\label{congBase0}\\
	\NDLineTG{s_1\termEquals{A_1}t_1}{by assumption}\label{congBase1}\\
	\vdots\nonumber\\
	\NDLineTG{s_n\termEquals{\subst{A_n}{x_1}{t_1}\ldots\substOp{x_{n-1}}{t_{n-1}}}t_n}{by assumption}\label{congBasen}\\
	\NDLineT{}{\Ctx{\Gamma}}{\byAss}\label{congBaseGCtx}\\
	&\thyIn{a:\type}{\PhiAppl{T}}&&\text{\ref{PTTpConstr},(\ref{congBase0})}\label{congBaseATpl}\\
	&\thyIn{\PredPhiName{a}:\PhiAppl{A_1}\to\ldots \to \PhiAppl{A_n}\to\PhiAppl{a}\to\PhiAppl{a}\to\bool}{\PhiAppl{T}}&&\text{\ref{PTTpConstr},(\ref{congBase0})}\label{congBaseAPred}\\
	\NDLinePTG{\PhiAppl{s_1}\termEquals{\PhiAppl{A_1}}\PhiAppl{t_1}}{\IH,(\ref{congBase1})}\label{congBase1T}\\
	\vdots\nonumber\\	
	\NDLinePTG{\PhiAppl{s_n}\termEquals{\PhiAppl{A_n}}\PhiAppl{t_n}}{\IH,(\ref{congBasen})}\label{congBasenT}\\
	\NDLinePT{}{\Ctx{\PhiAppl{\Gamma}}}{\IH,(\ref{congBaseGCtx})}\label{congBasePGCtx}\\
	\NDLinePTG{a:\;\type}{\ruleRef{type},(\ref{congBaseATpl}),(\ref{congBasePGCtx})}\label{congBaseATp}\\
	\NDLinePTG{a\typeEquals a}{\ruleRef{congBase},(\ref{congBaseATp})}\label{congBaseT}\\
	\NDLinePTG{\PredPhiName{a}\termEquals{\PhiAppl{A_1}\to\ldots \to \PhiAppl{A_n}\to\PhiAppl{a}\to\PhiAppl{a}\to\bool}\PredPhiName{a}}{\ruleRef{refl},\ruleRef{constS},(\ref{congBaseAPred}),(\ref{congBasePGCtx})}\label{congBaseApredEqApred}\\
	\NDLinePTG{\PredPhiName{a}\ \PhiAppl{s_1}\termEquals{\PhiAppl{A_2}\to\ldots \to \PhiAppl{A_n}\to\PhiAppl{a}\to\PhiAppl{a}\to\bool}\PredPhiName{a}\ \PhiAppl{t_1}}{\ruleRef{congAppl},(\ref{congBase1T}),(\ref{congBaseApredEqApred})}\label{congBasePredApp1}\\
	\vdots\nonumber\\	
	\NDLinePTG{\PredPhiName{a}\ \PhiAppl{s_1}\ \ldots\ \PhiAppl{s_n}\termEquals{\PhiAppl{a}\to\PhiAppl{a}\to\bool}\PredPhiName{a}\ \PhiAppl{t_1}\ \ldots\ \PhiAppl{t_n}}{\ruleRef{congAppl},(\ref{congBasenT}),previous line}\label{congBasePredAppn}\\
	\NDLinePTG{\PhiAppl{a\ s_1\ \ldots\ s_n}\typeEquals \PhiAppl{a\ t_1\ \ldots\ t_n}}{\ref{PTTpAppl},(\ref{congBaseT})}\nonumber\\
	\NDLinePTG{\PredPhiName{(a\ s_1\ \ldots\ s_n)}\termEquals{\PhiAppl{a\ s_1\ \ldots\ s_n}\to\PhiAppl{a\ s_1\ \ldots\ s_n}\to \bool}\PredPhiName{(a\ t_1\ \ldots\ t_n)}\QQQNegSp}{\QQQuad\ref{PTTpPredAppl},(\ref{congBasePredAppn})}\nonumber
	\end{align}
	
	\subparagraph{\ruleRef{congPi}:}
	{\small
		\begin{align}
		\NDLineTG{A\typeEquals A'}{\byAss}\label{congPi1}\\
		\NDLineT{\concatCtx{\Gamma}{x:A}}{B\typeEquals B}{\byAss}\label{congPi2}\\
		\NDLinePTG{\PhiAppl{A}\typeEquals\PhiAppl{A'}}{\IH,(\ref{congPi1})}\label{congPi3}\\
		\NDLinePTG{\PredPhiName{A}\termEquals{\PhiAppl{A}\to\PhiAppl{A}\to\bool}\PredPhiName{A'}}{\IH,(\ref{congPi1})}\label{congPi7}\\
		\NDLinePT{\concatCtx{\concatCtx{\PhiAppl{\Gamma}}{x:\PhiAppl{A}}}{\typingAss{A}{x}}}{\PhiAppl{B}\typeEquals\PhiAppl{B'}}{\IH,(\ref{congPi2})}\label{congPi4}
		\intertext{Since $\typeEquals$ is context independent in \hol{}:}
		\NDLinePTG{\PhiAppl{B}\typeEquals\PhiAppl{B'}}{explanation,(\ref{congPi4})}\label{congPi5}\\
		\NDLinePTG{\PhiAppl{A}\to\PhiAppl{B}\typeEquals\PhiAppl{A'}\to\PhiAppl{B'}}{\ruleRef{congTo},(\ref{congPi3}),(\ref{congPi5})}\label{congPi6}\\
		\NDLinePTG{\PhiAppl{\piType{x}{A}B}\typeEquals\PhiAppl{\piType{x}{A'}B'}}{\ref{PTPitype},(\ref{congPi6})}\nonumber
		\end{align}
		\begin{align}
		\NDLinePT{\concatCtx{\concatCtx{\PhiAppl{\Gamma}}{x:\PhiAppl{A}}}{\typingAss{A}{x}}}{\PredPhiName{B}\termEquals{\PhiAppl{B}\to\PhiAppl{B}\to\bool}\PredPhiName{B'}}{\QNegSp\IH,(\ref{congPi2})}\label{congPi8}\\
		\concatCtx{\concatCtx{\PhiAppl{\Gamma}}{f:\PhiAppl{A}\to\PhiAppl{B}}}{x:&\PhiAppl{A}}\dedPT\PredPhi{A}{x}\impl \PredPhi{B}{(f\ x)}&&\nonumber\\&\termEqB \PredPhi{A'}{x}\impl \PredPhi{B}{(f\ x)}&&\text{\ruleRef{rewrite},\ruleRef{refl},(\ref{congPi7})}\label{congPi9}\\
		\concatCtx{\concatCtx{\PhiAppl{\Gamma}}{f:\PhiAppl{A}\to\PhiAppl{B}}}{x:&\PhiAppl{A}}\dedPT\PredPhi{A}{x}\impl \PredPhi{B}{(f\ x)}&&\nonumber\\&\termEqB \PredPhi{A'}{x}\impl \PredPhi{B'}{(f\ x)}&&\text{\ruleRef{rewrite},(\ref{congPi9}),(\ref{congPi8})}\label{congPi10}
		\end{align}
		\begin{align}
		\NDLinePT{\concatCtx{\PhiAppl{\Gamma}}{f:\PhiAppl{A}\to\PhiAppl{B}}}{\univQuant{x}{\PhiAppl{A}} \PredPhi{A}{x}\impl (\PredPhi{B}{(f\ x)})\termEqB&&\nonumber\\&
			\univQuant{x}{\PhiAppl{A'}} \PredPhi{A'}{x} \impl (\PredPhi{B'}{(f\ x)})}{\ruleRef{forallCong},(\ref{congPi3}),(\ref{congPi10})}\label{congPi11}\\
		\NDLinePTG{\PredPhiName{\left(\piType{x}{A}B\right)}\termEquals{\PhiAppl{A}\to\PhiAppl{A}\to\bool}\PredPhiName{\left(\piType{x}{A'}B'\right)}}{\ref{PTLam},\ruleRef{congLam},(\ref{congPi11})}\nonumber
		\end{align}
	}
	\subparagraph{\ruleRef{psubTriv}:}
	{\small
		\begin{align}
		\NDLineTG{\Type{A}}{\byAss}\label{psubTriv1}\\
		\NDLinePTG{\Type{\PhiAppl{A}}}{\IH,(\ref{psubTriv1})}\label{psubTriv2}\\
		\NDLinePTG{\PredPhiName{A}:\PhiAppl{A}\to\PhiAppl{A}\to\bool}{\IH,(\ref{psubTriv1})}\label{psubTriv3}\\
		\NDLinePTG{\PhiAppl{A}\typeEquals\PhiAppl{A}}{\ruleRef{tpEqRefl},(\ref{psubTriv2})}\label{psubTriv4}\\
		\NDLinePTG{\PhiAppl{A}\typeEquals\PhiAppl{\subtype{A}{\lambdaFun{x}{A}\T}}}{(\ref{PTPStype}),(\ref{psubTriv4})}\nonumber\\
		\NDLinePT{\concatCtx{\PhiAppl{\Gamma}}{x:\PhiAppl{A}}}{\PhiAppl{\lambdaFun{x}{A}\T}\ x}{Lemma~\ref{lem:eqTrue},\ruleRef{beta}}\label{psubTriv6}\\
		\NDLinePT{\concatCtx{\concatCtx{\PhiAppl{\Gamma}}{x:\PhiAppl{A}}}{\typingAss{A}{x}}}{\PredPhi{A}{x}\land \PhiAppl{\lambdaFun{x}{A}\T}\ x}{\ruleRef{andI},\ruleRef{assume},(\ref{psubTriv6})}\label{psubTriv7}\\
		\concatCtx{\concatCtx{\PhiAppl{\Gamma}}{x:\PhiAppl{A}}}{\typingAss{A}{x}\land& \PhiAppl{\lambdaFun{x}{A}\T}\ x}\dedPT\PredPhi{A}{x}&&\text{\ruleRef{andEl},\ruleRef{assume}}\label{psubTriv8}\\
		\NDLinePT{\concatCtx{\PhiAppl{\Gamma}}{x:\PhiAppl{A}}}{\PredPhi{A}{x}\termEqB \PredPhi{\left(\subtype{A}{\lambdaFun{x}{A}\T}\right)}{x}}{\ruleRef{propExt},(\ref{psubTriv7}),(\ref{psubTriv8})}\nonumber
		\end{align}}
	\subparagraph{\ruleRef{psubTriv'}:}
	{\small
		\begin{align}
		\NDLineTG{\Type{A}}{\byAss}\label{psubTrivp1}\\
		\NDLinePTG{\Type{\PhiAppl{A}}}{\IH,(\ref{psubTrivp1})}\label{psubTrivp2}\\
		\NDLinePTG{\PredPhiName{A}:\PhiAppl{A}\to\PhiAppl{A}\to\bool}{\IH,(\ref{psubTrivp1})}\label{psubTrivp3}\\
		\NDLinePTG{\PhiAppl{A}\typeEquals\PhiAppl{A}}{\ruleRef{tpEqRefl},(\ref{psubTrivp2})}\label{psubTrivp4}\\
		\NDLinePTG{\PhiAppl{\subtype{A}{\lambdaFun{x}{A}\T}}\typeEquals\PhiAppl{A}}{(\ref{PTPStype}),(\ref{psubTrivp4})}\nonumber\\
		\NDLinePT{\concatCtx{\PhiAppl{\Gamma}}{x:\PhiAppl{A}}}{\PhiAppl{\lambdaFun{x}{A}\T}\ x}{Lemma~\ref{lem:eqTrue},\ruleRef{beta}}\label{psubTrivp6}\\
		\NDLinePT{\concatCtx{\concatCtx{\PhiAppl{\Gamma}}{x:\PhiAppl{A}}}{\typingAss{A}{x}}}{\PredPhi{A}{x}\land \PhiAppl{\lambdaFun{x}{A}\T}\ x}{\ruleRef{andI},\ruleRef{assume},(\ref{psubTrivp6})}\label{psubTrivp7}\\
		\concatCtx{\concatCtx{\PhiAppl{\Gamma}}{x:\PhiAppl{A}}}{\typingAss{A}{x}\land& \PhiAppl{\lambdaFun{x}{A}\T}\ x}\dedPT\PredPhi{A}{x}&&\text{\ruleRef{andEl},\ruleRef{assume}}\label{psubTrivp8}\\
		\NDLinePT{\concatCtx{\PhiAppl{\Gamma}}{x:\PhiAppl{A}}}{\PredPhi{\left(\subtype{A}{\lambdaFun{x}{A}\T}\right)}{x}\termEqB \PredPhi{A}{x} }{\ruleRef{propExt},(\ref{psubTrivp8}),(\ref{psubTrivp7})}\nonumber
		\end{align}}
	
	\subparagraph{\ruleRef{piPsubCod}:}
	\begin{align}
		\NDLineTG{\Type{A}}{\byAss}\label{piPsubCod1}\\
		\NDLineT{\concatCtx{\Gamma}{x:A}}{\Type{B}}{\byAss}\label{piPsubCod1b}\\
		\NDLineT{\concatCtx{\Gamma}{x:A}}{p:\piType{y}{B}\bool}{\byAss}\label{piPsubCod2}\\
		\NDLinePTG{\Type{\PhiAppl{A}}}{\IH,(\ref{piPsubCod1})}\label{piPsubCod3}\\
		\NDLinePTG{\PredPhiName{A}:\PhiAppl{A}\to\PhiAppl{A}\to\bool}{\IH,(\ref{piPsubCod1})}\label{piPsubCod4}\\
		\concatCtx{\concatCtx{\PhiAppl{\Gamma}}{x:&\PhiAppl{A}}}{\typingAss{A}{x}}\dedPT\Type{\PhiAppl{B}}&&\text{\IH,(\ref{piPsubCod1b})}\label{piPsubCod3bpre}\\
		\NDLinePTG{\Type{\PhiAppl{B}}}{\QQNegSp\hol{} types context independent,(\ref{piPsubCod3bpre})}\label{piPsubCod3b}\\
		\concatCtx{\concatCtx{\PhiAppl{\Gamma}}{x:&\PhiAppl{A}}}{\typingAss{A}{x}}\dedPT\PredPhiName{B}:\PhiAppl{B}\to\PhiAppl{B}\to\bool\QQQNegSp&&\QQQuad\text{\IH,(\ref{piPsubCod1b})}\label{piPsubCod4bpre}\\
		\NDLinePTG{\PredPhiName{B}:\PhiAppl{B}\to\PhiAppl{B}\to\bool}{\QQNegSp\hol{} types context independent,(\ref{piPsubCod4bpre})}\label{piPsubCod4b}\\
		\NDLinePTG{\Type{\PhiAppl{A}\to\PhiAppl{B}}}{\ruleRef{arrow},(\ref{piPsubCod3}),(\ref{piPsubCod3b})}\label{piPsubCod8}\\
		\NDLinePTG{\PhiAppl{A}\to\PhiAppl{B}\typeEquals \PhiAppl{A}\to\PhiAppl{B}}{\ruleRef{congBase},(\ref{piPsubCod8})}\label{piPsubCod9}
		\end{align}
		\begin{align}
		\NDLinePTG{\PhiAppl{\piType{x}{A}\subtype{B}{p}}\typeEquals \PhiAppl{\subtype{(\piType{x}{A}B)}{\lambdaFun{f}{\piType{x}{A}B}\univQuant{x}{A}p\ \left(f\ x\right)}}}{(\ref{PTPStype}),(\ref{PTPitype}),(\ref{piPsubCod9})}\nonumber
	\end{align}
	Denote $F:=\univQuant{x,x'}{\PhiAppl{A}}\termEqT{A}{x}{x'}\impl
	\termEqT{B}{(f\ x)}{(f\ x')}\land  \univQuant{x,x'}{\PhiAppl{A}}\termEqT{A}{x}{x'}\impl \PhiAppl{p}\ (f\ x)$, $G:=\univQuant{x,x'}{\PhiAppl{A}}\termEqT{A}{x}{x'}\impl \left(\termEqT{B}{(f\ x)}{(f\ x')}\land \PhiAppl{p}\ (f\ x)\right)$.
	Further denote $\Delta:=\concatCtx{\PhiAppl{\Gamma}}{f:\PhiAppl{A}\to\PhiAppl{B}},$   $$\Theta:=\concatCtx{\concatCtx{\concatCtx{\Delta}{\namedass{ass_F}{F}}}{x,x':\PhiAppl{A}}}{\namedass{xRx'}{\termEqT{A}{x}{x'}}}$$ and $$\Lambda:=\concatCtx{\concatCtx{\concatCtx{\Delta}{\namedass{ass_G}{G}}}{x,x':\PhiAppl{A}}}{\namedass{xRx'}{\termEqT{A}{x}{x'}}}.$$
	\begin{align}
	\NDLinePT{\Theta}{\termEqT{B}{(f\ x)}{(f\ x')}}{\ruleRef{forallE},\ruleRef{implE},\ruleRef{andEl},\ruleRef{assume},\ruleRef{assume}}\label{piPsubCod13}\\
	\NDLinePT{\Theta}{\PhiAppl{p}\ (f\ x)}{\ruleRef{forallE},\ruleRef{implE},\ruleRef{andEr},\ruleRef{assume},\ruleRef{assume}}\label{piPsubCod14}\\
	\NDLinePT{\Theta}{\termEqT{B}{(f\ x)}{(f\ x')}\land\PhiAppl{p}\ (f\ x)\QQNegSp}{\QQuad\ruleRef{andI},(\ref{piPsubCod13}),(\ref{piPsubCod14})}\label{piPsubCod15}\\
	\NDLinePT{\concatCtx{\Delta}{\namedass{ass_F}{F}}}{G}{\ruleRef{forallI},\ruleRef{forallI},\ruleRef{implI},(\ref{piPsubCod15})}\label{piPsubCod16}
	\end{align}
	\begin{align}
		\NDLinePT{\Lambda}{\termEqT{B}{(f\ x)}{(f\ x')}\land\PhiAppl{p}\ (f\ x)\QNegSp}{\ruleRef{implE},\ruleRef{forallE},\ruleRef{forallE},\ruleRef{assume},\ruleRef{assume}}\label{piPsubCod17}\\
		\NDLinePT{\Lambda}{\termEqT{B}{(f\ x)}{(f\ x')}}{\ruleRef{andEl},(\ref{piPsubCod17})}\label{piPsubCod18}\\
		\NDLinePT{\Lambda}{\PhiAppl{p}\ (f\ y)}{\ruleRef{andEl},(\ref{piPsubCod17})}\label{piPsubCod19}\\
		\NDLinePT{\concatCtx{\Delta}{\namedass{ass_G}{G}}}{\univQuant{x,x'}{\PhiAppl{A}}\termEqT{A}{x}{x'}\impl
		\termEqT{B}{(f\ x)}{(f\ x')}\QQQNegSp}{\QQuad\quad\ruleRef{forallI},\ruleRef{forallI},\ruleRef{implI},(\ref{piPsubCod17})}\label{piPsubCod20}\\
		\NDLinePT{\concatCtx{\Delta}{\namedass{ass_G}{G}}}{\univQuant{x,x'}{\PhiAppl{A}}\termEqT{A}{x}{x'}\impl
		\PhiAppl{p}\ (f\ x)\QQQNegSp}{\QQuad\quad\ruleRef{forallI},\ruleRef{forallI},\ruleRef{implI},(\ref{piPsubCod18})}\label{piPsubCod21}\\
		\NDLinePT{\concatCtx{\Delta}{\namedass{ass_G}{G}}}{F}{\ruleRef{andI},(\ref{piPsubCod20}),(\ref{piPsubCod21})}\label{piPsubCod22}\\
		\NDLinePTD{F\termEqB G}{\ruleRef{propExt},(\ref{piPsubCod22}),(\ref{piPsubCod22})}\label{piPsubCod23}\\
		\NDLinePTD{\PhiAppl{\lambdaFun{f}{(\piType{x}{A}B)}\univQuant{x}{A}p\ (f\ x)}\ f\termEqB\nonumber\QQQNegSp&&\\& \univQuant{x,x'}{\PhiAppl{A}}\termEqT{A}{x}{x'}\impl \PhiAppl{p}\ (f\ x)\QQQNegSp}{\QQuad\quad\ruleRef{beta}}\label{piPsubCod24}
	\end{align}
	\begin{align}
		\NDLinePTD{\PredPhi{\subtype{(\piType{x}{A}B)}{\lambdaFun{f}{(\piType{x}{A}B)}\univQuant{x}{A}p\ (f\ x)}}{f}\termEqB F}{\ruleRef{rewrite},\ruleRef{refl},(\ref{piPsubCod24})}\label{piPsubCod25}\\
		\NDLinePTD{\PredPhi{\subtype{(\piType{x}{A}B)}{\lambdaFun{f}{(\piType{x}{A}B)}\univQuant{x}{A}p\ (f\ x)}}{f}\termEqB G}{\ruleRef{trans},(\ref{piPsubCod25}),(\ref{piPsubCod23})}\label{piPsubCod26}\\
		\NDLinePTD{\PredPhi{(\piType{x}{A}\subtype{B}{p})}{f}\termEqB\nonumber\\& \PredPhi{\big(\subtype{(\piType{x}{A}B)}{\lambdaFun{f}{(\piType{x}{A}B)}\univQuant{x}{A}p\ (f\ x)}\big)}{f}}{(\ref{PTPipred}),(\ref{PTPSpred}),(\ref{piPsubCod26})}\nonumber
	\end{align}
	
	\subparagraph{\ruleRef{psubQsub}:}
	\begin{align}
		\NDLineTG{\Type{A}}{\byAss}\label{psubQsub1}\\
		\NDLineTG{p:\piType{y}{B}\bool}{\byAss}\label{psubQsub2}\\
		\NDLineTG{q:\piType{y}{B}\bool}{\byAss}\label{psubQsub2b}\\
		\NDLinePTG{\Type{\PhiAppl{A}}}{\IH,(\ref{psubQsub1})}\label{psubQsub3}\\
		\NDLinePTG{\PredPhiName{A}:\PhiAppl{A}\to\PhiAppl{A}\to\bool}{\IH,(\ref{psubQsub1})}\label{psubQsub4}\\
		\NDLinePTG{\PhiAppl{p}:\PhiAppl{B}\to\bool}{\IH,(\ref{psubQsub2})}\label{psubQsub5}\\
		\NDLinePTG{\PhiAppl{q}:\PhiAppl{B}\to\bool}{\IH,(\ref{psubQsub2})}\label{psubQsub5b}\\
		\NDLinePTG{\PhiAppl{A}\typeEquals\PhiAppl{A}}{\ruleRef{congBase},(\ref{psubQsub3})}\label{psubQsub6}\\
		\NDLinePTG{\PhiAppl{\subtype{\subtype{A}{p}}{q}}\typeEquals\PhiAppl{\subtype{A}{\lambdaFun{x}{A}p\ x\land q\ x}}}{(\ref{PTPStype}),(\ref{psubQsub6})}\nonumber\\
		\NDLinePT{\concatCtx{\PhiAppl{\Gamma}}{x:\PhiAppl{A}}}{\PredPhi{A}{x}\land \PhiAppl{p}\ x\land \PhiAppl{q}\ x:\bool}{\ruleRef{appl} repeatedly,\ruleRef{varS},(\ref{psubQsub5}),(\ref{psubQsub5b})}\label{psubQsub8}\\
		\NDLinePT{\concatCtx{\PhiAppl{\Gamma}}{x:\PhiAppl{A}}}{\PredPhi{A}{x}\land \PhiAppl{p}\ x\land \PhiAppl{q}\ x\termEqB\nonumber\\&\PredPhi{A}{x}\land \PhiAppl{p}\ x\land \PhiAppl{q}\ x}{\ruleRef{refl},(\ref{psubQsub8})}\label{psubQsub8a}\\
		\NDLinePT{\concatCtx{\PhiAppl{\Gamma}}{x:\PhiAppl{A}}}{\left(\lambdaFun{x}{\PhiAppl{A}}\PhiAppl{p}\ x\land \PhiAppl{q}\ x\right)\ x:\bool}{\ruleRef{appl},\ruleRef{lambda},(\ref{psubQsub8}),\ruleRef{varS}}\label{psubQsub9}\\
		\NDLinePT{\concatCtx{\PhiAppl{\Gamma}}{x:\PhiAppl{A}}}{\left(\lambdaFun{x}{\PhiAppl{A}}\PhiAppl{p}\ x\land \PhiAppl{q}\ x\right)\ x\nonumber\\&\termEqB\PhiAppl{p}\ x\land \PhiAppl{q}\ x}{\ruleRef{beta},(\ref{psubQsub9})}\label{psubQsub10}\\
		\NDLinePT{\concatCtx{\PhiAppl{\Gamma}}{x:\PhiAppl{A}}}{\PredPhi{\left(\subtype{\subtype{A}{p}}{q}\right)}{x}\termEqB\PredPhi{\left(\subtype{A}{\lambdaFun{x}{A}p\ x\land q\ x}\right)}{x}}{\ruleRef{rewrite},(\ref{psubQsub8a}),(\ref{psubQsub10})}\nonumber		
	\end{align}
	
	\paragraph{Subtyping}
	Subtyping can be shown using the rules \ruleRef{subtPsub}, \ruleRef{subtI}, \ruleRef{subtPi} and \ruleRef{subtPsubCong}.
	
	\subparagraph{\ruleRef{subtPsub}:}
	\begin{align}
		\NDLineTG{A\subtyping A'}{\byAss}\label{subtPsub1}\\
		\NDLinePTG{\PhiAppl{A}\typeEquals \PhiAppl{A'}}{\IH,(\ref{subtPsub1})}\nonumber\\
		\NDLinePT{\concatCtx{\PhiAppl{\Gamma}}{x,y:\PhiAppl{A}}}{\termEqT{A}{x}{y}\impl \termEqT{A'}{x}{y}}{\IH,(\ref{subtPsub1})}\label{subtPsub3}\\
		\concatCtx{\concatCtx{\PhiAppl{\Gamma}}{x,y:\PhiAppl{A}}}{\namedass{xRy}{&\termEqT{\left(\subtype{A}{p}\right)}{x}{y}}}\dedPT\PredPhi{A'}{x}{y}&&\text{\ruleRef{implE},(\ref{subtPsub3}),\ruleRef{andEl},\ref{PTPSpred},\ruleRef{assume}}\label{subtPsub4}\\
		\NDLinePT{\concatCtx{\PhiAppl{\Gamma}}{x,y:\PhiAppl{A}}}{\termEqT{\left(\subtype{A}{p}\right)}{x}{y}\impl \termEqT{A'}{x}{y}}{\ruleRef{implI},(\ref{subtPsub4})}\nonumber
	\end{align}
	\subparagraph{\ruleRef{subtI}:}
	\begin{align}
		\NDLineTG{A\typeEquals A'}{\byAss}\label{subtI1}\\
		\NDLinePTG{\PhiAppl{A}\typeEquals\PhiAppl{A'}}{\IH,(\ref{subtI1})}\label{subtI2}\\
		\NDLinePTG{\PredPhiName{A}\termEquals{\PhiAppl{A}\to\PhiAppl{A}\to\bool}\PredPhiName{A'}}{\IH,(\ref{subtI1})}\label{subtI3}\\
		\concatCtx{\concatCtx{\PhiAppl{\Gamma}}{x,y:\PhiAppl{A}}}{\namedass{xRy}{&\termEqT{A}{x}{y}}}\dedPT\termEqT{A'}{x}{y}&&\text{\ruleRef{rewrite},\ruleRef{assume},(\ref{subtI3})}\label{subtI4}\\
		\concatCtx{\PhiAppl{\Gamma}}{x,y:\PhiAppl{A}}\dedPT&\termEqT{A}{x}{y}\impl \termEqT{A'}{x}{y}&&\text{\ruleRef{implI},(\ref{subtI4})}\nonumber
	\end{align}
	\subparagraph{\ruleRef{subtPi}:}
	{\small
	\begin{align}
		\NDLineTG{A'\subtyping A}{\byAss}\label{subtPi1}\\
		\NDLineT{\concatCtx{\Gamma}{x:A'}}{B\subtyping B'}{\byAss} \label{subtPi2}\\
		\NDLinePTG{\PhiAppl{A'}\typeEquals \PhiAppl{A}}{\IH,(\ref{subtPi1})}\label{subtPi3}
		\end{align}
		Since \hol{} types cannot depend on terms:	
		\begin{align}
		\NDLinePTG{\PhiAppl{B'}\typeEquals \PhiAppl{B}}{\IH,(\ref{subtPi2})}\label{subtPi4}\\
		\NDLinePTG{\PhiAppl{A}\to\PhiAppl{B}\typeEquals\PhiAppl{A'}\to\PhiAppl{B'}}{\ruleRef{congTo},(\ref{subtPi3}),(\ref{subtPi4})}\nonumber\\
		\NDLinePT{\concatCtx{\PhiAppl{\Gamma}}{x,y:\PhiAppl{A}}}{\termEqT{A'}{x}{y}\impl \termEqT{A}{x}{y}}{\IH,(\ref{subtPi1})}\label{subtPi6}\\
		\concatCtx{\concatCtx{\concatCtx{\PhiAppl{\Gamma}}{x,y:\PhiAppl{A'}}}{\namedass{xRy}{\termEqT{A'}{x}{y}}}}{z,z':\PhiAppl{B'}}\dedPT\QQQNegSp&\QQQuad\termEqT{B}{z}{z'}\impl\termEqT{B'}{z}{z'}\QQQNegSp&&\QQuad\text{\IH,(\ref{subtPi2})}\label{subtPi7}\\
		\NDLinePT{\concatCtx{\concatCtx{\PhiAppl{\Gamma}}{x,y:\PhiAppl{A'}}}{\typingAss{A'}{x}}}{\univQuant{z,z'}{\PhiAppl{B'}}\termEqT{B}{z}{z'}\impl\termEqT{B'}{z}{z'}\QQNegSp}{\QQuad \ruleRef{forallI},\ruleRef{forallI},(\ref{subtPi7})}\label{subtPi8}
		\end{align}
		\begin{align}
		\concatCtx{\concatCtx{\concatCtx{\PhiAppl{\Gamma}}{f:\PhiAppl{\piType{x}{A}B}}}{x,y:\PhiAppl{A'}}}{\namedass{xRy}{\termEqT{A'}{x}{y}}}\dedPT \QQQQQNegSp\nonumber\\
		&\termEqT{B}{\left(f\ x\right)}{\left(f\ y\right)}\impl\termEqT{B'}{\left(f\ x\right)}{\left(f\ y\right)} &&\text{\ruleRef{forallE},\ruleRef{forallE},(\ref{subtPi8})} \label{subtPiPre3}\\
		\concatCtx{\concatCtx{\PhiAppl{\Gamma}}{f:\PhiAppl{\piType{x}{A}B}}}{x,y:\PhiAppl{A'}}\dedPT \left(\termEqT{A}{x}{y}\impl \termEqT{B}{\left(f\ x\right)}{\left(f\ y\right)}\impl\right)\QQQQQNegSp\nonumber\\
		&\left(\termEqT{A'}{x}{y}
		\impl \termEqT{B'}{\left(f\ x\right)}{\left(f\ y\right)}\right)
		&&\text{\ruleRef{implFunctorial},(\ref{subtPi6}),(\ref{subtPiPre3})}\label{subtPiPre2}\\
		\NDLinePT{\concatCtx{\PhiAppl{\Gamma}}{f:\PhiAppl{\piType{x}{A}B}}}{\left(\univQuant{x,y}{\PhiAppl{A'}}\termEqT{A}{x}{y}\impl \termEqT{B}{\left(f\ x\right)}{\left(f\ y\right)}\right) \impl \QQQNegSp\nonumber\\&
		\left(\univQuant{x,y}{\PhiAppl{A'}}\termEqT{A'}{x}{y}
		\impl \termEqT{B'}{\left(f\ x\right)}{\left(f\ y\right)}\right)}{\ruleRef{forallImpl},\ruleRef{forallImpl},(\ref{subtPiPre2})}\label{subtPiPre1}\\
		\NDLinePT{\concatCtx{\PhiAppl{\Gamma}}{f:\PhiAppl{\piType{x}{A}B}}}{\PredPhi{\left(\piType{x}{A}B\right)}{f}\impl \PredPhi{\left(\piType{x}{A'}B'\right)}{f}}{\ref{PTPipred},(\ref{subtPiPre1})}\nonumber
	\end{align}
	}
	\subparagraph{\ruleRef{subtPsubCong}:}
	{\small
	\begin{align}
		\NDLineTG{A\subtyping A'}{\byAss}\label{subtPsubCong1}\\
		\NDLineT{\concatCtx{\Gamma}{x:A}}{p\ x\impl p'\ x}{\byAss}\label{subtPsubCong2}\\
		\NDLinePTG{\PhiAppl{A}\typeEquals\PhiAppl{A'}}{\IH,(\ref{subtPsubCong1})}\nonumber\\
		\NDLinePT{\concatCtx{\PhiAppl{\Gamma}}{x,y:\PhiAppl{A}}}{\termEqT{A}{x}{y}\impl \termEqT{A'}{x}{y}}{\IH,(\ref{subtPsubCong1})}\label{subtPsubCong4}\\
		\NDLinePT{\concatCtx{\concatCtx{\PhiAppl{\Gamma}}{x:\PhiAppl{A}}}{\typingAss{A}{x}}}{\PhiAppl{p}\ x\impl \PhiAppl{p'}\ x}{\IH,(\ref{subtPsubCong2})}\label{subtPsubCong5}\\
		\NDLinePTG{\univQuant{x}{\PhiAppl{A}}\PredPhi{A}{x}\impl \PhiAppl{p}\ x\impl \PhiAppl{p'}\ x\QQQNegSp}{\QQuad\quad\ruleRef{forallI},\ruleRef{implI},(\ref{subtPsubCong5})}\label{subtPsubCong5a}\\
		\NDLinePT{\concatCtx{\concatCtx{\PhiAppl{\Gamma}}{x,y:\PhiAppl{A'}}}{\namedass{xRy}{\termEqT{(\subtype{A}{p})}{x}{y}}}}{\PhiAppl{p}\ x\land \PhiAppl{p}\ y}{\ruleRef{andEr},\ruleRef{assume}}\label{subtPsubCong6}\\
		\NDLinePT{\concatCtx{\concatCtx{\PhiAppl{\Gamma}}{x,y:\PhiAppl{A'}}}{\namedass{xRy}{\termEqT{(\subtype{A}{p})}{x}{y}}}}{\termEqT{A}{x}{y}}{\ruleRef{andEl},\ruleRef{assume}}\label{subtPsubCong7}\\
		\NDLinePT{\concatCtx{\concatCtx{\PhiAppl{\Gamma}}{x,y:\PhiAppl{A'}}}{\namedass{xRy}{\termEqT{(\subtype{A}{p})}{x}{y}}}}{\termEqT{A}{x}{y}\termEqB\PredPhi{A}{x}}{(\ref{correct:substRelatTerms}),\ref{PTTpBoolPred},\ruleRef{refl},(\ref{subtPsubCong7})}\label{subtPsubCong8pre6}\\
		\NDLinePT{\concatCtx{\concatCtx{\PhiAppl{\Gamma}}{x,y:\PhiAppl{A'}}}{\namedass{xRy}{\termEqT{(\subtype{A}{p})}{x}{y}}}}{\termEqT{A}{x}{y}\termEqB\PredPhi{A}{y}}{(\ref{correct:substRelatTerms}),\ref{PTTpBoolPred},\ruleRef{refl},(\ref{subtPsubCong7})}\label{subtPsubCong8pre5}\\
		\NDLinePT{\concatCtx{\concatCtx{\PhiAppl{\Gamma}}{x,y:\PhiAppl{A'}}}{\namedass{xRy}{\termEqT{(\subtype{A}{p})}{x}{y}}}}{\PredPhi{A}{x}}{\ruleRef{congDed},(\ref{subtPsubCong8pre6}),(\ref{subtPsubCong7})}\label{subtPsubCong8pre4}\\
		\NDLinePT{\concatCtx{\concatCtx{\PhiAppl{\Gamma}}{x,y:\PhiAppl{A'}}}{\namedass{xRy}{\termEqT{(\subtype{A}{p})}{x}{y}}}}{\PredPhi{A}{y}}{\ruleRef{congDed},(\ref{subtPsubCong8pre5}),(\ref{subtPsubCong7})}\label{subtPsubCong8pre3}\\
		\concatCtx{\concatCtx{\concatCtx{\PhiAppl{\Gamma}}{x,y:\PhiAppl{A'}}}{\namedass{xRy}{\termEqT{(\subtype{A}{p})}{x}{y}}}}{\namedass{ass}{\PhiAppl{p}\ x\land \PhiAppl{p}\ y}}\dedPT\QQQNegSp\QNegSp&\QQQuad\Quad\PhiAppl{p'}\ x&&\text{\ruleRef{implE},\ruleRef{forallE},(\ref{subtPsubCong5})}\label{subtPsubCong8pre2}\\
		\concatCtx{\concatCtx{\concatCtx{\PhiAppl{\Gamma}}{x,y:\PhiAppl{A'}}}{\namedass{xRy}{\termEqT{(\subtype{A}{p})}{x}{y}}}}{\namedass{ass}{\PhiAppl{p}\ x\land \PhiAppl{p}\ y}}
		\dedPT\QQQNegSp\QNegSp&\QQQuad\Quad\PhiAppl{p'}\ y&&\text{\ruleRef{implE},\ruleRef{forallE},(\ref{subtPsubCong5})}\label{subtPsubCong8pre1}\\
		\NDLinePT{\concatCtx{\concatCtx{\PhiAppl{\Gamma}}{x,y:\PhiAppl{A'}}}{\namedass{xRy}{\termEqT{(\subtype{A}{p})}{x}{y}}}}{\PhiAppl{p}\ x\land \PhiAppl{p}\ y\impl \PhiAppl{p'}\ x\land \PhiAppl{p'}\ y\QQNegSp}{\Quad\ruleRef{implI},\ruleRef{andI},(\ref{subtPsubCong8pre2}),(\ref{subtPsubCong8pre1})}\label{subtPsubCong8}\\
		\NDLinePT{\concatCtx{\concatCtx{\PhiAppl{\Gamma}}{x,y:\PhiAppl{A'}}}{\PredPhi{\left(\subtype{A}{p}\right)}{x}}}{\PhiAppl{p'}\ x\land \PhiAppl{p'}\ y}{\ruleRef{implE},(\ref{subtPsubCong8}),(\ref{subtPsubCong6})}\label{subtPsubCong9}\\
		\NDLinePT{\concatCtx{\concatCtx{\PhiAppl{\Gamma}}{x,y:\PhiAppl{A'}}}{\namedass{xRy}{\termEqT{(\subtype{A}{p})}{x}{y}}}}{\termEqT{A'}{x}{y}}{\ruleRef{implE},(\ref{subtPsubCong4}),(\ref{subtPsubCong7})}\label{subtPsubCong10}\\	
		\NDLinePT{\concatCtx{\concatCtx{\PhiAppl{\Gamma}}{x,y:\PhiAppl{A'}}}{\namedass{xRy}{\termEqT{(\subtype{A}{p})}{x}{y}}}}{\PredPhi{\left(\subtype{A'}{p'}\right)}{x}}{\ruleRef{andI},\ruleRef{andI},(\ref{subtPsubCong10}),(\ref{subtPsubCong9}),(\ref{subtPsubCong9})}\label{subtPsubCong11}\\
		\NDLinePT{\concatCtx{\PhiAppl{\Gamma}}{x,y:\PhiAppl{\subtype{A'}{p'}}}}{\termEqT{\left(\subtype{A}{p}\right)}{x}{y} \impl \termEqT{\left(\subtype{A'}{p'}\right)}{x}{y}\QQNegSp}{\QQuad\ref{PTPStype},\ruleRef{implI},(\ref{subtPsubCong11})}
		\nonumber
	\end{align}}
	
	\paragraph{Typing}
	Typing can be shown using the rules \ruleRef{psubI}, \ruleRef{psubE}, \ruleRef{lambda'}, \ruleRef{appl'}, \ruleRef{implType'}, \ruleRef{const}, \ruleRef{congColon},\ruleRef{var}, \ruleRef{eqType}:
	
	\subparagraph{\ruleRef{psubI}:}
	\begin{align}
	\NDLineTG{t:A}{\byAss}\label{psubI1}\\
	\NDLineTG{p\ t}{\byAss}\label{psubI2}\\
	\NDLinePTG{\PhiAppl{t}:\PhiAppl{A}}{\IH,(\ref{psubI1})}\label{psubI3}\\
	\NDLinePTG{\PredPhi{A}\ \PhiAppl{t}}{\IH,(\ref{psubI1})}\label{psubI4}\\
	\NDLinePTG{\PhiAppl{p}\ \PhiAppl{t}}{\IH,(\ref{psubI2})}\label{psubI5}\\
	\NDLinePTG{\PredPhi{\left(\subtype{A}{p}\right)}\ \PhiAppl{t}}{\ref{PTPSpred},\ruleRef{andI},(\ref{psubI4}),\ruleRef{andI},(\ref{psubI5}),(\ref{psubI5})}\nonumber\\
	\NDLinePTG{\PhiAppl{t}:\PhiAppl{\subtype{A}{p}}}{\ref{PTPStype},(\ref{psubI3})}\nonumber
	\end{align}
	\subparagraph{\ruleRef{psubE}:}
	\begin{align}
	\NDLineTG{t:A}{\byAss}\label{subtE1}\\
	\NDLineTG{A\subtyping A'}{\byAss}\label{subtE2}\\
	\NDLinePTG{\PhiAppl{t}:\PhiAppl{A}}{\IH,(\ref{subtE1})}\label{subtE3}\\
	\NDLinePTG{\PhiAppl{A}\typeEquals\PhiAppl{A'}}{\IH,(\ref{subtE2})}\label{subtE4}\\
	\NDLinePTG{\PhiAppl{t}:\PhiAppl{A'}}{\ruleRef{congColon},(\ref{subtE3}),(\ref{subtE4})}\nonumber\\
	\NDLinePTG{\PredPhi{A}{\PhiAppl{t}}}{\IH,(\ref{subtE3})}\label{subtE6}\\
	\NDLinePT{\concatCtx{\PhiAppl{\Gamma}}{x:\PhiAppl{A'}}}{\PredPhi{A}{x}\impl \PredPhi{(A')}{x}}{\IH,(\ref{subtE2})}\label{subtE7}\\
	\NDLinePTG{\univQuant{x}{\PhiAppl{A'}} \PredPhi{A}{x}\impl \PredPhi{(A')}{x}}{\IH,(\ref{subtE2})}\label{subtE8}\\
	\NDLinePTG{\PredPhi{(A')}{t}}{\ruleRef{implE},\ruleRef{forallE},\ref{subtE8},(\ref{subtE6})}\nonumber
	\end{align}
	
	\subparagraph{\ruleRef{lambda'}:}	
	{\small
	\begin{align}
	\NDLineT{\concatCtx{\Gamma}{x:\PhiAppl{A}}}{t:B}{\byAss}\label{lambdap1}\\
	\NDLinePT{\concatCtx{\concatCtx{\Gamma}{x:\PhiAppl{A}}}{\typingAss{A}{x}}}{\PhiAppl{t}:\PhiAppl{B}}{\IH,\ref{PTctxVar},(\ref{lambdap1})}\label{lambdap2}\\
	\NDLinePT{\concatCtx{\concatCtx{\Gamma}{x:\PhiAppl{A}}}{\typingAss{A}{x}}}{\PredPhi{B}{\PhiAppl{t}}}{\IH,\ref{PTctxVar},(\ref{lambdap1})}\label{lambdap5}\\
	\NDLinePT{\concatCtx{\concatCtx{\Gamma}{x,y:\PhiAppl{A}}}{\namedass{xRy}{\termEqT{A}{x}{y}}}}{\termEqT{B}{\PhiAppl{t}}{\subst{\PhiAppl{t}}{x}{y}}}{(\ref{correct:substRelatTerms}),(\ref{lambdap5})}\label{lambdap5b}\\
	\Gamma\dedT \univQuant{x,y}{\PhiAppl{A}}&\termEqT{A}{x}{y}\impl \termEqT{B}{\PhiAppl{t}}{\subst{\PhiAppl{t}}{x}{y}} \QQQNegSp&&\QQuad \text{\ruleRef{implI},\ruleRef{forallI},(\ref{lambdap5})}\label{lambdap6}\\
	\NDLinePT{\concatCtx{\Gamma}{x:A}}{\PhiAppl{t}:\PhiAppl{B}}{\QQNegSp\QNegSp typing independent of assumptions,(\ref{lambdap2})}\label{lambdap3}\\
	\NDLinePTG{(\lambdaFun{x}{\PhiAppl{A}}\PhiAppl{t}):\PhiAppl{A}\to\PhiAppl{B}}{\ruleRef{lambda},(\ref{lambdap3})}\label{lambdap4}\\
	\NDLinePTG{\PhiAppl{\lambdaFun{x}{A}t}:\PhiAppl{\piType{x}{A}B}\QNegSp}{\QQuad\ref{PTLam},\ref{PTPitype},(\ref{lambdap4})}\nonumber\\
	\NDLinePTG{\PredPhi{(\piType{x}{A}B)}\ \PhiAppl{\lambdaFun{x}{A}t}\QQNegSp}{\QQuad\ref{PTPipred},(\ref{lambdap5b})}\nonumber
	\end{align}
	}
	
	\subparagraph{\ruleRef{appl'}:}
	\begin{align}
	\NDLineTG{f:\piType{x}{A}B}{\byAss}\label{applp1}\\
	\NDLineTG{t:A}{\byAss}\label{applp2}\\
	\NDLinePTG{\PhiAppl{f}:\PhiAppl{A}\to\PhiAppl{B}}{\IH,\ref{PTPitype},(\ref{applp1})}\label{applp3}\\
	\NDLinePTG{\PredPhi{(\piType{x}{A}B)}{\PhiAppl{f}}}{\IH,\ref{PTPitype},(\ref{applp1})}\label{applp6}\\
	\NDLinePTG{\univQuant{x}{\PhiAppl{A}}\univQuant{y}{\PhiAppl{A}}\nonumber\\&
		\termEqT{A}{x}{y}\impl \termEqT{B}{(\PhiAppl{f}\ x)}{(\PhiAppl{f}\ y)}}{\ref{PTPipred},(\ref{applp6})}\label{applp7}\\
	\NDLinePTG{\PhiAppl{t}:\PhiAppl{A}}{\IH,(\ref{applp2})}\label{applp4}\\
	\NDLinePTG{\PredPhi{A}{\PhiAppl{t}}}{\IH,(\ref{applp2})}\label{applp8}\\
	\NDLinePTG{\PredPhi{A}{\PhiAppl{t}}\impl \PredPhi{B}{(\PhiAppl{f}\ \PhiAppl{t})}}{\ruleRef{forallE},\ruleRef{forallE},(\ref{applp7}),(\ref{applp4}),(\ref{applp4})}\label{applp9}\\
	\NDLinePTG{\PredPhi{B}{(\PhiAppl{f}\ \PhiAppl{t})}}{\ruleRef{implE},(\ref{applp9}),(\ref{applp8})}\label{applp10}\\
	\NDLinePTG{\PhiAppl{f}\ \PhiAppl{t}:\PhiAppl{B}}{\ruleRef{appl},(\ref{applp3}),(\ref{applp4})}\label{applp5}\\
	\NDLinePTG{\PhiAppl{f\ t}:\PhiAppl{B}}{\ref{PTappl},(\ref{applp5})}\nonumber\\
	\NDLinePTG{\PredPhi{B}{\PhiAppl{f\ t}}}{\ref{PTappl},(\ref{applp9})}\nonumber
	\end{align}
	
	\subparagraph{\ruleRef{implType'}:}
	\begin{align}
	\NDLineTG{F:\bool}{\byAss}\label{implType1}\\
	\NDLineT{\concatCtx{\Gamma}{\namedass{ass}{F}}}{G:\bool}{\byAss}\label{implType2}\\
	\NDLinePTG{\PhiAppl{F}:\bool}{\IH,(\ref{implType1})}\label{implType3}\\
	\NDLinePT{\concatCtx{\PhiAppl{\Gamma}}{\PhiAppl{F}}}{\PhiAppl{G}:\bool}{\IH,(\ref{implType2})}\label{implType4}\\
	\NDLinePTG{\PhiAppl{G}:\bool}{typing is independent of assumptions,(\ref{implType4})}\label{implType5}\\
	\NDLinePTG{\PhiAppl{F}\impl \PhiAppl{G}:\bool}{\ruleRef{implType},(\ref{implType3}),(\ref{implType5})}\label{implType6}\\
	\NDLinePTG{\PhiAppl{F\impl G}:\bool}{\ref{PTImpl},(\ref{implType6})}\label{implType7}\\
	\NDLinePTG{\boolPred{\PhiAppl{F \impl G}}}{(\ref{PTTpBoolPred}),\ruleRef{refl},(\ref{implType7})}\nonumber
	\end{align}
	
	\subparagraph{\ruleRef{const''}:}
	\begin{align}
	&\thyIn{c:A'}{T}&&\text{\byAss}\label{const1}\\
	\NDLineTG{A'\subtyping A}{\byAss}\label{const2}\\
	&\thyIn{c:\PhiAppl{A'}}{\PhiAppl{T}}&&\text{\ref{PTtermDecl},(\ref{const1})}\label{const3}\\
	&\thyIn{\typingAx{A'}{c}}{\PhiAppl{T}}&&\text{\ref{PTtermDecl},(\ref{const1})}\label{const6}\\
	\NDLinePTG{\PhiAppl{A'}\typeEquals\PhiAppl{A}}{\IH,(\ref{const2})}\label{const4}\\
	\NDLinePT{\concatCtx{\PhiAppl{\Gamma}}{x,y:\PhiAppl{A'}}}{\termEqT{A'}{x}{y}\impl \termEqT{A}{x}{y}}{\IH,(\ref{const2})}\label{const4a}\\
	\NDLinePTG{\univQuant{x,y}{\PhiAppl{A}}\termEqT{A'}{x}{y}\impl \termEqT{A}{x}{y}}{\ruleRef{forallI},\ruleRef{forallI},(\ref{const4a})}\label{const4b}\\
	\NDLinePTG{c:\PhiAppl{A}}{\ruleRef{const},(\ref{const3}),(\ref{const4})}\label{const5}\\
	\NDLinePTG{\PhiAppl{c}:\PhiAppl{A}}{\ref{PTtermDecl},(\ref{const5})}\nonumber\\
	\NDLinePTG{\PredPhi{A'}{ \PhiAppl{c}}}{\ref{PTtermDecl},\ruleRef{axiom},(\ref{const6})}\label{const7}\\
	\NDLinePTG{\PredPhi{A}{\PhiAppl{c}}}{\ruleRef{forallE},\ruleRef{forallE},\ruleRef{implE},(\ref{const4b}),(\ref{const7})}\nonumber
	\end{align}
	
	\subparagraph{\ruleRef{var''}:}
	\begin{align}
	&\ctxIn{x:A'}{\Gamma}&&\text{\byAss}\label{var1}\\
	\NDLineTG{A'\subtyping A}{\byAss}\label{var2}\\
	&\ctxIn{c:\PhiAppl{A'}}{\PhiAppl{\Gamma}}&&\text{\ref{PTtermDecl},(\ref{var1})}\label{var3}\\
	&\ctxIn{\typingAss{A'}{c}}{\PhiAppl{\Gamma}}&&\text{\ref{PTtermDecl},(\ref{var1})}\label{var6}\\
	\NDLinePTG{\PhiAppl{A'}\typeEquals\PhiAppl{A}}{\IH,(\ref{var2})}\label{var4}\\
	\NDLinePT{\concatCtx{\PhiAppl{\Gamma}}{x,y:\PhiAppl{A'}}}{\termEqT{A'}{x}{y}\impl \termEqT{A}{x}{y}}{\IH,(\ref{var2})}\label{var4a}\\
	\NDLinePTG{\univQuant{x,y}{\PhiAppl{A}}\termEqT{A'}{x}{y}\impl \termEqT{A}{x}{y}}{\ruleRef{forallI},\ruleRef{forallI},(\ref{var4a})}\label{var4b}\\
	\NDLinePTG{c:\PhiAppl{A}}{\ruleRef{var},(\ref{var3}),(\ref{var4})}\label{var5}\\
	\NDLinePTG{\PhiAppl{c}:\PhiAppl{A}}{\ref{PTtermDecl},(\ref{var5})}\nonumber\\
	\NDLinePTG{\PredPhi{A'}{ \PhiAppl{c}}}{\ref{PTtermDecl},\ruleRef{assume},(\ref{var6})}\label{var7}\\
	\NDLinePTG{\PredPhi{A}{\PhiAppl{c}}}{\ruleRef{forallE},\ruleRef{forallE},\ruleRef{implE},(\ref{var4b}),(\ref{var7})}\nonumber
	\end{align}
		
	\subparagraph{\ruleRef{eqType}:}
	\begin{align}
	\NDLineTG{s:A}{\byAss}\label{eqType1}\\
	\NDLineTG{t:A}{\byAss}\label{eqType2}\\
	\NDLinePTG{\PhiAppl{s}:\PhiAppl{A}}{\IH,(\ref{eqType1})}\label{eqType3}\\
	\NDLinePTG{\PhiAppl{t}:\PhiAppl{A}}{\IH,(\ref{eqType2})}\label{eqType4}\\
	\NDLinePTG{\PredPhi{A}{\PhiAppl{s}}}{\IH,(\ref{eqType1})}\label{eqType5}\\	
	\NDLineTG{\Type{A}}{\ruleRef{typingTp},(\ref{eqType1})}\label{eqType11}
	\intertext{Since we are proving completeness w.r.t. the individual judgements one by one, we may assume here that the translation is complete w.r.t. those judgements, for which we already proved completeness. We are therefore allowed to use completeness w.r.t. well-formedness of types here. }
	\NDLinePTG{\PredPhiName{A}:\PhiAppl{A}\to\PhiAppl{A}\to\bool}{completeness w.r.t. well-formedness of $A$,(\ref{eqType11})}\label{eqType12}\\
	\NDLinePTG{\termEqT{A}{\PhiAppl{s}}{\PhiAppl{t}}:\bool}{\ruleRef{appl},\ruleRef{appl},(\ref{eqType12}),(\ref{eqType3}),(\ref{eqType4})}\label{eqType13}\\
	\NDLinePTG{\boolPred{\termEqT{A}{\PhiAppl{s}}{\PhiAppl{t}}}}{(\ref{PTTpBoolPred}),\ruleRef{refl},(\ref{eqType13})}\nonumber
	\end{align}
	\paragraph{Term equality}\label{par:soundnessTmEq}
	Fix a context. 
	By rule \ruleRef{rewrite}, if we can show for two \dphol{} terms $s,t:A$ that $\PhiAppl{s}\termEquals{\PhiAppl{A}}\PhiAppl{t}$ and additionally that $\PredPhi{A}{\PhiAppl{s}}$, then $\PredPhi{A}{\PhiAppl{t}}$ and $\termEqT{A}{\PhiAppl{s}}{\PhiAppl{t}}$ follow. 
	By rule \ruleRef{eqTyping} and rule \ruleRef{sym} we further yield $\PhiAppl{s}:\PhiAppl{A}$ and $\PhiAppl{t}:\PhiAppl{A}$.
	This reduces the completeness claim for a term-equality $s\termEquals{A}t$ to showing $\PhiAppl{s}\termEquals{\PhiAppl{A}}\PhiAppl{t}$ and $\PredPhi{A}{\PhiAppl{s}}$.
	
	Term equality can be shown using the rules  \ruleRef{congAppl'}, \ruleRef{congLam'}, \ruleRef{etaPi}, \ruleRef{refl}, \ruleRef{sym} and \ruleRef{beta} in \dphol{}.
	
	\subparagraph{\ruleRef{congAppl'}}
	\begin{align}
	\NDLineTG{t\termEquals{A}t'}{\byAss}\label{congApplp1}\\
	\NDLineTG{f\termEquals{\piType{x}{A}B}f'}{\byAss}\label{congApplp2}\\
	\NDLinePTG{\termEqT{A}{\PhiAppl{t}}{\PhiAppl{t'}}}{\IH,(\ref{congApplp1})}\label{congApplp3}\\
	\NDLinePTG{\univQuant{x}{\PhiAppl{A}}\univQuant{y}{\PhiAppl{A}}\termEqT{A}{x}{y}\impl\nonumber\\& \termEqT{(\piType{z}{A}B)}{\PhiAppl{f}\ x}{\PhiAppl{f'}\ x}}{\IH,(\ref{congApplp2})}\label{congApplp4}\\
	\NDLinePTG{\PhiAppl{t}:\PhiAppl{A}}{\IH,(\ref{congApplp1})}\label{congApplp5}\\
	\NDLinePTG{\PhiAppl{t'}:\PhiAppl{A}}{\IH,(\ref{congApplp1})}\label{congApplp6}\\
	\NDLinePTG{\termEqT{A}{\PhiAppl{t}}{\PhiAppl{t'}}\impl \termEqT{(\piType{z}{A}B)}{\PhiAppl{f}\ \PhiAppl{t}}{\PhiAppl{f'}\ \PhiAppl{t'}}}{\ruleRef{forallE},\ruleRef{forallE},(\ref{congApplp4}),(\ref{congApplp5}),(\ref{congApplp6})}\label{congApplp7}\\
	\NDLinePTG{\PhiAppl{f}:\PhiAppl{A}\to\PhiAppl{B}}{\IH,(\ref{congApplp2})}\label{congApplp8}\\
	\NDLinePTG{\PhiAppl{f'}:\PhiAppl{A}\to\PhiAppl{B}}{\IH,(\ref{congApplp2})}\label{congApplp9}\\
	\NDLinePTG{\termEqT{(\piType{z}{A}B)}{\PhiAppl{f}\ \PhiAppl{t}}{\PhiAppl{f'}\ \PhiAppl{t'}}}{\ruleRef{implE},(\ref{congApplp7}),(\ref{congApplp3})}\nonumber\\
	\NDLinePTG{\PhiAppl{f}\ \PhiAppl{t}:\PhiAppl{B}}{\ruleRef{appl},(\ref{congApplp8}),(\ref{congApplp5})}\nonumber\\
	\NDLinePTG{\PhiAppl{f'}\ \PhiAppl{t'}:\PhiAppl{B}}{\ruleRef{appl},(\ref{congApplp8}),(\ref{congApplp5})}\nonumber
	\end{align}
	
	\subparagraph{\ruleRef{congLam'}}
	This case will use (\ref{correct:substRelatTerms}). 
	
	{\small
	\begin{align}
	\NDLineTG{A\typeEquals A'}{\byAss}\label{congLamp1}\\
	\NDLineT{\concatCtx{\Gamma}{x:A}}{t\termEquals{B}t'}{\byAss}\label{congLamp2}\\
	\NDLinePTG{\PhiAppl{A}\typeEquals\PhiAppl{A'}}{\IH,(\ref{congLamp1})}\label{congLamp3}\\	
	\NDLinePT{\concatCtx{\concatCtx{\PhiAppl{\Gamma}}{x:\PhiAppl{A}}}{\typingAss{A}{x}}}{\termEqT{B}{\PhiAppl{t}}{\PhiAppl{t'}}}{\IH,(\ref{congLamp2})}\label{congLamp4}\\
	\NDLinePT{\concatCtx{\concatCtx{\PhiAppl{\Gamma}}{z:\PhiAppl{A}}}{\typingAss{A}{z}}}{\termEqT{B}{\subst{\PhiAppl{t}}{x}{z}}{\subst{\PhiAppl{t'}}{x}{z}}}{$\alpha$-renaming,(\ref{congLamp4})}\label{congLamp5}\\
	\NDLinePT{\concatCtx{\PhiAppl{\Gamma}}{z:\PhiAppl{A}}}{\PredPhi{A}{z}\impl\nonumber\\&\termEqT{B}{\subst{\PhiAppl{t}}{x}{z}}{\subst{\PhiAppl{t'}}{x}{z}}}{\ruleRef{implI},(\ref{congLamp5})}\label{congLamp5a}\\
	\NDLinePTG{\univQuant{z}{\PhiAppl{A}} \PredPhi{A}{z}\impl\nonumber\\&\termEqT{B}{\subst{\PhiAppl{t}}{x}{z}}{\subst{\PhiAppl{t'}}{x}{z}}}{\ruleRef{forallI},(\ref{congLamp5a})}\label{congLamp5b}\\
	\NDLinePT{\concatCtx{\PhiAppl{\Gamma}}{x,y:\PhiAppl{A}}}{\univQuant{z}{\PhiAppl{A}} \PredPhi{A}{z}\impl\nonumber\\&\termEqT{B}{\subst{\PhiAppl{t}}{x}{z}}{\subst{\PhiAppl{t'}}{x}{z}}}{\ruleRef{varDed},\ruleRef{varDed},(\ref{congLamp5b})}\label{congLamp5c}\\
	\NDLinePT{\concatCtx{\PhiAppl{\Gamma}}{x,y:\PhiAppl{A}}}{\PredPhi{A}{x}\impl\termEqT{B}{\PhiAppl{t}}{\PhiAppl{t'}}}{\ruleRef{varDed},\ruleRef{varDed},(\ref{congLamp5c})}\label{congLamp5d}\\
	\NDLinePT{\concatCtx{\concatCtx{\PhiAppl{\Gamma}}{x,y:\PhiAppl{A}}}{\namedass{xRy}{\termEqT{A}{x}{y}}}}{\PredPhi{A}{x}\impl\termEqT{B}{\PhiAppl{t}}{\PhiAppl{t'}}}{\ruleRef{assDed},(\ref{congLamp5d})}\label{congLamp5e}\\
	\NDLinePT{\concatCtx{\concatCtx{\PhiAppl{\Gamma}}{x,y:\PhiAppl{A}}}{\namedass{xRy}{\termEqT{A}{x}{y}}}}{\PredPhi{A}{x}}{(\ref{correct:substRelatTerms}),\ruleRef{assume},\ruleRef{assume}}\label{congLamp6}\\
	\NDLinePT{\concatCtx{\concatCtx{\PhiAppl{\Gamma}}{x,y:\PhiAppl{A}}}{\namedass{xRy}{\termEqT{A}{x}{y}}}}{\termEqT{B}{\PhiAppl{t}}{\PhiAppl{t'}}}{\ruleRef{implE},(\ref{congLamp5e}),(\ref{congLamp6})}\label{congLamp7}\\
	\NDLinePT{\concatCtx{\concatCtx{\PhiAppl{\Gamma}}{x,y:\PhiAppl{A}}}{\namedass{xRy}{\termEqT{A}{x}{y}}}}{\termEqT{B}{\PhiAppl{t}}{\subst{\PhiAppl{t'}}{x}{y}}}{(\ref{correct:substRelatTerms}),(\ref{congLamp7}),\ruleRef{assume}}\label{congLamp8}\\
	\NDLinePT{\concatCtx{\PhiAppl{\Gamma}}{x,y:\PhiAppl{A}}}{\termEqT{A}{x}{y}\impl \termEqT{B}{\PhiAppl{t}}{\subst{\PhiAppl{t'}}{x}{y}}}{\ruleRef{implI},(\ref{congLamp8})}\label{congLamp9}\\
	\NDLinePTG{\univQuant{x}{\PhiAppl{A}}\univQuant{y}{\PhiAppl{A}}\termEqT{A}{x}{y}\nonumber\\&\impl \termEqT{B}{\PhiAppl{t}}{\subst{\PhiAppl{t'}}{x}{y}}}{\ruleRef{forallI},\ruleRef{forallI},(\ref{congLamp9})}\label{congLamp10}\\
	\NDLinePT{\concatCtx{\concatCtx{\PhiAppl{\Gamma}}{x:\PhiAppl{A}}}{\typingAss{A}{x}}}{\PhiAppl{t}:\PhiAppl{B}}{\IH,(\ref{congApplp2})}\label{congApplp11}\\
	\NDLinePT{\concatCtx{\concatCtx{\PhiAppl{\Gamma}}{x:\PhiAppl{A}}}{\typingAss{A}{x}}}{\PhiAppl{t'}:\PhiAppl{B}}{\IH,(\ref{congApplp2})}\label{congApplp12}\\
	\intertext{Since in \hol{} typing is independent of context assumptions:}
	\NDLinePT{\concatCtx{\PhiAppl{\Gamma}}{x:\PhiAppl{A}}}{\PhiAppl{t}:\PhiAppl{B}}{explanation,(\ref{congApplp11})}\label{congApplp13}\\
	\NDLinePT{\concatCtx{\PhiAppl{\Gamma}}{x:\PhiAppl{A}}}{\PhiAppl{t'}:\PhiAppl{B}}{explanation,(\ref{congApplp12})}\label{congApplp14}\\
	\NDLinePTG{\lambdaFun{x}{\PhiAppl{A}}\PhiAppl{t}:\PhiAppl{A}\to\PhiAppl{B}}{\ruleRef{lambda},(\ref{congApplp13})}\label{congApplp15}\\
	\NDLinePTG{\lambdaFun{x}{\PhiAppl{A}}\PhiAppl{t'}:\PhiAppl{A}\to\PhiAppl{B}}{\ruleRef{lambda},(\ref{congApplp14})}\label{congApplp16}\\
	\NDLinePTG{\PhiAppl{\lambdaFun{x}{A}t\termEquals{\piType{x}{A}B}\lambdaFun{x}{A'}t'}\QNegSp}{\Quad(\ref{PTEq}),(\ref{congLamp10})}\nonumber\\
	\NDLinePTG{\PhiAppl{\lambdaFun{x}{A}t}:\PhiAppl{\piType{x}{A}B}}{\Quad(\ref{PTPitype}),(\ref{PTLam}),(\ref{congApplp15})}\nonumber\\
	\NDLinePTG{\PhiAppl{\lambdaFun{x}{A}t'}:\PhiAppl{\piType{x}{A}B}}{\Quad(\ref{PTPitype}),(\ref{PTLam}),(\ref{congApplp16})}\nonumber
	\end{align}
	}
	
	\subparagraph{\ruleRef{etaPi}}
	\begin{align}
	\NDLineTG{t:\piType{x}{A}B}{\byAss}\label{etaPi1}\\
	\NDLinePTG{\PhiAppl{t}:\PhiAppl{A}\to\PhiAppl{B}}{\ref{PTPitype},\IH,(\ref{etaPi1})}\label{etaPi2}\\
	\NDLinePTG{\PhiAppl{t}\termEquals{\PhiAppl{A}\to\PhiAppl{B}}\lambdaFun{x}{\PhiAppl{A}}\PhiAppl{t}\ x}{\ruleRef{eta},(\ref{etaPi2})}\label{etaPiConc}\\
	\NDLinePTG{\PhiAppl{t}\termEquals{\PhiAppl{\piType{x}{A}B}}\PhiAppl{\lambdaFun{x}{A}t\ x}}{\ref{PTLam},\ref{PTPitype},(\ref{etaPiConc})}\nonumber\\
	\NDLinePTG{\PredPhi{\left(\piType{x}{A}B\right)}{\PhiAppl{t}}}{\IH,(\ref{etaPi2})}\nonumber
	\end{align}
	
	\subparagraph{\ruleRef{refl}}
	\begin{align}
	\NDLineTG{t:A}{\byAss}\label{refl1}\\
	\NDLinePTG{\PhiAppl{t}:\PhiAppl{A}}{\IH,(\ref{refl1})}\label{refl2}\\
	\NDLinePTG{\PhiAppl{t}\termEquals{\PhiAppl{A}}\PhiAppl{t}}{\ruleRef{refl},(\ref{refl2})}\nonumber\\
	\NDLinePTG{\PredPhi{A}{\PhiAppl{t}}}{\IH,(\ref{refl1})}\nonumber
	\end{align}
	
	\subparagraph{\ruleRef{sym}}
	\begin{align}
	\NDLineTG{s\termEquals{A}t}{\byAss}\label{sym1}\\
	\NDLinePTG{\termEqT{A}{\PhiAppl{s}}{\PhiAppl{t}}}{\IH,(\ref{sym1})}\label{sym2}\\
	\NDLinePTG{\PhiAppl{t}:\PhiAppl{A}}{\IH,(\ref{sym1})}\label{sym3}\\
	\NDLinePTG{\PhiAppl{s}:\PhiAppl{A}}{\IH,(\ref{sym1})}\label{sym4}\\
	\NDLinePTG{\termEqT{A}{\PhiAppl{t}}{\PhiAppl{s}}}{\ruleRef{forallE},\ruleRef{forallE},\ruleRef{implE},(\ref{correct:relatSym}),(\ref{sym2}),(\ref{sym3}),(\ref{sym4})}\nonumber
	\end{align}
		
	\subparagraph{\ruleRef{beta}}
	\begin{align}
	\NDLineTG{(\lambdaFun{x}{A}s)\ t:B}{\byAss}\label{beta1}\\
	\NDLinePTG{(\lambdaFun{x}{\PhiAppl{A}}\PhiAppl{s})\ \PhiAppl{t}:\PhiAppl{B}}{\IH,\ref{PTLam},(\ref{beta1})}\label{beta2}\\
	\NDLinePTG{(\lambdaFun{x}{\PhiAppl{A}}\PhiAppl{s})\ \PhiAppl{t}\termEquals{\PhiAppl{B}}\subst{\PhiAppl{s}}{x}{\PhiAppl{t}}}{\ruleRef{beta},(\ref{beta2})}\label{betaPConc}\\
	\NDLinePTG{(\lambdaFun{x}{\PhiAppl{A}}\PhiAppl{s})\ \PhiAppl{t}\termEquals{\PhiAppl{B}}\PhiAppl{\subst{s}{x}{t}}}{(\ref{correct:substTerm}),(\ref{betaPConc})}\label{betaConc}\\
	\NDLinePTG{\PhiAppl{(\lambdaFun{x}{A}s)\ t}\termEquals{\PhiAppl{B}}\PhiAppl{\subst{s}{x}{t}}\QQNegSp}{\QQuad\ref{PTLam},\ref{PTappl},(\ref{betaConc})}\nonumber\\
	\NDLinePTG{\PredPhi{\left(\piType{x}{A}B\right)}{\left((\lambdaFun{x}{A}s)\ t\right)}\QQNegSp}{\QQuad\IH,(\ref{beta1})}\nonumber
	\end{align}
	
	\paragraph{Validity}
	Validity can be shown using the rules \ruleRef{psubE}, \ruleRef{axiom}, \ruleRef{assume}, \ruleRef{congDed}, \ruleRef{implI} and \ruleRef{implE}.
	
	\subparagraph{\ruleRef{psubE}}
	\begin{align}
	\NDLineTG{t:\subtype{A}{p}}{\byAss}\label{psubE1}\\
	\NDLinePTG{\PredPhi{\left(\subtype{A}{p}\right)}{\PhiAppl{t}}}{\IH,(\ref{psubE1})}\label{psubE2a}\\		
	\NDLinePTG{\PhiAppl{p}\ \PhiAppl{t}}{\ruleRef{andEr},\ruleRef{andEr},\ref{PTPSpred},(\ref{psubE2a})}\label{psubE2}\\
	\NDLinePTG{\PhiAppl{p\ t}}{\ref{PTappl},(\ref{psubE2})}\nonumber
	\end{align}
	
	\subparagraph{\ruleRef{axiom}}
	\begin{align}
	&\thyIn{\namedax{ax}{F}}{T}&&\text{\byAss}\label{axiom1}\\
	\NDLineT{}{\Ctx{\Gamma}}{\byAss}\label{axiom2}\\
	&\thyIn{\namedax{ax}{\PhiAppl{F}}}{\PhiAppl{T}}&&\text{\ref{PTax},(\ref{axiom1}}\label{axiom3}\\
	\NDLinePT{}{\Ctx{\PhiAppl{\Gamma}}}{\IH,\ref{axiom2}}\label{axiom4}\\
	\NDLinePTG{\PhiAppl{F}}{\ruleRef{axiom},(\ref{axiom3}),(\ref{axiom4})}\nonumber
	\end{align}
	
	\subparagraph{\ruleRef{assume}}
	\begin{align}
	&\ctxIn{\namedass{ass}{F}}{\Gamma}&&\text{\byAss}\label{assume1}\\
	\NDLineT{}{\Ctx{\Gamma}}{\byAss}\label{assume2}\\
	&\ctxIn{\namedass{ass}{\PhiAppl{F}}}{\PhiAppl{\Gamma}}&&\text{\ref{PTctxAss},(\ref{assume1}}\label{assume3}\\
	\NDLinePT{}{\Ctx{\PhiAppl{\Gamma}}}{\IH,\ref{assume2}}\label{assume4}\\
	\NDLinePTG{\PhiAppl{F}}{\ruleRef{assume},(\ref{assume3}),(\ref{assume4})}\nonumber
	\end{align}
	
	\subparagraph{\ruleRef{congDed}}
	\begin{align}
	\NDLineTG{F\termEqB F'}{\byAss}\label{congDed1}\\
	\NDLineTG{F'}{\byAss}\label{congDed2}\\
	\NDLinePTG{\PhiAppl{F}\termEqB\PhiAppl{F'}}{(\ref{PTTpBoolPred}),\IH,(\ref{congDed1})}\label{congDed3}\\
	\NDLinePTG{\PhiAppl{F'}}{\IH,(\ref{congDed2})}\label{congDed4}\\
	\NDLinePTG{\PhiAppl{F}}{\ruleRef{congDed},(\ref{congDed3}),(\ref{congDed4})}\nonumber
	\end{align}
	
	\subparagraph{\ruleRef{implI}}
	\begin{align}
	\NDLineTG{F:\bool}{\byAss}\label{implI1}\\
	\NDLineT{\concatCtx{\Gamma}{\namedass{ass}{F}}}{G}{\byAss}\label{implI2}\\
	\NDLinePTG{\PhiAppl{F}:\bool}{\IH,(\ref{implI1})}\label{implI3}\\
	\NDLinePT{\concatCtx{\PhiAppl{\Gamma}}{\PhiAppl{F}}}{\PhiAppl{G}}{\IH,\ref{PTctxAss},(\ref{implI2})}\label{implI4}\\
	\NDLinePTG{\PhiAppl{F}\impl\PhiAppl{G}}{\ruleRef{implI},(\ref{implI3}),(\ref{implI4})}\label{implI5}\\
	\NDLinePTG{\PhiAppl{F\impl G}}{\ref{PTImpl},(\ref{implI5})}\nonumber
	\end{align}
	
	\subparagraph{\ruleRef{implE}}
	\begin{align}
	\NDLineTG{F\impl G}{\byAss}\label{implE1}\\
	\NDLineTG{F}{\byAss}\label{implE2}\\
	\NDLinePTG{\PhiAppl{F}\impl \PhiAppl{G}}{\IH,\ref{PTImpl},(\ref{implE1})}\label{implE3}\\
	\NDLinePTG{\PhiAppl{F}}{\IH,(\ref{implE2})}\label{implE4}\\
	\NDLinePTG{\PhiAppl{G}}{\ruleRef{implE},(\ref{implE3}),(\ref{implE4})}\nonumber
	\end{align}
	\subparagraph{\ruleRef{boolExt}}
	\begin{align}
	\NDLineTG{p\ \T}{\byAss}\label{boolExt1}\\
	\NDLineTG{p\ \F}{\byAss}\label{boolExt2}\\
	\NDLinePTG{\PhiAppl{p}\ \T}{\IH,\ref{PTappl},(\ref{boolExt1})}\label{boolExt3}\\
	\NDLinePTG{\PhiAppl{p}\ \F}{\IH,\ref{PTappl},(\ref{boolExt2})}\label{boolExt4}\\
	\NDLinePTG{\univQuant{z}{\bool}\PhiAppl{p}\ z}{\ruleRef{boolExt},(\ref{boolExt3}),(\ref{boolExt4})}\label{boolExt5}\\
	\NDLinePT{\concatCtx{\PhiAppl{\Gamma}}{x:\bool}}{\univQuant{z}{\bool}\PhiAppl{p}\ z}{\ruleRef{varDed},(\ref{boolExt5})}\label{boolExt6}\\
	\NDLinePT{\concatCtx{\PhiAppl{\Gamma}}{x:\bool}}{\PhiAppl{p}\ x}{\ruleRef{forallE},(\ref{boolExt6}),\ruleRef{assume}}\label{boolExt7}\\
	\concatCtx{\concatCtx{\PhiAppl{\Gamma}}{x,y:&\bool}}{x \termEqB y}\dedT \PhiAppl{p}\ x&&\text{\ruleRef{assDed},\ruleRef{varDed},\ruleRef{boolExt7}}\label{boolExt9}\\	
	\concatCtx{\concatCtx{\PhiAppl{\Gamma}}{x,y:&\bool}}{x \termEqB y}\dedT \PhiAppl{p}\ y&&\text{\ruleRef{rewrite},\ruleRef{boolExt9},\ruleRef{assume}}\label{boolExt10}\\	
	\concatCtx{\PhiAppl{\Gamma}}{x,y:&\bool}\dedT \termEqT{\bool}{x}{y}\impl\PhiAppl{p}\ y&&\text{(\ref{PTTpBoolPred}),\ruleRef{implI},\ruleRef{boolExt10}}\label{boolExt11}\\
	\NDLinePTG{\univQuant{x}{\bool}\univQuant{y}{\bool} \nonumber\\&\termEqT{\bool}{x}{y}\impl\PhiAppl{p}\ y}{\ruleRef{forallI},\ruleRef{forallI},(\ref{boolExt11})}\label{boolExtPre}\\
	\NDLinePTG{\PhiAppl{\univQuant{x}{\bool}p\ x}}{(\ref{PTEq}),(\ref{PTPipred}),(\ref{boolExtPre})}\nonumber
	\end{align}
\end{proof}

\section{Soundness proof}\label{sec:soundness}

The idea of the soundness proof is to transform HOL-proofs into DHOL-proofs.
The proof is very involved, and we proceed in multiple steps.

\subsection{Type-wise injectivity of the translation}\label{sec:injectivity}

\begin{definition}
	Let $t$ be an ill-typed \dphol{} term with well-typed image $\PhiAppl{t}$ in \hol. In this case we will say that $\PhiAppl{t}$ is a \emph{spurious} term. 
	A term $\PhiAppl{s}$ in \hol{} that is the image of a well-typed term $s$, will be called \emph{proper}. A term $tm$ in \hol{} that is not the image of any (well-typed or not) term is said to be \emph{improper}. 
%
%
%
%
\end{definition}

\begin{lemma}\label{lem:termwise-inj}
	Let $\Delta$ be a \dphol{} context and let $\Gamma$ denote its translation.
	Given two \dphol{} terms $s, t$ of type $A$ and assuming $s$ and $t$ are not identical, it follows that $\PhiAppl{s}$ and $\PhiAppl{t}$ are not identical.
\end{lemma}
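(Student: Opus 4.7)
The plan is to prove this by structural induction on $s$ (equivalently, on the shared HOL term $\PhiAppl{s} = \PhiAppl{t}$). The key observation is that the term-level translation preserves the outermost constructor of DHOL terms: constants translate to constants, variables to variables, $\lambda$-abstractions to $\lambda$-abstractions, applications to applications, implications to implications, and equalities $s_1 \termEquals{B} s_2$ to headed applications of the predicate $\PredPhiName{B}$. Consequently, the shape of $\PhiAppl{s}$ forces $t$ to share the same top-level shape as $s$, and the induction hypothesis can then be applied to subterms once the DHOL types of those subterms are shown to line up.

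For the base cases ($s = c$ or $s = x$), the HOL shapes $c$ or $x$ can only arise from the corresponding DHOL constant or variable, so $t$ is forced to coincide with $s$. For the $\lambda$-case $s = \lambdaFun{x}{B}s'$, we have $\PhiAppl{s} = \lambdaFun{x}{\PhiAppl{B}}\PhiAppl{s'}$, hence $t = \lambdaFun{x}{B'}t'$ with $\PhiAppl{B} = \PhiAppl{B'}$ and $\PhiAppl{s'} = \PhiAppl{t'}$. Both $s, t$ having DHOL type $A$ forces $A \typeEquals \piType{x}{B}C$ and $A \typeEquals \piType{x}{B'}C'$; inversion of the congruence rule for $\Pi$-types then yields $B \typeEquals B'$ and $C \typeEquals C'$, and the induction hypothesis applied to $s', t' : C$ in the extended context gives $s' = t'$, hence $s = t$. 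The implication case is analogous. The equality case $s_1 \termEquals{B} s_2$ is handled by noting that the translated equality $\termEqT{B}{\PhiAppl{s_1}}{\PhiAppl{s_2}}$ unfolds into a head-pattern determined by $B$ (a HOL-equality if $B = \bool$, an application of $\PredPhiName{a}$ together with the arguments $\PhiAppl{t_1},\ldots,\PhiAppl{t_n}$ if $B = a\,t_1\,\ldots\,t_n$, a nested $\forall$-prefix for $\Pi$, etc.), so the annotation $B$ is recoverable up to $\typeEquals$ from the HOL shape.

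The main obstacle is the application case $s = f_s\,u_s$, $t = f_t\,u_t$: here the DHOL types of the subterms are not directly determined by $A$, because $u_s : B_s$ and $u_t : B_t$ need not obviously be $\typeEquals$-equivalent even though $\PhiAppl{B_s} = \PhiAppl{B_t}$ as HOL types (as witnessed by the non-injectivity phenomenon behind Example~\ref{ex:noninj}). The plan to work around this is to first align the function positions by a structural case analysis on $f_s$: if its head is a constant or a variable, its declared DHOL type is uniquely determined, so $f_s$ and $f_t$ share the same principal type and $B_s \typeEquals B_t$ follows; if $f_s$ is a $\lambda$-abstraction, the argument type is syntactically visible in the annotation and again shared with $f_t$ by the previous paragraph; if $f_s$ is itself an application, we unfold recursively until one of the base cases is reached. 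Once $B_s \typeEquals B_t$ is established, the induction hypothesis applies to $u_s, u_t : B_s$ and to $f_s, f_t : \piType{x}{B_s}C$ (with $A \typeEquals \subst{C}{x}{u_s}$), giving $s = t$.

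I expect the technical crux to be exactly this application case: pinning down rigorously that matching HOL-translations, together with a shared DHOL type on the enclosing application, propagate to matching DHOL types on the sub-terms. Making this rigorous will likely require an auxiliary uniqueness-of-typing result for DHOL (so that ``the'' principal DHOL type of a term is well-defined up to $\typeEquals$), together with a careful head-normalization argument that walks the function position down until a constant, a variable, or a $\lambda$-binder exposes the argument type syntactically.
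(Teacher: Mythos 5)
Your proposal is the contrapositive of the paper's argument (you assume $\PhiAppl{s}=\PhiAppl{t}$ and recover $s=t$, where the paper propagates a difference $s\neq t$ into the images), and most of the case analysis lines up. The genuine gap is in your induction measure. You induct structurally on $s$; the paper's \emph{primary} induction is on the shape of the type that an equality is taken over, with term structure only as a subinduction. This matters precisely for two equalities over $\Pi$-types: $\termEqT{(\piType{x}{C}D)}{\PhiAppl{s_1}}{\PhiAppl{s_2}}$ unfolds to $\univQuant{x,y}{\PhiAppl{C}}\termEqT{C}{x}{y}\impl\termEqT{D}{(\PhiAppl{s_1}\ x)}{(\PhiAppl{s_2}\ y)}$, whose inner equality is the translation of $s_1\ x\termEquals{D}s_2\ y$ --- a term \emph{larger} than $s$, not a subterm of it. Recovering $D$ from the HOL shape therefore cannot proceed by structural induction on $s$; it needs the descent from $\piType{x}{C}D$ to $D$ on the type. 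Relatedly, ``recoverable up to $\typeEquals$'' is not enough: the conclusion $s=t$ requires the type annotation to agree \emph{syntactically}, and the paper obtains this for base types $a\ t_1\ldots t_n$ only because the arguments $\PhiAppl{t_i}$ occur literally in $\PredPhiName{a}\ \PhiAppl{t_1}\ldots\PhiAppl{t_n}\ s\ t$, so the term-level induction hypothesis applies to them.

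On the application case: you are right that aligning the DHOL types of the subterms is the crux, and the paper is in fact silent about it (it simply asserts the induction hypothesis applies to the differing argument pair). Your spine analysis is the right instinct for constant- and variable-headed applications, where the declared $\Pi$-type pins down the argument types one after another. But your sub-case for a $\lambda$-headed function does not close: from $\PhiAppl{\lambdaFun{x}{B_s}e} = \PhiAppl{\lambdaFun{x}{B_t}e'}$ you only learn $\PhiAppl{B_s}=\PhiAppl{B_t}$, and by the very non-injectivity of the type translation (e.g.\ $\mor\ u\ u$ versus $\mor\ v\ v$) this recovers neither $B_s=B_t$ nor even $B_s\typeEquals B_t$, while the type of the enclosing application constrains only the codomain. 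So that branch of your plan does not go through as stated; you should either restrict to $\beta$-normal spines or flag it as an open obligation rather than presenting it as resolved.
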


\begin{proof}[Proof of Lemma~\ref{lem:termwise-inj}]
	We prove this by induction on the shape of the types both equalities are over\,---\,in case both terms are equalities\,---\,and by subinduction on the shape of the two translated terms otherwise.
	We observe that terms created using a different top-level production are non-identical and will remain that way in the image. 
	So we can go over the productions one by one and assuming type-wise injectivity for subterms show injectivity of applying them.
	Different constants are mapped to different constants and different variables to different variables, so in those cases there is nothing to prove. 
	If two function applications or implications differ in \dphol{} then one of the two pairs of corresponding arguments must differ as well. By induction hypothesis so will the images of the terms in that pair. Since function application and implication both commute with the translation, it follows that the images of the function applications or implications also differ. 
	Since the translations of the terms on both sides of an equality also show up in the translation, the same argument also works for two equalities over the same type. Similarly for lambda functions of same type.
	
	Consider now two equalities over different types that get identified by dependency-erasure. 
	
	In case of equalities over different base types, the typing relations that are applied in the images are different, so the images of the equalities differ.
	For equalities over different $\Pi$-types either the domain type or the codomain type must differ by rule \ruleRef{congPi}. 
	If the domain types differ then the typing assumption after the two universal quantifiers of the translated equalities will differ. 
	If the codomain types are different then the applications of the typing relations on the right of the $\impl$ of the translated equalities are the translations of the equalities yielded by applying the functions on both sides of the equalities to a freshly bound variable of the domain type.
	The translations of the equalities are only identical if those "inner  equalities" are identical. Furthermore, the inner equalities are over types that are the codomain of the type the equalities are over. The claim then follows from the induction hypothesis. 
	
	Finally it remains to consider the case of equalities $s\termEquals{\subtype{A}{p}}t$ and $s'\termEquals{\subtype{A'}{p'}}t'$ over non-identical predicate subtypes $\subtype{A}{p}$ and $\subtype{A'}{p'}$ where not both $A=A'$ and $p=p'$. 
	If $p\neq p'$, then the translations have different subterms $\PhiAppl{p}\ s$ and $\PhiAppl{p'}\ s'$ and thus differ.
	If $A\neq A'$, then the first conjuncts in the translated equalities are the translations of equalities over the types $A$ and $A'$ respectively, which by the induction hypothesis have different translations.
	So in any case, the equalities have different images.
\end{proof}

\subsection{Quasi-preimages for terms and validity statements in admissible \hol{} derivations}\label{sec:quasi-preimages}
Firstly, we will consider the preimage of a typing relations $\PredPhiName{A}$ to be the equality symbol $\lambdaFun{x}{A}\lambdaFun{y}{A} x\termEquals{A}y$ (if equality is treated as a (parametric) binary predicate rather than a production of the grammar this eta reduces to the symbol $\termEquals{A}$). 

Using this convention, we define the normalization of an improper \hol{} term, which is either a proper term or a spurious term.
The normalization of an improper \hol{} term is defined by:
\begin{definition}\label{defn:normalization}
	Let $t$ be an improper \hol{} term. Then we define the normalization $\norm{t}$ of $t$ by induction on the shape of $t$:
	
	\begin{align*}
	\norm{\PhiAppl{t}}&:=t\plabel{PTnormProper}\\
	\norm{\norm{s}}&:=\norm{s}\plabel{PTnormNorm}\\	
	\norm{\PredPhiName{A}\ s}&:=\lambdaFun{y}{\PhiAppl{A}} \termEqT{A}{s}{y}\plabel{normRelAppl}\\
	\norm{\PredPhiName{A}}&:=\lambdaFun{x}{\PhiAppl{A}}\lambdaFun{y}{\PhiAppl{A}} \termEqT{A}{x}{y}\plabel{normRel}\\
	\norm{c}&:=c\plabel{normConst}\\
	\norm{x}&:=x\plabel{normVar}\\
	\norm{f\ t}&:=\norm{f}\ \norm{t}\plabel{normAppl}\\
	\norm{\lambdaFun{x}{C}t}&:=\lambdaFun{x}{C}\norm{t}\plabel{normLam}
	\intertext{If $F$ not of shape $\PredPhiName{A} \_\ \_\impl \_$ or $\univQuant{x'}{\PhiAppl{A}}\termEqT{A}{x}{x'}\impl\_$:}
	\norm{\univQuant{x}{\PhiAppl{A}}F}&:=\norm{\univQuant{x}{\PhiAppl{A}}\PredPhi{A}{x}F}\plabel{normUniv}\\
	\norm{\univQuant{x}{\PhiAppl{A}}\PredPhi{A}{x}\impl G}&:=\univQuant{x,x'}{\PhiAppl{A}}\termEqT{A}{x}{x'}\impl G\plabel{normAltTransUniv}\\
	\norm{s\termEquals{\PhiAppl{A}}t}&:=\termEqT{A}{s}{t} \plabel{normEq}\\
	\norm{s\impl t}&:=\norm{s}\impl \norm{t} \plabel{normImpl}
	\end{align*}
	For proper and spurious terms $t$, we define the normalization of $t$ be be $t$ itself.
\end{definition}

\begin{definition}	
	Assume a well-formed \dphol{} theory $T$. 
	
	We say that an \hol{} context $\Delta$ is \emph{proper} (relative to $\PhiAppl{T}$), iff there exists a well-formed \hol{} context  $\Theta$ (relative to $\PhiAppl{T}$), s.t. there is a well-formed \dphol{} context $\Gamma$ (relative to $T$) with $\PhiAppl{\Gamma}=\Theta$ and $\Theta$ can be obtained from $\Delta$ by adding well-typed typing assumptions.
	In this case, $\Gamma$ is called a \emph{quasi-preimage} of $\Delta$.
	Inspecting the translation, it becomes clear that $\Gamma$ is uniquely determined by the choices of the preimages of the types of variables without a typing assumption in $\Delta$.
	
	Given a proper \hol{} context $\Delta$ and a well-typed \hol{} formula $\phi$ over $\Delta$, we say that $\phi$ is \emph{quasi-proper} iff $\norm{\phi}=\PhiAppl{F}$ for $\Gamma\dedT F:\bool$ 
	and $\Gamma$ is a quasi-preimage of $\Delta$.
	In that case, we call $F$ a \emph{quasi-preimage} of $\phi$.
	\par
	
	Finally, we call a validity judgement $\Delta\dedPT \phi$ in \hol{} \emph{proper} iff\begin{enumerate}
		\item $\Delta$ is proper,
		\item $\phi$ is quasi-proper in context $\Delta$
	\end{enumerate}
	In this case, we will call $\PhiAppl{\Gamma}\dedPT \PhiAppl{F}$ a relativization of $\Delta\dedPT \phi$ and $\Gamma\dedT F$ a \emph{quasi-preimage} of the statement $\Delta\dedPT \phi$, where $\Gamma$ is a quasi-preimage of $\Delta$ and $F$ a quasi-preimage of $\phi$. 
\end{definition}

\subsection{Transforming \hol{} derivations into admissible \hol{} derivations}\label{sec:prfTransform}
It will be useful to distinguish between two different kinds of improper terms.
\begin{definition}
	An improper term is called \emph{almost proper} iff its normalization isn't spurious and contains no spurious subterms, otherwise it is said to be \emph{unnormalizably spurious}.
	This means that improper terms are almost proper iff they have a well-typed quasi-preimage (i.e. if they are quasi-proper).
	We consider proper terms to be \emph{almost proper} as well.
\end{definition}

\begin{definition}\label{defn:admissibleDerivation}
	A valid \hol{} derivation is called \emph{admissible} iff all terms occuring in it are almost proper.
\end{definition}
In the following we describe a proof transformation which maps \hol{} derivations to admissible \hol{} derivations. 

\begin{definition}\label{defn:statement-transformation}
	A \emph{statement transformation} in a given logic is a map that maps statements in the logic to statements in the logic.
\end{definition}
\begin{definition}\label{defn:macro-step}
	A \emph{macro-step} $M$ \emph{for} a statement transformation \emph{$T$} \emph{replacing} a step \emph{$S$} in a derivation is a sequence of steps $S_1,\ldots, S_n$ (called \emph{micro-steps} of $M$) s.t. the assumptions of the $S_i$ that are not concluded by $S_j$ with $j<i$ are results of applying $T$ to assumptions of step $S$ and furthermore the conclusion of step $S_n$ is the result of applying $T$ to the conclusion of $S$. The assumptions of those $S_j$ that are not concluded by previous micro-steps of $M$ are called the \emph{assumptions of macro-step} $M$ and the conclusion of the last micro-step $S_n$ of $M$ is called the \emph{conclusion of macro-step} $M$. 
\end{definition}
\begin{definition}\label{defn:normalizingPrfTransform}
	A \emph{normalizing statement transformation} $\sRed{\cdot}$ is defined to be a transformation that replaces terms in statements (including their contexts) as described below. 
	The definition of the transformation of a term additionally depends on a \dphol-type $A$ (called the \emph{preimage type}) for each term $t$.
	We will write those types as indices to the \hol{} terms, so for instance $t_A$ would indicate an \hol{} term $t$ of type $\PhiAppl{A}$ and preimage type $A$. 
	
	These indices are used to effectively associate to each term a type of a possible quasi-preimage, which is useful as for $\lambda$-functions there are quasi-preimages of potentially many different types.
	We require that for an indexed term $t_A$, term $t$ has type $\PhiAppl{A}$ and that for almost proper terms $t_A$ with unique quasi-preimage the quasi-preimage has type $A$. 
	
	Intuitively the transformation will do two things (in this order) in order to "normalize" unnormalizably spurious terms to almost proper ones:
	\begin{enumerate}
		\item apply beta and eta reductions and in case this doesn't yield almost proper terms
		\item replace unnormalizably spurious function applications of type $B$ by the "default terms" $\defaultTerm{B}$ of type $B$ which is proper and whose existence is assumed for all \hol{} types.
	\end{enumerate}
	Additionally we will choose the index types to ensure we yield almost proper terms of same index on both sides of equalities.
	
	As we are assuming a valid derivation, we will only define this transformation on well-typed \hol{} terms.
	We can then define the transformation of $t_A$ (denoted by $\sRed{t_A}$) by induction on the shape of $t_A$ as follows:
	{\begin{align}
		&\sRed{t_A}&:=&t_A&&\text{if $t$ has quasi-preimage of type $A$}\sredlabel{sredP}\\
		&\sRed{f_{\piType{x}{A}B}\ t_{A}}\negSp\!\!&:=&\sRed{\sRed{f_{\piType{x}{A}B}}\ \sRed{t_{A}}}\QQQQNegSp&&\nonumber\\& &&
		&&\text{if $f_{\piType{x}{A}B}\ t_{A}$ not beta or eta reducible}\sredlabel{sredApplP}
		\end{align}}
	In the following cases, we assume that the term $t_A$ in $\sRed{\cdot}$ on the left of $:=$ isn't almost proper with a quasi-preimage of type $A$: 
	{\small
	\begin{align}
	&\sRed{t_A}&:=&\sRed{\betaEtaRed{t_A}} \quad\negSp\text{if $t$ is beta or eta reducible}\sredlabel{sredBetaEtaRed}\\
	&\sRed{s_A\termEquals{\PhiAppl{A}} t_{A'}}&:=&\sRed{s_A}\termEquals{\PhiAppl{A}}\sRed{t_{A}}\sredlabel{sredEq}\\
	&\sRed{F_\bool\impl G_\bool}&:=&\sRed{F_\bool}\impl \sRed{G_\bool}\QQNegSp\QQNegSp\QQNegSp\sredlabel{sredImpl}\\
	&\sRed{\lambdaFun{x}{A}s_B}&:=&\lambdaFun{x}{A}\sRed{s_B}\sredlabel{sredLam}\\
	&\sRed{\left(\sRed{f_{\piType{x}{A}B}}_{\piType{x}{A}B}\ \sRed{t_{A'}}_{A'}\right)_{B'}}&:=&\defaultTerm{\PhiAppl{B}} \qquad\qquad\negSp\text{if $A\neq A'$ or $B\neq B'$} \sredlabel{sredAppl}
	\end{align}}
\end{definition}

\begin{lemma}\label{lem:normalizingPrfTransform}
	Assume a well-typed \dphol{} theory $T$ and a conjecture $\Gamma\dedT \phi$ with $\Gamma$ well-formed and $\phi$ well-typed. Assume a valid \hol{} derivation of $\PhiAppl{\Gamma}\dedPT \PhiAppl{\phi}$. 
	Then, we can index the terms in the derivation s.t.
	any steps $S$ in the derivation can be replaced by a macro-step for the normalizing statement transformation replacing step $S$ s.t. after replacing all steps by their macro-steps:
	\begin{itemize}
		\item the resulting derivation is valid,
		\item all terms occuring in the derivation are almost proper.
	\end{itemize}
\end{lemma}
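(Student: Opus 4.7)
The plan is to proceed by induction on the structure of the given HOL derivation, simultaneously constructing an indexing of all terms and the replacement macro-steps. The indexing is chosen top-down (from the root of the derivation tree): since the final conclusion $\PhiAppl{\Gamma}\dedPT\PhiAppl{\phi}$ is the translation of a well-typed DHOL formula, all terms occurring in it are proper and their preimage types are uniquely determined by the DHOL source. We propagate indices upward through the derivation: for each rule application whose conclusion has already been indexed, we assign preimage types to the terms appearing only in the assumptions based on the typing data of the rule (e.g.\ the domain type in \ruleRef{appl}, the variable's declared type in \ruleRef{lambda}, or the equality type in \ruleRef{congAppl}). Where the rule leaves genuine freedom (notably the argument type of a function application whose result is itself spurious), we pick an arbitrary well-formed DHOL type, e.g.\ $\bool$; this is harmless because such subterms will ultimately be collapsed to a default term.

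Given this indexing, the macro-step for a rule $R$ with conclusion $C$ and assumptions $H_1,\ldots,H_k$ is built by case analysis on $R$. The easy cases are those rules that act locally on proper structure, namely \ruleRef{axiom}, \ruleRef{assume}, \ruleRef{ctxEmpty}, \ruleRef{ctxVar}, \ruleRef{ctxAssume}, \ruleRef{const}, \ruleRef{var}, \ruleRef{implI}, \ruleRef{implE}, \ruleRef{refl}, \ruleRef{sym}, \ruleRef{forallI}, \ruleRef{forallE}, \ruleRef{boolExt}, \ruleRef{congDed}, and \ruleRef{propExt}: here $\sRed{\cdot}$ commutes with the rule up to reindexing, so the macro-step is a single application of the same rule to the $\sRed{H_i}$. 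Rule \ruleRef{beta} and \ruleRef{eta} are absorbed by clause \ref{sredBetaEtaRed} of the transformation; rule \ruleRef{lambda} is handled by clause \ref{sredLam}; and the equality rules fall under clause \ref{sredEq}, where the shared index type on both sides ensures the two sides of the translated equation are normalized in the same way.

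The non-trivial cases are those involving function application, i.e.\ \ruleRef{appl} and \ruleRef{congAppl}. For \ruleRef{appl}, the typed argument $\Gamma\dedPT t:A$ gives us the correct preimage type $A$ for $t$, while the domain of $f$'s $\Pi$-type (from the previous assumption) gives the expected type $A'$. If $A=A'$ and $f\,t$ is proper, there is nothing to do. If $f\,t$ is spurious but beta- or eta-reducible into an almost proper term, clause \ref{sredBetaEtaRed} fires and the macro-step is a single \ruleRef{appl} followed by the corresponding \ruleRef{beta}/\ruleRef{eta} steps (which are themselves admissible because their targets are almost proper by construction). Otherwise clause \ref{sredAppl} replaces $f\,t$ with $\defaultTerm{\PhiAppl{B}}$; here the macro-step invokes the \ruleRef{nonempty}/\ruleRef{const} rules to introduce the default term and then uses the congruence \ruleRef{congAppl} machinery at index $B$ to align the result on both sides. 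For \ruleRef{congAppl} itself, the symmetry of the index choice on the two sides of the equality guarantees that either both applications survive intact, both are reduced simultaneously by \ref{sredBetaEtaRed}, or both collapse to the same $\defaultTerm{\PhiAppl{B}}$; in the last case the macro-step ends with a \ruleRef{refl} step on that default term.

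The main obstacle is precisely this last case: showing that whenever clause \ref{sredAppl} fires to collapse a spurious application to $\defaultTerm{\PhiAppl{B}}$, \emph{both} occurrences that need to match in a downstream congruence or equality step collapse to the same default term with compatible indices. This is where the upward propagation of indices pays off: since the conclusion $\PhiAppl{\phi}$ is proper and each rule either preserves preimage types exactly or gives us free choice (which we resolved uniformly), any two occurrences of a spurious term that must remain syntactically equal in the conclusion inherit the same index type, and hence $\sRed{\cdot}$ maps them to the same $\defaultTerm{\PhiAppl{B}}$. Admissibility of the resulting derivation then follows: by construction every intermediate term is either a translation of a DHOL term, a beta/eta-normal form of such, or a default term, and all of these are almost proper. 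A straightforward subinduction on the grammar verifies that $\sRed{\cdot}$ is idempotent and commutes with substitution on almost proper terms, which is needed to justify the macro-steps replacing \ruleRef{forallE} and the congruence rules.
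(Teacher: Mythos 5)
Your proposal is correct and follows essentially the same route as the paper's proof: induction over the inference rules, indices propagated upward from the proper final conclusion (with identical indices for repeated terms and for both sides of equalities, and arbitrary choices where the rule leaves freedom), the same split into easy commuting cases versus the application/congruence cases, and the same resolution of the hard case by collapsing unnormalizably spurious applications to default terms whose coincidence across matching occurrences is guaranteed by the shared indexing. The only cosmetic differences are that you phrase some steps in terms of derived rules such as \ruleRef{forallE} and \ruleRef{nonempty} where the paper works with the primitive rules and the explicit default-term axiom schema.
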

\begin{proof}[Proof of Lemma~\ref{lem:normalizingPrfTransform}]
	We will show this by induction on the inference rules.
	
	Firstly, we observe that there are no dependent types in \hol{} and the context and axioms contain no spurious subterms. Hence, well-formedness (of theories, contexts, types) and type-equality judgements are unaffected by the transformation. 
	So there is nothing to prove for the well-formedness and type-equality rules.
	
	\paragraph{Regarding the indices for the terms:}  
	We start indexing the term at the end of the derivation and go up line by line. 
	Whenever we need to pick an index for a term we already encountered (in a later step) in the derivation we will pick the same index. 
	Whenever we need to pick an index for a constant or variable that is the translation of a variable in the context of the \dphol{} conjecture, we pick as the index the type of the (unique) preimage of the constant or variable.
	Whenever we need to pick indices for terms on both sides of an equality, we pick the same index.
	Whenever we need to pick an index for a non-atomic term we pick indices for the atomic subterms in the term and then choose a type of the quasi-preimage (given the type by termwise-injectivity (see Lemma~\ref{lem:termwise-inj}) the quasi-preimage is unique) for which the quasi-preimages of the subterm have the types they are indexed with. 
	If we can choose indices in a way such that there exists a well-typed quasi-preimage of that type, we do. In particular, this means for an equality between $\lambda$-functions that we will pick the same types for both $\lambda$-functions if possible.
	We choose the indices for the normalizing statement transformation of a term indexed by $A$ to also be $A$, unless we have already picked a different index.
	If we have to pick an index for a variable that is part of a $\lambda$-function, we pick the index consistently with the domain of the index of the entire $\lambda$-function if an index is already chosen for it. In that case, we also pick the codomain of the index of the entire $\lambda$-function as the index of the body of the $\lambda$-function. 
	If no index is chosen for the $\lambda$-function, but it is applied to an argument, we pick the index of that argument for the variable bound by the $\lambda$ (this choice ensures that our transformation changes as little as necessary and is also consistent with the indexing of variables in proper terms). 
	Otherwise, we pick the index arbitrarily (but satisfying the already stated requirements). 
	
	It remains to consider the typing and validity rules and to construct macro steps for the steps in the derivation using them for the normalizing statement transformation. 
	
	Since terms indexed by a type $A$ have type $\PhiAppl{A}$ it is easy to see from Definition~\ref{defn:normalizingPrfTransform} that the normalizing statement transformation replaces terms of type $\PhiAppl{A}$ by terms of type $\PhiAppl{A}$.
	
	\paragraph{\ruleRef{const}:}
	Since constants are proper terms, there is nothing to prove.
	
	\paragraph{\ruleRef{var}:}
	Since context variables are proper terms, there is nothing to prove.
	
	\paragraph{\ruleRef{eqType}:}
	\begin{align}
	\NDLinePTD{\sRed{s}_A:\PhiAppl{A}}{\byAss}\label{eqTypeSR1}\\
	\NDLinePTD{\sRed{t}_A:\PhiAppl{A}}{\byAss}\label{eqTypeSR2}\\
	\NDLinePTD{\sRed{s}_A\termEquals{\PhiAppl{A}}\sRed{t}_A:\bool}{\ruleRef{eqType},(\ref{eqTypeSR1}),(\ref{eqTypeSR2})}\label{eqTypeSR3}\\
	\NDLinePTD{\sRed{s_A\termEquals{\PhiAppl{A}}t_A}_\bool:\bool}{\ref{sredEq},(\ref{eqTypeSR3})}\nonumber
	\end{align}
	
	\paragraph{\ruleRef{lambda}:}
	\begin{align}
	\NDLinePT{\concatCtx{\Delta}{x_A:\PhiAppl{A}}}{\sRed{t_B}_B:\PhiAppl{B}}{\byAss}\label{lambdaSR1}\\
	\NDLinePTD{\left(\lambdaFun{x_A}{\PhiAppl{A}}\sRed{t_B}_B\right):\PhiAppl{A}\to \PhiAppl{B}}{\ruleRef{lambda},(\ref{lambdaSR1})}\label{lambdaSR2}
	\intertext{If $\sRed{t}_B$ isn't an unnormalizably spurious function application $\sRed{f_{\piType{y}{A'}B}}\ x_A$ for which $x$ doesn't appear in $f$:}
	\NDLinePTD{\sRed{\lambdaFun{x_A}{\PhiAppl{A}}\sRed{t_B}_B}:\PhiAppl{A}\to \PhiAppl{B}}{\ref{sredLam},(\ref{lambdaSR2})}\nonumber
	\intertext{Else by (\ref{sredBetaEtaRed}) we have  $\sRed{\lambdaFun{x_A}{\PhiAppl{A}}\sRed{t_B}_B}=\sRed{f_{\piType{y}{A}B}}$. By the remark about the type of $\sRed{\cdot}$ it follows that $\sRed{f_{\piType{y}{A}B}}$ has type $\PhiAppl{\piType{y}{A}B}=\PhiAppl{A}\to\PhiAppl{B}$. }
	\NDLinePTD{\sRed{f_{\piType{y}{A'}B}}:\PhiAppl{A}\to \PhiAppl{B}}{see above}\nonumber
	\end{align}
	
	\paragraph{\ruleRef{appl}:}
	\begin{align}
	\NDLinePTD{\sRed{f_{\piType{x}{A}B}}:\PhiAppl{A}\to \PhiAppl{B}}{\byAss}\label{applSR1}\\
	\NDLinePTD{\sRed{t_{A'}}:\PhiAppl{A}}{\byAss}\label{applSR2}
	\intertext{Unless $\sRed{f_{\piType{x}{A}B}}\ \sRed{t_A'}$ beta reducible or not satisfying $A \typeEquals A'$:}
	\NDLinePTD{\sRed{f_{\piType{x}{A}B}\ t_{A'}}:\PhiAppl{B}}{\ref{sredP},\ruleRef{lambda},(\ref{applSR1}),(\ref{applSR2})}\nonumber
	\intertext{If $\sRed{f_{\piType{x}{A}B}}\ \sRed{t_A'}$ not beta reducible but also not satisfying $A \typeEquals A'$, then by (\ref{sredApplP}) and (\ref{sredAppl}) we have $$\sRed{f_{\piType{x}{A}B}\ t_A'}=\sRed{\sRed{f_{\piType{x}{A}B}}_{\piType{x}{A}B}\ \sRed{\sRed{t_A'}_A'}}=\defaultTerm{\PhiAppl{B}}.$$ By the axiom schema asserting the existence of $\defaultTerm{\PhiAppl{B}}$ we have $\defaultTerm{\PhiAppl{B}}:\PhiAppl{B}$:}
	\NDLinePTD{\defaultTerm{\PhiAppl{B}}:\PhiAppl{B}}{assumption}\nonumber
	\intertext{Otherwise if $\sRed{f_{\piType{x}{A}B}}\ \sRed{t_A'}$ beta reducible with $f_{\piType{x}{A}B}=\lambdaFun{x_A}{\PhiAppl{A}}\sRed{s}_B$ and not satisfying $A \typeEquals A'$, it follows that $$\sRed{f_{\piType{x}{A}B}\ t_{A'}}=\sRed{\subst{\sRed{s_B}}{x_A}{\sRed{t_{A'}}}}=\subst{\sRed{s_B}}{x_A}{\sRed{t_{A'}}}.$$ Observe that $\concatCtx{\Delta}{x:\PhiAppl{A}}\dedPT \sRed{s}:\PhiAppl{B}$ must be derivable (otherwise $f$ not well-typed) and thus:}
	\NDLinePTD{\subst{\sRed{s}_B}{x_A}{\sRed{t}_{A'}}:\PhiAppl{B}}{\ruleRef{rewriteTyping},assumption,(\ref{applSR2})}\nonumber
	\end{align}
	Observe that by induction hypothesis the derivations of $\Delta\dedPT \sRed{f_{\piType{x}{A}B}}:\PhiAppl{A}\to\PhiAppl{B}$ (and thus of $\Delta\dedPT \sRed{s}_B:\PhiAppl{B}$) and of $\Delta\dedPT \sRed{t}_A:\PhiAppl{A}$ are almost proper with quasi-preimages for terms of indexed types. Consequently, the steps in the derivation obtained by plugging in the proof of rule \ruleRef{rewriteTyping} into this case will also be almost proper with quasi-preimages of indexed types (as the terms in it for corresponding steps have the same types and indices as in the derivation for $\Delta\dedPT \sRed{s}_B:\PhiAppl{B}$ and steps in the derivation of $\Delta\dedPT \sRed{t}_A:\PhiAppl{A}$ which occur in this derivation are unchanged).
	Therefore treating rule \ruleRef{rewriteTyping} like a primitive rule is harmless here (we can replace a step using the rule by the steps in the derivation of the rule) and we are done.
	
	\paragraph{\ruleRef{implType}:}
	\begin{align}
	\NDLinePTD{\sRed{F}_\bool:\bool}{\byAss}\label{implTypeSR1}\\
	\NDLinePTD{\sRed{G}_\bool:\bool}{\byAss}\label{implTypeSR2}\\
	\NDLinePTD{\sRed{F}_\bool\impl \sRed{G}_\bool:\bool}{\ruleRef{implType},(\ref{implTypeSR1}),(\ref{implTypeSR2})}\label{implTypeSR3}\\
	\NDLinePTD{\sRed{F_\bool\impl F_\bool}:\bool}{\ref{sredImpl},(\ref{implTypeSR3})}\nonumber
	\end{align}
	
	\paragraph{\ruleRef{axiom}:}
	Since translations of axioms to \hol{} are always proper terms and the additionally generated axioms are almost proper, there is nothing to prove here.
	
	\paragraph{\ruleRef{assume}:}
	If the axiom is a typing axiom generated by the translation, it follows that it is almost proper. Similarly, if it is an axiom for a base type it will Otherwise:
	\begin{align}
	\namedass{ass}{\sRed{F_\bool}}&\text{ in $\Delta$}&&\text{\byAss}\label{assumeSR1}\\
	\NDLinePTD{\sRed{F_\bool}}{\ruleRef{assume},(\ref{assumeSR1})}\nonumber
	\end{align}
	By assumption $\sRed{F}$ almost proper (with a quasi-preimage of type $\bool$), so the conclusion of the rule is almost proper and there is nothing ot prove here.
	
	\paragraph{\ruleRef{congLam}:}
	\begin{align}
	\NDLinePTD{A\typeEquals A'}{\byAss}\label{congLamSR1}\\
	\NDLinePT{\concatCtx{\Delta}{x_A:\PhiAppl{A}}}{\sRed{t_B\termEquals{\PhiAppl{B}}t'_B}_\bool}{\byAss}\label{congLamSR2}\\
	\NDLinePT{\concatCtx{\Delta}{x_A:\PhiAppl{A}}}{\sRed{t}_B\termEquals{\PhiAppl{B}}\sRed{t'}_B}{\ref{sredEq},(\ref{congLamSR2})}\label{congLamSR2a}\\
	\NDLinePTD{\lambdaFun{x_A}{\PhiAppl{A}}\sRed{t}_B\termEquals{\PhiAppl{A}\to \PhiAppl{B}}\nonumber\\&\lambdaFun{x_A}{\PhiAppl{A}}\sRed{t'}_B}{\ruleRef{congLam},(\ref{congLamSR1}),(\ref{congLamSR2a})}\label{congLamSR3}
	\end{align}	
	By assumption $\sRed{t}_B\termEquals{\PhiAppl{B}}\sRed{t'}_B$ almost proper with quasi-preimage consistent with type indices and $A\typeEquals A'$, thus also $\lambdaFun{x_A}{\PhiAppl{A}}\sRed{t}_B\termEquals{\PhiAppl{A}\to \PhiAppl{B}}\lambdaFun{x_A}{\PhiAppl{A}}\sRed{t'}_B$ almost proper with quasi-preimage consistent with type indices. 
	\begin{align}
	\NDLinePTD{\sRed{\lambdaFun{x_A}{\PhiAppl{A}}\sRed{t}_B\termEquals{\PhiAppl{A}\to \PhiAppl{B}}\lambdaFun{x_A}{\PhiAppl{A}}\sRed{t'}_B}\QQNegSp}{\qquad\ref{sredLam},\ref{sredEq},(\ref{congLamSR3})}\nonumber
	\end{align} 
	
	\paragraph{\ruleRef{congAppl}:}
	\begin{align}
	\NDLinePTD{\sRed{t_{A}\termEquals{\PhiAppl{A}}t'_{A}}}{\byAss}\label{congApplSR1}\\
	\NDLinePTD{\sRed{t}_{A}\termEquals{\PhiAppl{A}}\sRed{t'}_{A}}{\ref{sredEq},(\ref{congApplSR1})}\label{congApplSR1a}\\
	\NDLinePTD{\sRed{f_{\piType{x}{A'}B}\termEquals{\PhiAppl{A}\to \PhiAppl{B}}f'_{\piType{x}{A'}B}}}{\byAss}\label{congApplSR2}\\
	\NDLinePTD{\sRed{f}_{\piType{x}{A'}B}\termEquals{\PhiAppl{A}\to \PhiAppl{B}}\sRed{f'}_{\piType{x}{A'}B}}{\ref{sredEq},(\ref{congApplSR2})}\label{congApplSR2a}
	\intertext{
		Assume that $A\not\typeEquals A'$. Our choice of type indices then implies $\sRed{f}$ and $\sRed{f'}$ are not $\lambda$-functions. 
		Consequently, the applications $\sRed{f}\ \sRed{s}$ and $\sRed{f'}\ \sRed{s'}$ are not beta or eta reducible. Thus, $\sRed{\sRed{f}_{\piType{x}{A}B}\ \sRed{t}_{A'}}=\defaultTerm{\PhiAppl{B}}$ \ and \breakintertext$\mathsf{sRed}\big(\sRed{f'}_{\piType{x}{A}B} \sRed{t'}_{A'}\big)=\defaultTerm{\PhiAppl{B}}$ \ and we yield:}
	\NDLinePTD{\sRed{\sRed{f}_{\piType{x}{A'}B}\ \sRed{t}_{A}}\termEquals{\PhiAppl{B}} \sRed{\sRed{f'}_{\piType{x}{A'}B}\ \sRed{t'}_{A}}\QQQNegSp\QQQNegSp}{\qqqquad\qquad\ruleRef{refl}}\nonumber
	\end{align}
	Otherwise the applications $\sRed{f}_{\piType{x}{A}B}\ \sRed{t}_{A'}$ and $\sRed{f'}_{\piType{x}{A}B}\ \sRed{t'}_{A'}$ are almost proper with quasi-preimages consistent with type indices. It follows: \[\sRed{\sRed{f}_{\piType{x}{A}B}\ \sRed{t}_{A'}}=\sRed{f}_{\piType{x}{A}B}\ \sRed{t}_{A'}\] and \[\sRed{\sRed{f'}_{\piType{x}{A}B}\ \sRed{t'}_{A'}}=\sRed{f'}_{\piType{x}{A}B}\ \sRed{t'}_{A'}\] and thus:
	\begin{align}
	\NDLinePTD{\sRed{\sRed{f}_{\piType{x}{A}B}\ \sRed{t}_{A'}}\termEquals{\PhiAppl{B}}\nonumber\\
		&\sRed{\sRed{f'}_{\piType{x}{A}B}\ \sRed{t'}_{A'}}}{\ruleRef{congAppl},(\ref{congApplSR1a}),(\ref{congApplSR2a})}\label{congApplSR3}\\
	\NDLinePTD{\sRed{\sRed{f}_{\piType{x}{A}B}\ \sRed{t}_{A'}\termEquals{\PhiAppl{B}}\sRed{f'}_{\piType{x}{A}B}\ \sRed{t'}_{A'}}\QQQQNegSp}{\QQQQuad\ref{sredEq},(\ref{congApplSR3})}\nonumber
	\end{align}
	
	\paragraph{\ruleRef{refl}:}
	\begin{align}
	\NDLinePTD{\sRed{t}_A:\PhiAppl{A}}{\byAss}\label{reflSR1}\\
	\NDLinePTD{\sRed{t}_A\termEquals{\PhiAppl{A}}\sRed{t}_A}{\ruleRef{refl},(\ref{reflSR1})}\label{reflSR2}\\
	\NDLinePTD{\sRed{\sRed{t}_A\termEquals{\PhiAppl{A}}\sRed{t}_A}}{\ref{sredEq},(\ref{reflSR2})}\nonumber
	\end{align}
	
	\paragraph{\ruleRef{sym}:}
	\begin{align}
	\NDLinePTD{\sRed{t_A\termEquals{\PhiAppl{A}}s_A}}{\byAss}\label{symSR1}\\
	\NDLinePTD{\sRed{t_A}_A\termEquals{\PhiAppl{A}}\sRed{s_A}_A}{\ref{sredEq},(\ref{symSR1})}\label{symSR2}\\
	\NDLinePTD{\sRed{s_A}_A\termEquals{\PhiAppl{A}}\sRed{t_A}_A}{\ruleRef{sym},(\ref{symSR2})}\label{symSR3}\\
	\NDLinePTD{\sRed{s_A\termEquals{\PhiAppl{A}}t_A}}{\ref{sredEq},(\ref{symSR3})}\nonumber
	\end{align}
	
	\paragraph{\ruleRef{beta}:}
	\begin{align}
	\NDLinePTD{\sRed{\left(\lambdaFun{x_A}{\PhiAppl{A}}s_B\right)\ t_A'}_{B'}:\PhiAppl{B}}{\byAss}\label{betaSR1}
	\intertext{
	By (\ref{sredLam}) and choice of type indeces, it follows that $\sRed{\left(\lambdaFun{x_A}{\PhiAppl{A}}s_B\right)}=\lambdaFun{x_A}{\PhiAppl{A}}\sRed{s_B}_B$.
	If $\left(\lambdaFun{x_A}{\PhiAppl{A}}\sRed{s_B}_B\right)\ t_{A'}$ is almost proper with quasi-preimage of type $B\typeEquals B'$, then we yield:}
	\NDLinePTD{\left(\lambdaFun{x_A}{\PhiAppl{A}}\sRed{s_B}\right)\ \sRed{t_A}:\PhiAppl{B}}{\ref{sredApplP},(\ref{betaSR1})}\label{betaSR2}\\
	\NDLinePTD{\left(\lambdaFun{x_A}{\PhiAppl{A}}\sRed{s_B}\right)\ \sRed{t_A}\termEquals{\PhiAppl{B}}\nonumber\\& \subst{\sRed{s_B}}{x_A}{\sRed{t_A}}}{\ruleRef{beta},(\ref{betaSR2})}\label{betaSR3}\\
	\NDLinePTD{\sRed{\left(\lambdaFun{x_A}{\PhiAppl{A}}\sRed{s_B}\right)\ \sRed{t_A}\termEquals{\PhiAppl{B}} \subst{\sRed{s_B}}{x_A}{\sRed{t_A}}}\QQQNegSp}{\QQQuad\ref{sredEq},\ref{sredP},(\ref{betaSR3})}\nonumber
	\end{align}
	Otherwise, $$\sRed{\left(\lambdaFun{x_A}{\PhiAppl{A}}s_B\right)\ t_A'}=\sRed{\subst{\sRed{s_B}}{x_A}{\sRed{t_A'}}}=\subst{\sRed{s_B}}{x_A}{\sRed{t_A'}}$$ and we yield:
	\begin{align}
		\NDLinePTD{\subst{\sRed{s_B}}{x_A}{\sRed{t_A}}\termEquals{\PhiAppl{B}}\subst{\sRed{s_B}}{x_A}{\sRed{t_A}}}{\ruleRef{refl}}\label{betaSR5}
		\intertext{By induction hypothesis, we yield that $\subst{\sRed{s_B}}{x_A}{\sRed{t_A}}$ is has a quasi-preimage of type $B'$. By choice of indeces it then follows that $B\typeEquals B'$. Therefore, we can conclude:}
		\NDLinePTD{\sRed{\left(\lambdaFun{x_A}{\PhiAppl{A}}\sRed{s_B}\right)\ \sRed{t_A}\termEquals{\PhiAppl{B}} \subst{\sRed{s_B}}{x_A}{\sRed{t_A}}}\QQQNegSp}{\QQQuad\ref{sredEq},(\ref{betaSR5})}\nonumber
	\end{align}
	
	\paragraph{\ruleRef{eta}:}
	{\small\begin{align}
	\NDLinePTD{\sRed{t_{\piType{x}{A}B}}:\PhiAppl{A}\to \PhiAppl{B}}{\byAss}\label{etaSR1}\\
	&\text{$x$ not in $\Delta$}&&\text{\byAss}\label{etaSR2}
	\intertext{$\lambdaFun{x_A}{\PhiAppl{A}}\sRed{t_{\piType{x}{A}B}}\ x_A$ is by choice of indeces (determined by the equality concluded in this step of the proof) almost proper with quasi-preimage of type $\piType{x}{A}B$. It follows:}
	\NDLinePTD{\sRed{t_{\piType{x}{A}B}}\termEquals{\PhiAppl{A}\to \PhiAppl{B}}\lambdaFun{x_A}{\PhiAppl{A}}\sRed{t_{\piType{x}{A}B}}\ x_A}{\ruleRef{eta},(\ref{etaSR1}),(\ref{etaSR2})}\label{etaSR3}\\
	\NDLinePTD{\sRed{\sRed{t_{\piType{x}{A}B}}\termEquals{\PhiAppl{A}\to \PhiAppl{B}}\lambdaFun{x_A}{\PhiAppl{A}}\sRed{t_{\piType{x}{A}B}}\ x_A}}{\ref{sredApplP},\ref{sredLam},\ref{sredEq},(\ref{etaSR3})}\nonumber
	\end{align}}
	
	\paragraph{\ruleRef{congDed}:}
	\begin{align}
	\NDLinePTD{\sRed{F_\bool\termEqB F'_\bool}}{\byAss}\label{congDedSR1}\\
	\NDLinePTD{\sRed{F'_\bool}}{\byAss}\label{congDedSR3}\\
	\NDLinePTD{\sRed{F_\bool}\termEqB\sRed{F'_\bool}}{\ref{sredEq},(\ref{congDedSR1})}\label{congDedSR2}\\
	\NDLinePTD{\sRed{F_\bool}}{\ruleRef{congDed},(\ref{congDedSR2}),(\ref{congDedSR3})}\nonumber
	\end{align}
	
	\paragraph{\ruleRef{implI}:}
	\begin{align}
	\NDLinePTD{\sRed{F_\bool}:\bool}{\byAss}\label{implISR1}\\
	\NDLinePT{\concatCtx{\Delta}{\namedass{ass_F}{\sRed{F_\bool}}}}{\sRed{G_\bool}}{\byAss}\label{implISR2}\\
	\NDLinePTD{\sRed{F_\bool}\impl \sRed{G_\bool}}{\ruleRef{implI},(\ref{implISR1}),(\ref{implISR2})}\label{implISR3}\\
	\NDLinePTD{\sRed{F_\bool\impl G_\bool}}{\ref{sredImpl},(\ref{implISR3})}\nonumber
	\end{align}
	
	\paragraph{\ruleRef{implE}:}
	\begin{align}
	\NDLinePTD{\sRed{F_\bool\impl G_\bool}}{\byAss}\label{implESR1}\\
	\NDLinePTD{\sRed{F_\bool}}{\byAss}\label{implESR3}\\
	\NDLinePTD{\sRed{F_\bool}\impl\sRed{G_\bool}}{\ref{sredImpl},(\ref{implESR1})}\label{implESR2}\\
	\NDLinePTD{\sRed{G_\bool}}{\ruleRef{implE},(\ref{implESR2}),(\ref{implESR3})}\nonumber
	\end{align}
	
	\paragraph{\ruleRef{boolExt}:}
	\begin{align}
	\NDLinePTD{\sRed{p_{\bool\to\bool}\ \T_\bool}}{\byAss}\label{boolExtSR1}\\
	\NDLinePTD{\sRed{p_{\bool\to\bool}\ \F_\bool}}{\byAss}\label{boolExtSR2}\\
	\NDLinePTD{\sRed{p_{\bool\to\bool}}\ \T}{(\ref{sredApplP}),(\ref{sredP}),(\ref{boolExtSR1})}\label{boolExtSR3}\\
	\NDLinePTD{\sRed{p_{\bool\to\bool}}\ \F}{(\ref{sredApplP}),(\ref{sredP}),(\ref{boolExtSR2})}\label{boolExtSR4}\\
	\NDLinePTD{\univQuant{x}{\bool}\sRed{p_{\bool\to\bool}}\ x}{\ruleRef{boolExt},(\ref{boolExtSR3}),(\ref{boolExtSR4})}\label{boolExtSR5}\\
	\NDLinePTD{\sRed{\univQuant{x}{\bool}\sRed{p_{\bool\to\bool}}\ x}}{\ref{sredEq},(\ref{sredApplP}),(\ref{sredP}),(\ref{boolExtSR5})}\nonumber
	\end{align}
\end{proof}

\subsection{Lifting admissible \hol{} derivations of validity statements to \dphol{}}\label{sec:sound}
We finally have all required results to prove the soundness of the translation from \dphol{} to \hol{}.
\begin{proof}[Proof of Theorem~\ref{thm:soundPaper}]
	As shown in Lemma~\ref{lem:normalizingPrfTransform}, we may assume that the proof of $\PhiAppl{\Gamma}\dedPT \PhiAppl{F}$ is admissible, so it only contains almost-proper terms. 
	Consequently, whenever an equality $s\termEquals{\PhiAppl{A}}t$ is derivable in \hol{} and $s', t'$ are the quasi-preimages of $s, t$ respectively, it follows that it's quasi-preimage $s'\termEquals{A}t$ is well-typed in \dphol{} and thus $s':A$ and $t':A$. 
	Without loss of generality (adding extra assumptions throughout the proof) we may assume that the context of the (final) conclusion is the translation of a \dphol{} context.
	By Lemma~\ref{lem:termwise-inj} the translation is term-wise injective. 
	
	Therefore, the translated conjecture is a proper validity statement with unique (quasi)-preimage in \dphol{}. 
	If we can lift a derivation of the translated conjecture to a valid \dphol{} derivation of its quasi-preimage, the resulting derivation is a valid derivation of the original conjecture. 
	This means, that it suffices to prove that we can lift admissible derivations of a proper validity statement $S$ in \hol{} to a derivation of a quasi-preimage of $S$.
	
	We prove this claim by induction on the validity rules of \hol{} as follows:
	
	Given a validity rule $R$ with assumptions $A_1,\ldots,A_n$, validity assumptions (assumptions that are validity statements) $V_1,\ldots,V_m$, non-judgement assumptions (meaning assumptions that something occurs in a context or theory) $N_1,\ldots,N_p$ and conclusion $C$ we will show the following:
	\begin{claim}\label{soundness-inductive-claim}
		Assuming that the $A_i$ and the $N_j$ hold.
		\begin{enumerate}
			\item Assume that the conclusion $C$ is proper with quasi-preimage $C^{-1}$. Then the contexts $C_i$ of the $V_i$ are proper 
			and the quasi-preimages of the $V_i$ are well-formed. 
			\item Assume that whenever an $V_i$ is proper its quasi-preimage (where we choose the same preimages for identical terms and types with several possible preimages) holds in \dphol{} and that the conclusion $C$ is proper with quasi-preimage $C^{-1}$. 
			Then, $C^{-1}$ holds in \dphol{}.
		\end{enumerate}
	\end{claim}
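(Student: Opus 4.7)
The plan is to invert the translation: given a HOL proof of $\PhiAppl{F}$, try to read off a DHOL derivation of $F$ by induction on the HOL rules used. The immediate obstruction is the non-injectivity exposed in Example~\ref{ex:noninj}: a HOL step might use a formula $\phi$ that is not the translation of any DHOL formula, or worse, it might be the translation of a spurious (ill-typed) DHOL term. So the first conceptual move is to restrict attention to a well-behaved fragment of HOL proofs, namely those in which every term occurring has a well-typed DHOL preimage up to a normalization that absorbs $\beta\eta$-reductions and the trivial HOL equations such as $\PredPhiName{\bool}\,x\,y \equiv (x \termEqB y)$. Call such proofs \emph{admissible}.

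First I would define carefully what it means for a HOL term/formula/context to be \emph{quasi-proper} (i.e., to have a DHOL preimage after normalization) and for a derivation to be \emph{admissible} (only quasi-proper judgments appear). Then I would prove two separate lemmas. \textbf{Lemma A (normalization):} every HOL derivation of a proper judgment can be rewritten step-by-step into an admissible one. The idea is to walk through the given derivation and replace each rule application with a short macro-step that rewrites every occurring term via a transformation $\mathsf{sRed}$ which (i) $\beta\eta$-reduces, (ii) rewrites $\PredPhiName{A}$ to the appropriate equality symbol, and (iii) as a last resort replaces any residual unnormalizably spurious subterm by a canonical default term $\defaultTerm{\PhiAppl{B}}$ of the right HOL type. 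To make this well-defined one needs to decorate each term with an intended DHOL preimage type (an index), chosen consistently (same index on both sides of an equality, domain/codomain indices threaded through $\lambda$ and application), and then one verifies the macro-step case-by-case for each HOL rule.

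\textbf{Lemma B (lifting):} for each HOL validity rule $R$, if the conclusion is a proper judgment with DHOL quasi-preimage $C^{-1}$, then (a)~the contexts of $R$'s validity hypotheses are proper and their quasi-preimages are well-formed in DHOL, and (b)~assuming the quasi-preimages of those hypotheses are DHOL-provable, $C^{-1}$ is DHOL-provable. Combining the two, given a HOL proof of $\PhiAppl{F}$ we first normalize it via Lemma~A to an admissible proof whose conclusion is the same (already proper) judgment, then use Lemma~B as the induction principle to obtain a DHOL derivation of a quasi-preimage of $\PhiAppl{F}$. By Lemma~\ref{lem:termwise-inj} the preimage is essentially unique and must coincide with $F$, giving $\Gamma \dedT F$.

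The hard part will be Lemma~B. Most rules are routine because the translation commutes with the connectives; the delicate cases are those where HOL quantification can range over inhabitants of $\PhiAppl{A}$ that do not represent any DHOL value of type $A$ (in particular the rule \ruleRef{boolExt}, the instantiation of $\forall$ from the expanded translation of DHOL quantifiers, and the typing rule \ruleRef{implType} with a dependent premise). For these, one must exploit that the translation of $\forall x{:}A.F$ is the PER-relativized version $\forall x{:}\PhiAppl{A}.\PredPhi{A}{x} \impl \PhiAppl{F}$, so the extra hypothesis $\PredPhi{A}{x}$ is exactly what recovers the DHOL typing of the witness in the quasi-preimage. A parallel subtlety is ensuring that when two HOL subterms carry different index types but the same surface form (as in the equality-of-identity-functions example), the normalization has collapsed the spurious branch to a default term before lifting is attempted. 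Once Lemma~A guarantees this, Lemma~B becomes a large but essentially mechanical induction, and the final statement of Theorem~\ref{thm:soundPaper}, together with its corollary about term equality, follows immediately.
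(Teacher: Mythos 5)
Your plan coincides with the paper's proof of this claim: the same two-part statement, proved by induction over the HOL validity rules, relying on admissibility of the derivation (so that every occurring formula is almost proper) to get part~1, and on term-wise injectivity plus the PER-relativized translation of quantifiers to recover DHOL typing information in the delicate cases such as Boolean extensionality. The only difference is that you defer the rule-by-rule verification as ``mechanical,'' whereas that case analysis (for cong$\lambda$, eta, congAppl, cong$\vdash$, beta, refl, sym, assume, axiom, $\impl$I, $\impl$E and boolExt) is where the entire content of the claim's proof actually resides in the paper; your outline of how each case would go is nonetheless the correct one.
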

	
	Consider the first part of this claim, namely that if $C$ is proper then the $V_i$ are proper. 
	Since all formulae appearing in the derivation are almost proper, this implies that the $V_i$ themselves are proper and by construction (choice of quasi-preimage) the contexts of their quasi-preimages fit together with the context of $C^{-1}$.
	
	The translation clearly implies that if an $N_j$ holds in \hol{}, the corresponding non-judgement assumption ${N_j}^{-1}$ holds in \dphol{} (e.g. if $\PhiAppl{F}$ is an axiom in $\PhiAppl{T}$, then $F$ must be an axiom in $T$). 
	
	Since the validity judgement being derived is proper, it follows from this first part of the claim that the validity assumptions of all validity rules in the derivation are proper. 
	
	By induction on the validity rules, if given an arbitrary validity rule $R$ whose assumptions hold and whose validity assumptions all satisfy a property $P$ we can show that $P$ holds on the conclusion of $R$, then all derivable validity judgments have property $P$.
	Since all the validity assumptions and conclusions of validity rules in the derivation are proper, the property of having a derivable quasi-preimage is such a property.
	By this induction principle, it suffices to prove the claim for the validity rules in \hol. 
	
	We will therefore consider the validity rules one by one. For each rule we first prove the first part of the claim. 
	Sometimes we also need that the quasi-preimages of some non-validity (typically typing) assumptions hold, so we will prove that this also follows from the conclusion being proper.
	Then the assumption of the second part, combined with the first part implies that the quasi-preimages of the $V_i$ hold in \dphol{} and it is easy to prove that also $C^{-1}$ holds in \dphol{}.
	
	Throughout this proof we will use the notation $\quasiImage{t}$ to denote that $t$ is some quasi-preimage of $\quasiImage{t}$. 
	Since the translation is surjective on type-level we will only need this notation on term-level.
	
	Validity can be shown using the rules \ruleRef{congLam}, \ruleRef{eta}, \ruleRef{congAppl}, \ruleRef{congDed}, \ruleRef{beta}, \ruleRef{refl}, \ruleRef{sym}, \ruleRef{assume}, \ruleRef{axiom}, \ruleRef{implI}, \ruleRef{implE} and \ruleRef{boolExt}.
	\subparagraph{\ruleRef{congLam}:}

	Since the conclusion is proper, it follows that the preimage
	\[
	\Gamma\dedT \lambdaFun{x}{A}t\termEquals{\piType{x}{A}B}\lambdaFun{x}{A}t'
	\] of the normalization \[\PhiAppl{\Gamma}\dedPT 
	\univQuant{x}{\PhiAppl{A}}\univQuant{y}{\PhiAppl{A}}\termEqT{A}{x}{y}\impl \termEqT{B}{\quasiImage{t}\ x}{\quasiImage{t'}\ y}\] of the conclusion is well-formed. 
	By rule \ruleRef{eqTyping} and rule \ruleRef{sym} we obtain $\Gamma\dedT\lambdaFun{x}{A}t:\piType{x}{A}B$ and $\Gamma\dedT\lambdaFun{x}{A}t':\piType{x}{A}B$ in \dphol. 
	\begin{align}
		\NDLineTG{\lambdaFun{x}{A}t:\piType{x}{A}B}{see above}\label{congApplP1.1}\\
		\NDLineTG{\lambdaFun{x}{A}t':\piType{x}{A}B}{see above}\label{congApplP1.2}\\
		\NDLineT{\concatCtx{\Gamma}{y:A}}{\left(\lambdaFun{x}{A}t\right)\ y:B}{\ruleRef{appl},\ruleRef{varDed},(\ref{congApplP1.1}),\ruleRef{assume}}\label{congApplP1.3}\\
		\NDLineT{\concatCtx{\Gamma}{y:A}}{\left(\lambdaFun{x}{A}t'\right)\ y:B}{\ruleRef{appl},\ruleRef{varDed},(\ref{congApplP1.2}),\ruleRef{assume}}\label{congApplP1.4}\\
		\NDLineT{\concatCtx{\Gamma}{y:A}}{\left(\lambdaFun{x}{A}t\right)\ y\termEquals{B}\subst{t}{x}{y}}{\ruleRef{beta},(\ref{congApplP1.3})}\label{congApplP1.5}\\
		\NDLineT{\concatCtx{\Gamma}{y:A}}{\left(\lambdaFun{x}{A}t'\right)\ y\termEquals{B}\subst{t'}{x}{y}}{\ruleRef{beta},(\ref{congApplP1.4})}\label{congApplP1.6}\\		
		\NDLineT{\concatCtx{\Gamma}{x:A}}{t:B}{$\alpha$-renaming,\ruleRef{congColon},(\ref{congApplP1.5})}\label{congApplP1.7}\\
		\NDLineT{\concatCtx{\Gamma}{x:A}}{t':B}{$\alpha$-renaming,\ruleRef{congColon},(\ref{congApplP1.6})}\label{congApplP1.8}\\
		\NDLineT{\concatCtx{\Gamma}{x:A}}{t\termEquals{B}t}{\ruleRef{eqType},(\ref{congApplP1.7}),(\ref{congApplP1.8})}\nonumber
	\end{align}
	Clearly, $\concatCtx{\Gamma}{x:A}\dedT t\termEquals{B}t'$ is a quasi-preimage of the validity assumption, so this proves the first part of the claim. 
	
	Regarding the second part:
	\begin{align}
	\NDLineT{\concatCtx{\Gamma}{x:A}}{t\termEquals{B}t'}{\byAss}\label{congLamC3}\\
	\NDLineTG{A\typeEquals A}{\ruleRef{tpEqRefl},\ruleRef{typingTp},(\ref{congLamC3})}\label{congLamC4}\\
	\NDLineTG{\lambdaFun{x}{A}t\termEquals{\piType{x}{A}B}\lambdaFun{x}{A}t'}{\ruleRef{congLam'},(\ref{congLamC4})}\label{congLamC5}
	\end{align}
	
	\subparagraph{\ruleRef{eta}:}
	Since the rule has no validity assumption, the first part of the claim holds. 
	
	For the second part, we still need the quasi-preimage of the assumption to hold, so we will show that it follows from the conclusion being proper.
	
	Since the conclusion is proper, it follows that the preimage 
	\[
	\Gamma\dedT t\termEquals{\piType{x}{A}B}\lambdaFun{x}{A}t\ x
	\] of the normalization \[\PhiAppl{\Gamma}\dedPT 
	\univQuant{x}{\PhiAppl{A}}\univQuant{y}{\PhiAppl{A}}\termEqT{A}{x}{y}\impl \termEqT{B}{\quasiImage{t}\ x}{\left(\lambdaFun{x}{\PhiAppl{A}}\quasiImage{t}\ x\right)\ y}\] of the conclusion is well-formed. 
	By rule \ruleRef{eqTyping} and rule \ruleRef{sym} we obtain $\Gamma\dedT t:\piType{x}{A}B$ and $\Gamma\dedT\lambdaFun{x}{A}t\ x:\piType{x}{A}B$ in \dphol.
	Clearly, $\Gamma\dedT t:\piType{x}{A}B$ is a quasi-preimage of the validity assumption, so this proves the quasi-preimage of the assumption of the rule. 
	
	Regarding the second part:
	\begin{align}
		\NDLineTG{t:\piType{x}{A}B}{see above}\label{etaC1}\\
		\NDLineTG{t\termEquals{\piType{x}{A}B}\lambdaFun{x}{A}t\ x}{\ruleRef{etaPi},(\ref{etaC1})}\nonumber
	\end{align}
	\subparagraph{\ruleRef{congAppl}:}
	Since the conclusion is proper, it follows that the preimage 
	\[
	\Gamma\dedT f\ t\termEquals{B}f'\ t'
	\] of the normalization \[\termEqT{B}{\quasiImage{f}\ \quasiImage{t}}{\quasiImage{f'}\ \quasiImage{t'}}\] of the conclusion is well-formed. 
	By rule \ruleRef{eqTyping} and rule \ruleRef{sym} we obtain $\Gamma\dedT f\ t:B$ and $\Gamma\dedT f'\ t':B$ in \dphol.
	Obviously, $\Gamma\dedT t\termEquals{A}t'$ and $\Gamma\dedT f\termEquals{\piType{x}{A}B}f'$ are quasi-preimages of the validity assumptions. 
	
	Since the validity assumptions use the same context as the conclusion, it follows that they are both proper with uniquely determined context. As observed in the beginning of the proof if a proper assumption of a rule is an equality over a type $\PhiAppl{A}$, the induction hypothesis implies that the quasi-preimage of that assumption in which the equality is over type $A$ must be well-formed.
	Hence both $\Gamma\dedT t\termEquals{A}t'$ and $\Gamma\dedT f\termEquals{\piType{x}{A}B}f'$ are well-formed in \dphol, so we have proven the first part of the claim.
	
	Regarding the second part of the claim:
	\begin{align}
		\NDLineTG{t\termEquals{A}t'}{\byAss}
		\label{congApplC7}\\
		\NDLineTG{f\termEquals{\piType{x}{A}B}f'}{\byAss}
		\label{congApplC8}\\
		\NDLineTG{f\ t\termEquals{B}f'\ t'}{\ruleRef{congAppl},(\ref{congApplC7}),(\ref{congApplC8})}
	\end{align}
	This is what we had to show.
	
	\subparagraph{\ruleRef{congDed}:}
	Since the conclusion is proper, it follows that the preimage 
	\[
	\Gamma\dedT F
	\] of the normalization \[\PhiAppl{\Gamma}\dedPT \quasiImage{F}\] of the conclusion is well-formed. 
	Thus we have $\Gamma\dedT F:\bool$. 
	Since the validity assumptions use the same context as the conclusion, it follows that they are both proper with uniquely determined. As observed in the beginning of the proof if a proper assumption of a rule is an equality over a type $\PhiAppl{A}$ (here $A=\PhiAppl{A}=\bool$), the induction hypothesis implies that the quasi-preimage of that assumption in which the equality is over type $\bool$ must be well-formed.
	Clearly, $\Gamma\dedT F'\termEqB F$ and $\Gamma\dedT F$ are the quasi-preimages of the two validity assumptions. 
	Since the former is a validity statement about the quasi-preimage of an equality, it follows that $\Gamma\dedT F'\termEqB F$ is well-formed.
	We have already seen that $\Gamma\dedT F$ is well-typed.
	This shows the first part of the claim.

	Regarding the second part:
	\begin{align}
		\NDLineTG{F'\termEqB F}{\byAss}\label{congDedC1}\\
		\NDLineTG{F'}{\byAss}\label{congDedC2}\\
		\NDLineTG{F}{\ruleRef{congDed},(\ref{congDedC1}),(\ref{congDedC2})}\nonumber
	\end{align}
	\subparagraph{\ruleRef{beta}:}
	Since the rule has no validity assumptions, the first part of the claim trivially holds.
	
	Since the conclusion is proper, it follows that the preimage 
	\[
	\Gamma\dedT \big(\lambdaFun{x}{A}s\big)\ t\termEquals{\piType{x}{A}B}\subst{s}{x}{t}
	\] of the normalization \[\PhiAppl{\Gamma}\dedPT 
	\termEqT{B}{\big(\lambdaFun{x}{\PhiAppl{A}}\quasiImage{s}\big)\ \quasiImage{t}}{\subst{\quasiImage{s}}{x}{\quasiImage{t}}}\] of the conclusion is well-formed. 
	By rule \ruleRef{eqTyping}, we obtain $\Gamma\dedT \big(\lambdaFun{x}{A}s\big)\ t:B$ in \dphol.
	Clearly, $\Gamma\dedT \big(\lambdaFun{x}{A}s\big)\ t:B$ is a quasi-preimage of the assumption of the rule, so we have proven that the quasi-preimage of the assumption of the rule holds in \dphol{}.
	
	Regarding the second part:
	\begin{align}
	\NDLineTG{\Gamma\dedT \big(\lambdaFun{x}{A}s\big)\ t:B}{see above}\label{betaC1}\\
	\NDLineTG{\Gamma\dedT \big(\lambdaFun{x}{A}s\big)\ t\termEquals{\piType{x}{A}B}\subst{s}{x}{t}}{\ruleRef{beta},(\ref{betaC1})}\nonumber
	\end{align}
	
	\subparagraph{\ruleRef{refl}:}
	Once again the rule has no validity assumptions, so the first part of the claim trivially holds.
	
	Since the conclusion is proper, it follows that the preimage 
	\[
	\Gamma\dedT t\termEquals{A} t'
	\] of the normalization \[\PhiAppl{\Gamma}\dedPT \termEqT{A}{\quasiImage{t}}{\quasiImage{t}}\] of the conclusion is well-formed. 
	By Lemma~\ref{lem:termwise-inj} it follows that $t$ and $t'$ are identical so the quasi-preimage is $\Gamma\dedT \termEquals{A} t$. 
	By rule \ruleRef{eqTyping}, we obtain $\Gamma\dedT t:A$ in \dphol{}, the quasi-preimage of the assumption of the rule.
	
	Regarding the second part of the claim:
	\begin{align}
		\NDLineTG{t:A}{see above}\label{reflC1}\\
		\NDLineTG{t\termEquals{A}t}{\ruleRef{refl},(\ref{reflC1})}\nonumber
	\end{align}
	
	\subparagraph{\ruleRef{sym}:}
	Since the conclusion is proper, it follows that the preimage 
	\[
	\Gamma\dedT t\termEquals{A}s
	\] of the normalization \[
	\PhiAppl{\Gamma}\dedPT \termEqT{A}{\quasiImage{t}}{\quasiImage{s}}\] of the conclusion is well-formed. 
	By the rules \ruleRef{eqTyping} and \ruleRef{sym} both $\Gamma\dedT t:A$ and $\Gamma\dedT s:A$ follow. 
	By rule \ruleRef{eqType} it follows that $\Gamma\dedT s\termEquals{A}t$ is well-formed. 
	Clearly, $\Gamma\dedT s\termEquals{A}t$ is the quasi-preimages of the validity assumption, so we have proven the first part of the claim.
	
	Regarding the second part:
	\begin{align}
		\NDLineTG{t\termEquals{A}s}{\byAss}\label{symC1}\\
		\NDLineTG{s\termEquals{A}t}{\ruleRef{sym},(\ref{symC1})}
	\end{align}
	
	\subparagraph{\ruleRef{assume}:}
	Once again, there are no validity assumption, so the first part of the claim is trivial.
	
	Since the conclusion is proper, it follows that the preimage 
	\[
	\Gamma\dedT F
	\] of the normalization \[
	\PhiAppl{\Gamma}\dedPT \quasiImage{F}\] of the conclusion is well-formed and thus $\Gamma\dedT F:\bool$.
	\begin{align}
		\NDLineTG{F:\bool}{see above}\label{assumeP1.1}\\
		\NDLineTG{\Type{\bool}}{\ruleRef{typingTp},(\ref{assumeP1.1})}\label{assumeP1.2}\\
		\NDLineT{}{\Ctx{\Gamma}}{\ruleRef{tpCtx},(\ref{assumeP1.2})}\label{assumeP1.3}		
	\end{align}
	
	The context assumption may be the translation of a context assumption in \dphol{} or a typing assumption added by the translation. 
	In the latter case, $F$ is of the form $F=\PredPhi{A}{x}$ for $\ctxIn{x:A}{\Gamma}$. 
	In that case, the second part of the claim $\Gamma\dedT F$ can be concluded as follows:
	\begin{align}
		\NDLineTG{x:A}{\ruleRef{var''},\ruleRef{subtI}}\label{assumeCT1}\\
		\NDLineTG{x\termEquals{A}t}{\ruleRef{refl},(\ref{assumeCT1})}\label{assumeCT2}\\
		\NDLineTG{F}{\byAss{} $F=\PredPhi{A}{x}$,(\ref{assumeCT2})}\nonumber
	\end{align}
	
	Otherwise:
	\begin{align}
		&\ctxIn{\namedass{ass}{F}}{\Gamma}&&\text{\byAss}\label{assumeC1}\\
		\NDLineTG{F}{\ruleRef{assume},(\ref{assumeC1}),(\ref{assumeP1.3})}\nonumber
	\end{align}
	\subparagraph{\ruleRef{axiom}:}
	Once again, there are no validity assumption, so the first part of the claim is trivial.
	
	Since the conclusion is proper, it follows that the preimage 
	\[
	\Gamma\dedT F
	\] of the normalization \[
	\PhiAppl{\Gamma}\dedPT \quasiImage{F}\] of the conclusion is well-formed and thus $\Gamma\dedT F:\bool$.
	\begin{align}
	\NDLineTG{F:\bool}{see above}\label{axiomP1.1}\\
	\NDLineTG{\Type{\bool}}{\ruleRef{typingTp},(\ref{axiomP1.1})}\label{axiomP1.2}\\
	\NDLineT{}{\Ctx{\Gamma}}{\ruleRef{tpCtx},(\ref{axiomP1.2})}\label{axiomP1.3}		
	\end{align}
	The axiom may be the translation of an axiom in $T$, a typing axiom added by the translation or an axiom added for some base type $A$.
	In the first case, the second part of the claim follows by:
	\begin{align}
	&\thyIn{\namedax{ax}{F}}{T}&&\text{\byAss}\label{axiomC1}\\
	\NDLineTG{F}{\ruleRef{axiom},(\ref{axiomC1}),(\ref{axiomP1.3})}\nonumber
	\end{align}
	If the axiom is a typing axiom then its preimage states that some constant $c$ of type $A$ satisfies $c\termEquals{A}t$ which follows by rule \ruleRef{refl}.
	
	If the axiom is the PER axiom generated for some $A$ type declared in $T$, then it's quasi-preimage states that equality on $A$ implies itself which is obviously true. 

	\subparagraph{\ruleRef{implI}:}
	Since the conclusion is proper, it follows that the preimage 
	\[
	\Gamma\dedT F\impl G
	\] of the normalization \[
	\PhiAppl{\Gamma}\dedPT \quasiImage{F}\impl \quasiImage{G}\] of the conclusion is well-formed and thus $\Gamma\dedT F\impl G:\bool$.
	\begin{align}
		\NDLineTG{F\impl G:\bool}{see above}\label{implIP1.1}\\
		\NDLineTG{F:\bool}{\ruleRef{implTypingL},(\ref{implIP1.1})}\label{implIP1.2}\\
		\NDLineTG{G:\bool}{\ruleRef{implTypingR},(\ref{implIP1.1})}\label{implIP1.3}\\
		\NDLineT{\concatCtx{\Gamma}{\namedass{ass}{F}}}{G:\bool}{\ruleRef{assDed},(\ref{implIP1.3})}\nonumber
	\end{align}
	Obviously $\concatCtx{\Gamma}{\namedass{ass}{F}}\dedT G$ is a quasi-preimage of the validity assumption of the rule, so the first part of the claim is proven.
	
	Regarding the second part:
	\begin{align}
		\NDLineT{\concatCtx{\Gamma}{\namedass{ass}{F}}}{G}{\byAss}\label{implIC1}\\
		\NDLineTG{F\impl G}{\ruleRef{implI},(\ref{implIP1.2}),(\ref{implIC1})}\nonumber
	\end{align}
	
	\subparagraph{\ruleRef{implE}:}
	Since the conclusion is proper, it follows that the preimage 
	\[
	\Gamma\dedT G
	\] of the normalization \[
	\PhiAppl{\Gamma}\dedPT \quasiImage{G}\] of the conclusion is well-formed and thus $\Gamma\dedT G:\bool$.
	
	Since the validity assumptions use the same context as the conclusion, it follows that they are both proper and uniquely determined.
	
	Since the formula $\quasiImage{F}$ (where $\PhiAppl{\Gamma}\dedPT \quasiImage{F}$ is the second validity assumption) must be almost proper, it follows that its preimage $F$ is well-typed i.e. $\Gamma\dedT F:\bool$.
	\begin{align}
		\NDLineTG{F:\bool}{$\quasiImage{F}$ almost proper}\label{implEP1.1}\\
		\NDLineTG{G:\bool}{see above}\label{implEP1.2}\\
		\NDLineTG{F\impl G:\bool}{\ruleRef{implType'},(\ref{implEP1.1}),(\ref{implEP1.2})}\nonumber
	\end{align}
	Clearly, $\Gamma\dedT F\impl G$ and $\Gamma\dedT F$ are quasi-preimages of the two validity assumptions of the rule, so we have proven the first part of the claim.
	
	Regarding the second part:
	\begin{align}
		\NDLineTG{F\impl G}{\byAss}\label{implEC1}\\
		\NDLineTG{F}{\byAss}\label{implEC2}\\
		\NDLineTG{G}{\ruleRef{implE},(\ref{implEC1}),(\ref{implEC2})}\nonumber
	\end{align}
	
	\subparagraph{\ruleRef{boolExt}:}
	Since the conclusion is proper, it follows that the preimage 
	\[
	\Gamma\dedT \univQuant{x}{\bool}p\ x
	\] of the normalization \[
	\PhiAppl{\Gamma}\dedPT \univQuant{x}{\bool}\univQuant{y}{\bool}\termEqT{\bool}{x}{y}\impl \termEqT{\bool}{\left(\lambdaFun{x}{\bool}\T\right)\ x}{\quasiImage{p}\ y}\] of the conclusion is well-formed and thus $\Gamma\dedT \univQuant{x}{\bool}p\ x:\bool$.
	Expanding the definition of $\forall$ yields:
	\begin{align}
		\NDLineTG{\lambdaFun{x}{\bool}\T\termEquals{\piType{x}{\bool}\bool}\lambdaFun{x}{\bool}p\ x:\bool\QNegSp}{\Quad see above}\label{boolExtP1.1}\\
		\NDLineTG{\lambdaFun{x}{\bool}p\ x:\piType{x}{\bool}\bool}{\ruleRef{eqTyping},\ruleRef{sym},(\ref{boolExtP1.1})}\label{boolExtP1.2}\\
		\NDLineTG{\left(\lambdaFun{x}{\bool}p\ x\right)\ \T:\bool}{\ruleRef{appl},(\ref{boolExtP1.2})}\label{boolExtP1.3}\\
		\NDLineTG{\left(\lambdaFun{x}{\bool}p\ x\right)\ \F:\bool}{\ruleRef{appl},(\ref{boolExtP1.2})}\label{boolExtP1.4}\\
		\NDLineTG{p\ \T\termEqB \left(\lambdaFun{x}{\bool}p\ x\right)\ \T}{\ruleRef{sym},\ruleRef{beta},(\ref{boolExtP1.3})}\label{boolExtP1.5}\\
		\NDLineTG{p\ \F\termEqB \left(\lambdaFun{x}{\bool}p\ x\right)\ \F}{\ruleRef{sym},\ruleRef{beta},(\ref{boolExtP1.4})}\label{boolExtP1.6}\\
		\NDLineTG{p\ \T:\bool}{\ruleRef{eqTyping},(\ref{boolExtP1.5})}\nonumber\\
		\NDLineTG{p\ \F:\bool}{\ruleRef{eqTyping},(\ref{boolExtP1.6})}\nonumber
	\end{align}
	Since $\Gamma\dedT p\ \T$ and $\Gamma\dedT p\ \F$ are clearly quasi-preimages of the two validity assumptions of the rule, we have proven the first part of the claim.
	
	Regarding the second part:
	\begin{align}
		\NDLineTG{p\ \T}{\byAss}\label{boolExtC1}\\
		\NDLineTG{p\ \F}{\byAss}\label{boolExtC2}\\
		\NDLineTG{\univQuant{x}{\bool}p\ x}{\ruleRef{boolExt},(\ref{boolExtC1}),(\ref{boolExtC2})}
	\end{align}
\end{proof}

\end{appendix}
\end{document}